\documentclass[a4paper]{article}
\addtolength{\hoffset}{-1.6cm}
\addtolength{\textwidth}{3.2cm}

%



\usepackage{enumitem}
\usepackage{calc}

\usepackage{lmodern} 

\usepackage{graphicx} 

\usepackage{amsmath, accents}
\usepackage{amssymb,tikz}
\usepackage{pict2e}
\usepackage{amsthm}
\usepackage{amscd}
\usepackage{mathrsfs}
\usepackage{todonotes}
\setlength{\marginparwidth}{2.8cm}

\usetikzlibrary{calc} 

\usepackage{yfonts}

\usepackage{marvosym}
\usepackage[percent]{overpic}

\newtheorem{theorem}{Theorem} [section]
	
\newtheorem{corollary}[theorem]{Corollary}	
\newtheorem{lemma}[theorem]{Lemma}

\newtheorem{remark}[theorem]{Remark}
\theoremstyle{definition}


\newcommand{\re}{\text{\upshape Re\,}}

\let\oldbibliography\thebibliography
\renewcommand{\thebibliography}[1]{\oldbibliography{#1}
\setlength{\itemsep}{-0.5pt}}


\def\XXint#1#2#3{{\setbox0=\hbox{$#1{#2#3}{\int}$}
\vcenter{\hbox{$#2#3$}}\kern-.5\wd0}}

\usepackage{tikz}
\usetikzlibrary{arrows}
\usetikzlibrary{decorations.pathmorphing}
\usetikzlibrary{decorations.markings}
\usetikzlibrary{patterns}
\usetikzlibrary{automata}
\usetikzlibrary{positioning}
\usepackage{tikz-cd}
\tikzset{->-/.style={decoration={
				markings,
				mark=at position #1 with {\arrow{latex}}},postaction={decorate}}}
	
	\tikzset{-<-/.style={decoration={
				markings,
				mark=at position #1 with {\arrowreversed{latex}}},postaction={decorate}}}

\usetikzlibrary{shapes.misc}\tikzset{cross/.style={cross out, draw, 
         minimum size=2*(#1-\pgflinewidth), 
         inner sep=0pt, outer sep=0pt}}

\usepackage{pgfplots}
	
\allowdisplaybreaks	

\numberwithin{equation}{section}

\def\bigO{{\cal O}}

\usepackage[colorlinks=true]{hyperref}
\hypersetup{urlcolor=blue, citecolor=red, linkcolor=blue}

\begin{document}
\title{\vspace*{-1.5cm} Asymptotics of determinants with a rotation-invariant \\ weight and discontinuities along circles}
\author{Christophe Charlier\footnote{Department of Mathematical Sciences, UCPH University of Copenhagen, Universitetsparken 5, 2100 Copenhagen, Denmark. e-mail: charlier@math.ku.dk
}}

\maketitle

\begin{abstract}
We study the moment generating function of the disk counting statistics of a two-dimensional determinantal point process which generalizes the complex Ginibre point process. This moment generating function involves an $n \times n$ determinant whose weight is supported on the whole complex plane, is rotation-invariant, and has discontinuities along circles centered at $0$. These discontinuities can be thought of as a two-dimensional analogue of jump-type Fisher-Hartwig singularities. In this paper, we obtain large $n$ asymptotics for this determinant, up to and including the term of order $n^{-\smash{\frac{1}{2}}}$. We allow for any finite number of discontinuities in the bulk, one discontinuity at the edge, and any finite number of discontinuities bounded away from the bulk. As an application, we obtain the large $n$ asymptotics of all the cumulants of the disk counting function up to and including the term of order $n^{-\smash{\frac{1}{2}}}$, both in the bulk and at the edge. This improves on the best known results for the complex Ginibre point process, and for general values of our parameters these results are completely new. Our proof makes a novel use of the uniform asymptotics of the incomplete gamma function.
\end{abstract}
\noindent
{\small{\sc AMS Subject Classification (2020)}: 41A60, 60B20, 60G55.}

\noindent
{\small{\sc Keywords}: Planar Fisher-Hartwig singularities, Moment generating functions, Random matrix theory, Asymptotic analysis.}

\section{Introduction and statement of results}\label{section: introduction}
Consider the determinant 
\begin{align*}
D_{n} & :=\frac{1}{n!} \int_{\mathbb{C}}\ldots \int_{\mathbb{C}} \prod_{1 \leq j < k \leq n} |z_{k} -z_{j}|^{2} \prod_{j=1}^{n}w(z_{j})d^{2}z_{j} = \det \left( \int_{\mathbb{C}} z^{j} \overline{z}^{k} w(z) d^{2}z \right)_{j,k=0}^{n-1},
\end{align*}
where the weight $w$ is given by
\begin{align}\label{def of w}
w(z):=|z|^{2\alpha} e^{-n |z|^{2b}} \omega(|z|), \qquad b>0, \; \alpha > -1,
\end{align}
and the function $\omega$ is defined by
\begin{align}\label{def of omega}
\omega(x) = \prod_{\ell=1}^{p} \begin{cases}
e^{u_{\ell}}, & \mbox{if } x < r_{\ell}, \\
1, & \mbox{if } x \geq r_{\ell},
\end{cases}
\end{align}
for some $p \in \mathbb{N}_{>0}:=\{1,2,\ldots\}$, $u_{1},\ldots,u_{p}\in \mathbb{R}$, and $0<r_{1}<\ldots < r_{p} < + \infty$. Thus $w$ is supported on the whole complex plane, is rotation-invariant (i.e. $w(z)=w(|z|)$), has a root-type singularity at $0$, and has $p$ discontinuities along circles centered at $0$. We are interested in the asymptotics of $D_{n}$ as $n \to + \infty$.

\medskip A large number of problems arising in statistical mechanics, integrable operators, orthogonal polynomials, random matrix theory and the theory of Gaussian multiplicative chaos  can be expressed in terms of structured determinants associated with a weight having root-type and/or jump-type singularities (the so-called Fisher-Hartwig singularities) \cite{BottcherReview, DIKreview, Webb, BK2021}. Of particular interest is the asymptotic behavior of these determinants as the size of the underlying matrix gets large. In the case where the weight is supported on a one-dimensional set, these asymptotics have already been widely studied and have a long history. Early works include \cite{Lenard} by Lenard, \cite{Widom2} by Widom, \cite{Basor} by Basor, \cite{BS1985} by B\"ottcher and Silbermann, and \cite{Ehr2001} by Ehrhardt; see also \cite{DIKreview} for more historical background. More recent results have been obtained in e.g. \cite{DIK, DeiftItsKrasovsky} for Toeplitz determinants, \cite{BDIK2015} for Fredholm determinants, \cite{Krasovsky, ItsKrasovsky, BerWebbWong, Charlier, CharlierGharakhloo} for Hankel determinants, \cite{DIK,BasorEhrhardt1} for Toeplitz+Hankel determinants and \cite{CharlierMB} for Muttalib-Borodin determinants. Much less is known when the weight is supported on a two-dimensional set. The first result in this direction is due to Webb and Wong \cite{WebbWong}, who obtained the large $n$ asymptotics of $\det ( \int_{\mathbb{C}} z^{j}\overline{z}^{k}|z-z_{0}|^{2\alpha}e^{-n|z|^{2}} d^{2}z)_{j,k=0}^{n-1}$ with $|z_{0}|<1$ fixed and $\re \alpha > -1$, i.e. they considered the case of a Gaussian weight perturbed with a planar root-type singularity located in the bulk.  Dea\~{n}o and Simm in \cite{DeanoSimm} then investigated the ``edge regime" when $n \to +\infty$ and simultaneously $|z_{0}|\to 1$ at a critical speed. The case of two merging planar root-type singularities in the bulk was also studied in \cite{DeanoSimm}, among other things.  We also mention that the related topic of planar orthogonal polynomials associated with a Gaussian weight having root-type singularities has been studied in \cite{BBLM2015, BGM17, LeeYang, BEG18, LeeYang2, LeeYang3}, see also \cite{AKS2018}. 

\medskip Only limited results are available on asymptotics of determinants with planar discontinuities. In \cite{CE2020}, second order asymptotics were obtained for Ginibre-type weights with discontinuities along smooth curves. In the rotation-invariant setting, but for more general potentials, second order asymptotics were obtained in \cite{L et al 2019}. More refined asymptotics, including the third term of order 1, were then obtained in \cite{FenzlLambert} for Ginibre-type weights. We discuss these works in more detail in Remarks \ref{remark:physics paper on generating}, \ref{remark:universality} and \ref{remark:cumulants} below. The purpose of this paper is to develop a systematic approach to obtain precise asymptotics of determinants with a discontinuous rotation-invariant weight. 

\medskip When the weight is supported on a one-dimensional set, it is now well-understood that the asymptotics analysis of determinants with jump-type Fisher-Hartwig singularities involves hypergeometric functions, see e.g. \cite{ItsKrasovsky}. Our situation presents an obvious but important difference with earlier works such as \cite{ItsKrasovsky}, namely that in \cite{ItsKrasovsky} the discontinuities are located at several isolated points, while in our two-dimensional setting the discontinuities take place along circles. Here we find that the asymptotic analysis of $D_{n}$ involves the uniform asymptotics of the incomplete $\gamma$ function, see Lemma \ref{lemma: uniform} below. 

\medskip To motivate the study of $D_{n}$, let us consider the probability density function
\begin{align}\label{def of point process}
\frac{1}{n!Z_{n}} \prod_{1 \leq j < k \leq n} |z_{k} -z_{j}|^{2} \prod_{j=1}^{n}|z_{j}|^{2\alpha}e^{-n V(z_{j})}, \qquad V(z)=|z|^{2b}, \; b>0, \; \alpha > -1,
\end{align}
where $Z_{n}$ is the normalization constant and $z_{1},\ldots,z_{n} \in \mathbb{C}$. This is the determinantal point process which characterizes the log-potential Coulomb gas with $n$ particles in the external field $V(z)-\frac{2\alpha}{n}\log|z|$ at the inverse temperature $\beta=2$ \cite{ForresterBible}. The density \eqref{def of point process} is also the law of the eigenvalues of an $n \times n$ normal matrix $M$ taken with respect to the probability measure \cite{Mehta}
\begin{align*}
\frac{1}{\mathcal{Z}_{n}}|\det(M)|^{2\alpha}e^{-n \, \mathrm{tr}((MM^{*})^{b})}dM, 
\end{align*}
where $\mathcal{Z}_{n}$ is the normalization constant, $M^{*}$ is the conjugate transpose of $M$, ``$\mbox{tr}$" stands for ``trace" and $dM$ denotes the measure on the set of normal $n \times n$ matrices that is induced by the flat Euclidian metric of $\mathbb{C}^{n\times n}$. The special case $b=1$ and $\alpha=0$ of \eqref{def of point process}, known as the complex Ginibre point process \cite{Ginibre}, is also the joint eigenvalue density of an $n \times n$ random matrix whose entries are independent complex Gaussian random variables with mean $0$ and variance $\frac{1}{n}$. For more background on two-dimensional determinantal point processes such as \eqref{def of point process}, we refer to \cite{HKPV2010}.


\medskip Given a Borel set $A \subset \mathbb{C}$, we denote $N(A) := \#\{z_{j} \in A\}$, i.e. $N(A)$ is the random variable that counts the number of points that fall into $A$. Let $p \in \mathbb{N}_{>0}:=\{1,2,\ldots\}$, let $0<r_{1}<\ldots < r_{p}$, and for $r>0$, let $D_{r} := \{z \in \mathbb{C}: |z|<r\}$. We are interested in the joint moment generating function of $N(D_{r_{1}}),\ldots,N(D_{r_{p}})$, which is given by
\begin{align}
\mathbb{E}\bigg[ \prod_{\ell=1}^{p} e^{u_{\ell}N(D_{r_{\ell}})} \bigg] & = \frac{1}{n!Z_{n}} \int_{\mathbb{C}}\ldots \int_{\mathbb{C}} \prod_{1 \leq j < k \leq n} |z_{k} -z_{j}|^{2} \prod_{j=1}^{n}w(z_{j}) d^{2}z_{j} = \frac{D_{n}}{Z_{n}}, \label{def of Dn as n fold integral} 
\end{align}
where $u_{1},\ldots,u_{p}\in \mathbb{R}$ and the weight $w$ was defined in \eqref{def of w}. We mention that $Z_{n}$ can be easily expanded as $n \to  +\infty$ (see Remark \ref{remark:asymp of Zn} below). Thus the two problems of obtaining the large $n$ asymptotics of $D_{n}$ and of $\mathbb{E}\big[ \prod_{\ell=1}^{p} e^{u_{\ell}N(D_{r_{\ell}})} \big]$ are essentially equivalent.


\medskip The main result of this paper is an explicit formula for the large $n$ asymptotics of $\mathbb{E}\big[ \prod_{\ell=1}^{p} e^{u_{\ell}N(D_{r_{\ell}})} \big]$, up to and including the term of order $\smash{n^{-\frac{1}{2}}}$. These asymptotics depend very much on whether the $r_{\ell}$'s are smaller, equal, or bigger than the critical value $b^{-\frac{1}{2b}}$. This is because the normalized empirical distribution $\frac{1}{n}\sum_{j=1}^{n}\delta_{z_{j}}$ of \eqref{def of point process} converges as $n \to +\infty$ weakly almost surely (see e.g. \cite{KS1999, CGZ2014}) to an equilibrium measure $\mu$ \cite{SaTo},
\begin{align}\label{equilibrium measure}
d\mu(z) = \frac{1}{4\pi}\Delta V(z)d^{2}z = \frac{b^{2}}{\pi}|z|^{2b-2}d^{2}z,
\end{align}
which is supported on the closed disk centered at $0$ of radius $\smash{b^{-\frac{1}{2b}}}$. In Theorem \ref{thm:main thm} below, we treat the general situation where
\begin{align*}
0 < r_{1} < \ldots <r_{m} < r_{m+1}=b^{-\frac{1}{2b}}\bigg( 1+\sqrt{2b}\frac{\mathfrak{s}}{\sqrt{n}} \bigg)^{\frac{1}{2b}} < r_{m+2} < \ldots < r_{p} < +\infty, \qquad \mathfrak{s}\in \mathbb{R},
\end{align*}
i.e. $w$ has $m$ discontinuities strictly inside the support of $\mu$ (the bulk), one discontinuity close to the edge, and $p-m-1$ discontinuities outside the support of $\mu$. 

\medskip The large $n$ asymptotics of $\mathbb{E}\big[ \prod_{\ell=1}^{p} e^{u_{\ell}N(D_{r_{\ell}})} \big]$ are naturally described in terms of the two functions
\begin{align}\label{functions Fb and Gb}
& \mathcal{F}(t,s):= \log \bigg( 1+\frac{s-1}{2}\mathrm{erfc}(t) \bigg), \qquad \mathcal{G}(t,s):= \frac{1-s}{1+\frac{s-1}{2}\mathrm{erfc}(t)}\frac{e^{-t^{2}}}{\sqrt{\pi}} = \frac{d}{dt}\mathcal{F}(t,s),
\end{align}
where $t \in \mathbb{R}$, $s\in \mathbb{C}\setminus(-\infty,0]$, the principal branch is chosen for the $\log$, and $\mathrm{erfc}$ is the complementary error function defined by
\begin{align}\label{def of erfc}
\mathrm{erfc} (t) = \frac{2}{\sqrt{\pi}}\int_{t}^{\infty} e^{-x^{2}}dx.
\end{align}
We now state our main result.
\begin{theorem}\label{thm:main thm}
Let $p \in \mathbb{N}_{>0}$, $m \in \{0,1,\ldots,p-1\}$, $\mathfrak{s} \in \mathbb{R}$, and
\begin{align*}
\alpha > -1, \qquad b>0, \qquad 0 < r_{1} < \ldots <r_{m} < b^{-\frac{1}{2b}} < r_{m+2} < \ldots < r_{p} < +\infty,
\end{align*}
be fixed parameters, and for $n \in \mathbb{N}_{>0}$, define $r_{m+1} := b^{-\frac{1}{2b}}\big( 1+\sqrt{2b}\frac{\mathfrak{s}}{\sqrt{n}} \big)^{\frac{1}{2b}}$. For any fixed $x_{1},\ldots,x_{p} \in \mathbb{R}$, there exists $\delta > 0$ such that 
\begin{align}\label{asymp in main thm}
\mathbb{E}\bigg[ \prod_{j=1}^{p} e^{u_{j}N(D_{r_{j}})} \bigg] = \exp \bigg( C_{1} n + C_{2} \sqrt{n} + C_{3} +  \frac{C_{4}}{\sqrt{n}} + \bigO\bigg(\frac{(\log n)^{2}}{n}\bigg)\bigg), \qquad \mbox{as } n \to + \infty
\end{align}
uniformly for $u_{1} \in \{z \in \mathbb{C}: |z-x_{1}|\leq \delta\},\ldots,u_{p} \in \{z \in \mathbb{C}: |z-x_{p}|\leq \delta\}$, where
\begin{align*}
& C_{1} = \sum_{j=1}^{m} b r_{j}^{2b} u_{j} + \sum_{j=m+1}^{p} u_{j}, \\
& C_{2} = \sum_{j=1}^{m} \sqrt{2}b r_{j}^{b}\int_{0}^{+\infty} \Big(\mathcal{F}(t,e^{u_{j}}) + \mathcal{F}(t,e^{-u_{j}}) \Big)dt + \sqrt{2b} \int_{0}^{+\infty} \mathcal{F}(t,e^{-u_{m+1}}) dt \\
& + \sqrt{2b} \; \mathfrak{s} \, u_{m+1} + \sqrt{2b} \int_{0}^{-\mathfrak{s}} \mathcal{F}(t,e^{u_{m+1}})dt, \\
& C_{3} = - \bigg( \frac{1}{2}+\alpha \bigg)\sum_{j=1}^{m}u_{j} + \bigg( \frac{1}{2}+\alpha \bigg) \mathcal{F}(\mathfrak{s},e^{-u_{m+1}}) + 4 b \sum_{j=1}^{m} \int_{0}^{+\infty} t \Big( \mathcal{F}(t,e^{u_{j}}) - \mathcal{F}(t,e^{-u_{j}}) \Big)dt  \\
& - 2b \int_{0}^{+\infty} (2t-\mathfrak{s}) \, \mathcal{F}(t,e^{-u_{m+1}}) dt + 2b \int_{0}^{-\mathfrak{s}} (2t+\mathfrak{s}) \, \mathcal{F}(t,e^{u_{m+1}}) dt \\
& + b \sum_{j=1}^{m} \int_{-\infty}^{+\infty} \mathcal{G}(t,e^{u_{j}}) \frac{5t^{2}-1}{3} dt + b  \int_{-\infty}^{-\mathfrak{s}} \mathcal{G}(t,e^{u_{m+1}})\frac{5 t^{2} +3\mathfrak{s} t - 1}{3} dt, \\
& C_{4} =  \sum_{j=1}^{m} \frac{6\sqrt{2}\, b}{r_{j}^{b}}\int_{0}^{+\infty} t^{2}\Big(\mathcal{F}(t,e^{u_{j}}) + \mathcal{F}(t,e^{-u_{j}}) \Big)dt + (2b)^{3/2}\int_{0}^{+\infty} (3t^{2}-2\mathfrak{s}t)  \mathcal{F}(t,e^{-u_{m+1}}) dt \\
& + (2b)^{3/2}\int_{0}^{-\mathfrak{s}} (3t^{2}+2\mathfrak{s}t)  \mathcal{F}(t,e^{u_{m+1}}) dt + \sum_{j=1}^{m} \frac{-b}{\sqrt{2}\, r_{j}^{b}} \int_{-\infty}^{+\infty} \mathcal{G}(t,e^{u_{j}}) \frac{21 t - 193 t^{3} + 50 t^{5}}{18} dt \\
& - \frac{b^{3/2}}{\sqrt{2}} \int_{-\infty}^{-\mathfrak{s}} \mathcal{G}(t,e^{u_{m+1}}) \frac{21 t - 193 t^{3} + 50 t^{5} + 6\mathfrak{s}(1-29t^{2}+10t^{4})-9\mathfrak{s}^{2}(3t-2t^{3})}{18} dt \\
& - \sum_{j=1}^{m} \frac{b}{2\sqrt{2}r_{j}^{b}} \int_{-\infty}^{+\infty} \bigg( \mathcal{G}(t,e^{u_{j}}) \frac{5t^{2}-1}{3}  \bigg)^{2}dt  - \frac{b^{3/2}}{2\sqrt{2}} \int_{-\infty}^{-\mathfrak{s}} \bigg( \mathcal{G}(t,e^{u_{m+1}})  \frac{5t^{2}+3\mathfrak{s}t-1}{3} \bigg)^{2}dt \\
& + \bigg( \bigg(\frac{1}{2}+\alpha\bigg) \frac{2\mathfrak{s}^{2}-1}{3\sqrt{2}}\sqrt{b} + \frac{1+6\alpha+6\alpha^{2}}{12\sqrt{2b}} \bigg) \mathcal{G}(-\mathfrak{s},e^{u_{m+1}}).
\end{align*}
In particular, since $\mathbb{E}\big[ \prod_{j=1}^{p} e^{u_{j}N(D_{r_{j}})} \big]$ is analytic for $u_{1},\ldots,u_{p} \in \mathbb{C}$ and positive for $u_{1},\ldots,u_{p} \in \mathbb{R}$, Cauchy's formula combined with \eqref{asymp in main thm} implies that for any $k_{1},\ldots,k_{p}\in \mathbb{N}:=\{0,1,\ldots\}$, $k_{1}+\ldots+k_{p}\geq 1$, and $u_{1},\ldots,u_{p}\in \mathbb{R}$, we have
\begin{align}\label{der of main result}
\partial_{u_{1}}^{k_{1}}\ldots \partial_{u_{p}}^{k_{p}} \bigg\{ \log \mathbb{E}\bigg[ \prod_{j=1}^{p} e^{u_{j}N(D_{r_{j}})} \bigg] - \bigg( C_{1} n + C_{2} \sqrt{n} + C_{3} +  \frac{C_{4}}{\sqrt{n}} \bigg) \bigg\} = \bigO\bigg(\frac{(\log n)^{2}}{n}\bigg), \quad \mbox{as } n \to + \infty.
\end{align}
\end{theorem}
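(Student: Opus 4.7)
The plan is to exploit rotation invariance to factorize $D_n$ into a product of one-dimensional integrals expressible through incomplete gamma functions, and then to expand the resulting sum using the uniform asymptotics of $\gamma(a,x)$ from Lemma \ref{lemma: uniform} combined with Euler--Maclaurin summation. Since $w(z)=w(|z|)$, the monomials are orthogonal for $w$, so
\[
D_n = \prod_{k=0}^{n-1} h_{k,n},\qquad h_{k,n} := 2\pi\int_0^{+\infty} r^{2k+1+2\alpha}\,e^{-n r^{2b}}\omega(r)\, dr,
\]
with $Z_n$ given by the same product at $\omega\equiv 1$. Setting $a_k := (k+1+\alpha)/b$, the substitution $t=nr^{2b}$ turns each $h_{k,n}$ into a linear combination of incomplete gammas, and an Abel summation over the $p+1$ constancy intervals of $\omega$ gives
\[
\frac{h_{k,n}}{h_{k,n}^{(0)}} = 1 + \sum_{\ell=1}^{p}(e^{u_\ell}-1)\,\omega_\ell\,\frac{\gamma(a_k,\,n r_\ell^{2b})}{\Gamma(a_k)},\qquad \omega_\ell := e^{u_{\ell+1}+\cdots+u_p},
\]
so the object to expand is $\log\mathbb{E}\bigl[\prod_j e^{u_j N(D_{r_j})}\bigr] = \sum_{k=0}^{n-1}\log(h_{k,n}/h_{k,n}^{(0)})$.

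Each ratio $\gamma(a_k,n r_\ell^{2b})/\Gamma(a_k)$ is exponentially close to $1$ (resp.\ $0$) when $k$ is well below (resp.\ well above) the critical index $k_\ell^\star := b n r_\ell^{2b}-(1+\alpha)$, and transitions between these values in a window of width $O(\sqrt n)$ around $k_\ell^\star$. For $r_\ell$ strictly outside the bulk ($\ell\ge m+2$) one has $k_\ell^\star>n+cn$, so those jumps produce only exponentially small corrections to every summand. I would therefore partition $\{0,\ldots,n-1\}$ into the $m+1$ transition windows $W_\ell := \{k:|k-k_\ell^\star|\le \sqrt n(\log n)^2\}$ for $\ell=1,\ldots,m+1$, together with the complementary bulk pieces. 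On a bulk piece with $k_\ell^\star<k<k_{\ell+1}^\star$ the summand equals $\log\omega_\ell$ up to an error of order $e^{-c(\log n)^2}$, and telescoping via Abel summation produces $\sum_{j=1}^m b n r_j^{2b}u_j + n\sum_{j=m+1}^p u_j = C_1 n$, modulo an $O(1)$ boundary piece that will merge with the transition contributions.

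Inside the window $W_\ell$ I would parametrize by $n r_\ell^{2b} = a_k + t\sqrt{2a_k}$ and invoke Lemma \ref{lemma: uniform} to expand
\[
\frac{\gamma(a_k,n r_\ell^{2b})}{\Gamma(a_k)} = \tfrac12\mathrm{erfc}(-t) + \frac{P_1(t)\,e^{-t^{2}}}{\sqrt{a_k}} + \frac{P_2(t)\,e^{-t^{2}}}{a_k} + O\!\left(\frac{(\log n)^{C}}{n^{3/2}}\right),
\]
for explicit polynomials $P_1,P_2$. Substituting into the Abel expression and taking $\log$, the summand becomes
\[
\log\frac{h_{k,n}}{h_{k,n}^{(0)}} = \log\omega_\ell + \mathcal F(-t,e^{u_\ell}) + \frac{1}{\sqrt n}\,\mathcal G(-t,e^{u_\ell})\,q_\ell(t) + O\!\left(\frac{(\log n)^{C}}{n}\right),
\]
for a polynomial $q_\ell$ whose coefficients depend on $r_\ell,\alpha,b$. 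The identity $\mathcal F(-t,s)=\log s+\mathcal F(t,s^{-1})$ converts $\log\omega_\ell+\mathcal F(-t,e^{u_\ell})$ into $\log\omega_{\ell-1}+\mathcal F(t,e^{-u_\ell})$, producing the mixture of $e^{u_j}$ and $e^{-u_j}$ occurrences in $C_2,C_3,C_4$. Converting $\sum_{k\in W_\ell}$ to an integral in $t$ by Euler--Maclaurin, with Jacobian $dk/dt = -b\sqrt{2a_k}$ and its expansion in $1/\sqrt n$, reproduces precisely the $\mathcal F$- and $\mathcal G$-integrals of the theorem, including the rational prefactors in $r_j^{b}$ and $1/r_j^{b}$.

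The window at $r_{m+1}$ needs separate care: the constraint $k\le n-1$ truncates the scaled variable to $t\ge -\mathfrak s$, which explains the $\int_0^{-\mathfrak s}$ pieces and the boundary evaluations $\mathcal G(-\mathfrak s,\cdot)$ appearing in $C_3,C_4$; the extra $(1+\alpha)$-dependent constants are produced by the Stirling expansion of $\log\Gamma(a_n)$ implicit in $h_{k,n}^{(0)}$ at the terminal index. Cauchy's formula then upgrades \eqref{asymp in main thm} to the derivative statement \eqref{der of main result}. I expect the main technical obstacle to be coordinating the two expansions (the uniform expansion of $\gamma/\Gamma$ and the Euler--Maclaurin expansion of $\sum_k$) consistently up to order $n^{-1/2}$: one must check that on the overlap of bulk and transition windows of width $\sqrt n(\log n)^2$ no unaccounted $O(n^{-1/2})$ residual appears, and that the various polynomial-in-$t$ prefactors produced by subleading terms combine, after integration by parts using $\mathcal G = \partial_t\mathcal F$, into exactly the compact integral representations prescribed in $C_2,C_3,C_4$.
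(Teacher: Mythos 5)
Your proposal follows essentially the same route as the paper: the rotation-invariance product formula reducing everything to incomplete gamma functions, Temme's uniform expansion of $\gamma(a,z)/\Gamma(a)$ kept to two correction orders, a splitting of the index range into bulk pieces (contributing $C_{1}n$ up to exponentially small errors) and transition windows of width $O(\sqrt{n}\,(\log n)^{2})$ around each $bnr_{\ell}^{2b}$, a Riemann-sum/Euler--Maclaurin conversion of the window sums with endpoint corrections tracked to order $n^{-1/2}$ (the paper's Lemma \ref{lemma:Riemann sum}), and a separate treatment of the edge window truncated at the terminal index, which produces the $\int_{0}^{-\mathfrak{s}}$ pieces and the $\mathcal{G}(-\mathfrak{s},\cdot)$ boundary terms, followed by Cauchy's formula for \eqref{der of main result}. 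The one minor inaccuracy is your attribution of the $\alpha$-dependent constants to a Stirling expansion of $\log\Gamma(a_{n})$: since $h^{(0)}_{k,n}$ cancels identically in the ratio with $Z_{n}$, no Stirling expansion enters; these constants (and the $-(\tfrac12+\alpha)\sum_{j}u_{j}$ in $C_{3}$) arise from the offsets of the lattice points $(j+\alpha)/b$ relative to the window endpoints together with the fractional-part/endpoint terms of the sum-to-integral conversion, whose $M$-dependence must be checked to cancel exactly as in the paper.
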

\begin{remark}
With more efforts, we expect that the estimates $\bigO\big(\frac{(\log n)^{2}}{n}\big)$ in \eqref{asymp in main thm} and \eqref{der of main result} can actually be shown to be $\bigO(n^{-1})$.
\end{remark}
\begin{remark}\label{remark:physics paper on generating}
For $b=1$, $\alpha=0$, $u_{2}=\ldots=u_{p}=0$ and $r_{1}<1$ (the bulk regime of the complex Ginibre point process), the first two leading coefficients $C_{1}$ and $C_{2}$ were previously obtained in \cite[eq (28)]{L et al 2019}, and $C_{3}$ was obtained in \cite[Proposition 2.1]{FenzlLambert}. For $b=1$, $\alpha=0$, $u_{2}=\ldots=u_{p}=0$ and $r_{1} = \big( 1+\sqrt{2}\frac{\mathfrak{s}}{\sqrt{n}} \big)^{\frac{1}{2}}$ (the edge regime of the complex Ginibre point process), $C_{1},C_{2},C_{3}$ were also obtained in \cite[Proposition 2.1]{FenzlLambert}.
\end{remark}

\begin{remark}\label{remark:universality}
Consider the probability measure \eqref{def of point process} with a general radial potential $V(z)=V(|z|)$ satisfying $V(|z|)/(2\log |z|)\to + \infty$ as $|z|\to + \infty$, and assume that the equilibrium measure is supported on $D_{r_{\star}}$ for a certain $r_{\star}>0$. In this general setting, it was argued in \cite{L et al 2019} that for $r<r_{\star}$ fixed, we have $\mathbb{E}[e^{uN(D_{r})}]=\exp(C_{1}n+C_{2}\sqrt{n}+o(\sqrt{n}))$ as $n \to + \infty$ for certain coefficients $C_{1}$ and $C_{2}$. It was also conjectured in \cite[eq (34)]{L et al 2019} that $C_{2}$ can be written in the form $C_{2}=c_{V}\widetilde{C}_{2}(u)$, where $c_{V}$ is independent of $u$, and $\widetilde{C}_{2}(u)$ is a universal quantity independent of $V$. Theorem \ref{thm:main thm} establishes this conjecture for $V(z)=|z|^{2b}$. More generally, we note from Theorem \ref{thm:main thm} that the coefficients $C_{2},C_{3}$ and $C_{4}$ appearing in the large $n$ asymptotics of $\mathbb{E}\big[ \prod_{j=1}^{m} e^{u_{j}N(D_{r_{j}})} \big]$ are of the forms
\begin{align*}
C_{2} = \sum_{j=1}^{m}br_{j}^{b}\widetilde{C}_{2}(u_{j}), & & C_{3} = - \bigg( \frac{1}{2}+\alpha \bigg)\sum_{j=1}^{m}u_{j} + \sum_{j=1}^{m}b\widetilde{C}_{3}(u_{j}), & & C_{4} = \sum_{j=1}^{m}br_{j}^{-b}\widetilde{C}_{4}(u_{j}), 
\end{align*}
for some explicit $\{\widetilde{C}_{k}(u)\}_{k=2}^{4}$ that are independent of $\alpha, b$ and $r_{1},\ldots,r_{m}$. We also find it remarkable that $C_{3}$ is completely independent of $r_{1},\ldots,r_{m}$.
\end{remark}
\begin{remark}
For one-dimensional $\log$-correlated point processes, asymptotic formulas for moment generating functions of bulk counting statistics are typically of the form $\exp(D_{1}n + D_{2}\log n + D_{3} + o(1))$, see e.g. \cite{Charlier}, and thus differ drastically from the asymptotics \eqref{asymp in main thm}.
\end{remark}
Recall that the cumulants $\{\kappa_{j}=\kappa_{j}(n,r,b,\alpha) \}_{j\in \mathbb{N}_{>0}}$ of the random variable $N(D_{r})$ are defined through the expansion
\begin{align*}
\log \mathbb{E}[e^{uN(D_{r})}] = \kappa_{1} u +  \frac{\kappa_{2}u^{2}}{2!} +  \frac{\kappa_{3}u^{3}}{3!} + \frac{\kappa_{4}u^{4}}{4!} + \ldots, \qquad \mbox{as } u \to 0,
\end{align*}
or equivalently by
\begin{align}\label{cumulant}
\kappa_{j} = \partial_{u}^{j} \log \mathbb{E}[e^{uN(D_{r})}] \Big|_{u=0}.
\end{align}
More generally, the joint cumulants of $N(D_{r_{1}}), \ldots, N(D_{r_{p}})$ are defined by
\begin{align}\label{joint cumulant}
\kappa_{j_{1},\ldots,j_{p}}:=\partial_{u_{1}}^{j_{1}}\ldots \partial_{u_{p}}^{j_{p}} \log \mathbb{E}[e^{u_{1}N(D_{r_{1}})+\ldots + u_{p}N(D_{r_{p}})}] \Big|_{u_{1}=\ldots=u_{p}=0}, \qquad j_{1},\ldots,j_{p}\in \mathbb{N}_{>0}.
\end{align}
We can deduce from Theorem \ref{thm:main thm} the following results.
\begin{corollary}\label{coro:counting function} \,
\begin{itemize}
\item[(a)] (Asymptotics for the cumulants in the bulk regime) \\[0.2cm]
Let $j \in \mathbb{N}_{>0}$, $\alpha > -1$, $b>0$ and $r \in (0,b^{-\frac{1}{2b}})$ be fixed. As $n \to +\infty$, we have
\begin{align}\label{formula cumulant bulk}
\kappa_{j}  = \begin{cases}
b r^{2b} n + d_{j} + \bigO\big(\frac{(\log n)^{2}}{n}\big), & \mbox{if } j=1, \\[0.2cm]
\hspace{1.25cm} d_{j} + \bigO\big(\frac{(\log n)^{2}}{n}\big), & \mbox{if } $j$ \mbox{ is odd and } j\neq 1, \\[0.2cm]
c_{j} \sqrt{n} + e_{j} n^{-\frac{1}{2}} + \bigO\big(\frac{(\log n)^{2}}{n}\big), & \mbox{if } $j$ \mbox{ is even},
\end{cases}
\end{align}
where
\begin{align}
& c_{j} = \sqrt{2} \, b r^{b} \int_{0}^{+\infty} \partial_{u}^{j}\Big(\mathcal{F}(t,e^{u}) + \mathcal{F}(t,e^{-u}) \Big)\Big|_{u=0}dt, \nonumber \\
& d_{j} = \left\{ \begin{array}{c c}
-\frac{1}{2}-\alpha, & \mbox{if } j=1 \\
0, & \mbox{if } j \geq 2
\end{array} \right\} + 4b \int_{0}^{+\infty} t \, \partial_{u}^{j}\Big( \mathcal{F}(t,e^{u}) - \mathcal{F}(t,e^{-u}) \Big)\Big|_{u=0}dt \nonumber \\
& \hspace{+0.7cm} + b \int_{-\infty}^{+\infty} \partial_{u}^{j}\mathcal{G}(t,e^{u})\big|_{u=0} \frac{5t^{2}-1}{3} dt, \nonumber \\
& e_{j} = \frac{6\sqrt{2} \, b}{r^{b}} \int_{0}^{+\infty} t^{2}\partial_{u}^{j}\Big(\mathcal{F}(t,e^{u}) + \mathcal{F}(t,e^{-u}) \Big)\Big|_{u=0}dt - \frac{b}{ r^{b}} \int_{-\infty}^{+\infty} \partial_{u}^{j}\mathcal{G}(t,e^{u})\big|_{u=0} \frac{21 t - 193 t^{3} + 50 t^{5}}{18\sqrt{2}} dt \nonumber \\
& \hspace{+0.8cm} - \frac{b}{2\sqrt{2} \, r^{b}} \int_{-\infty}^{+\infty} \partial_{u}^{j} \big[\mathcal{G}(t,e^{u})^{2}\big]\big|_{u=0} \bigg( \frac{5t^{2}-1}{3} \bigg)^{2}dt. \label{cjdjej bulk}
\end{align}
For $j=1$ and $j=2$, these integrals can be simplified further, and we obtain 
\begin{align}
& \kappa_{1} = \mathbb{E}[N(D_{r})] = b r^{2b} n + \frac{b-1-2\alpha}{2} + \bigO\bigg(\frac{(\log n)^{2}}{n}\bigg), \label{mean bulk} \\
& \kappa_{2} = \mathrm{Var}[N(D_{r})] = \frac{b r^{b}}{\sqrt{\pi}} \sqrt{n} - \frac{b}{16 \sqrt{\pi}r^{b}}\frac{1}{\sqrt{n}} + \bigO\bigg(\frac{(\log n)^{2}}{n}\bigg), \label{var bulk}
\end{align}
as $n \to +\infty$. As one would expect, the leading term of $\mathbb{E}[N(D_{r})]$ in \eqref{mean bulk} can be rewritten as $br^{2b}=\int_{D_{r}}d\mu$, where $\mu$ is the equilibrium measure defined in \eqref{equilibrium measure}.

\item[(b)] (Asymptotics for the cumulants in the edge regime) \\[0.1cm]
Let $j \in \mathbb{N}_{>0}$, $\alpha > -1$, $b>0$ and $\mathfrak{s}\in \mathbb{R}$ be fixed, and for $n \in \mathbb{N}_{>0}$, let $r:=b^{-\frac{1}{2b}}\big( 1+\sqrt{2b}\frac{\mathfrak{s}}{\sqrt{n}} \big)^{\frac{1}{2b}}$. As $n \to +\infty$, we have
\begin{align}\label{formula cumulant edge}
\kappa_{j}  = \begin{cases}
n + c_{j} \sqrt{n} + d_{j} + e_{j} n^{-\frac{1}{2}} + \bigO\big(\frac{(\log n)^{2}}{n}\big), & \mbox{if } j=1, \\[0.2cm]
\hspace{0.65cm} c_{j} \sqrt{n} + d_{j} + e_{j} n^{-\frac{1}{2}} + \bigO\big(\frac{(\log n)^{2}}{n}\big), & \mbox{if } j\geq 2,
\end{cases}
\end{align}
where
\begin{align*}
& c_{j} = \left\{\begin{array}{c c} 
\sqrt{2b} \, \mathfrak{s}, & \mbox{if } j=1 \\
0, & \mbox{if } j \geq 2
\end{array}
\right\} + \sqrt{2b}\int_{0}^{+\infty} \partial_{u}^{j}\mathcal{F}(t,e^{-u})\big|_{u=0} dt + \sqrt{2b}\int_{0}^{-\mathfrak{s}} \partial_{u}^{j}\mathcal{F}(t,e^{u})\big|_{u=0}dt, \\
& d_{j} = \bigg( \frac{1}{2}+\alpha \bigg) \partial_{u}^{j}\mathcal{F}(\mathfrak{s},e^{-u})\big|_{u=0} - 2b\int_{0}^{+\infty} (2t-\mathfrak{s}) \, \partial_{u}^{j}\mathcal{F}(t,e^{-u})\big|_{u=0} dt \\
& + 2b\int_{0}^{-\mathfrak{s}} (2t+\mathfrak{s}) \, \partial_{u}^{j}\mathcal{F}(t,e^{u})\big|_{u=0} dt +  b \int_{-\infty}^{-\mathfrak{s}} \partial_{u}^{j}\mathcal{G}(t,e^{u})\big|_{u=0}  \frac{5t^{2}+3\mathfrak{s}t-1}{3} dt, \\
& e_{j} = (2b)^{3/2}\int_{0}^{+\infty} (3t^{2}-2\mathfrak{s}t)  \partial_{u}^{j}\mathcal{F}(t,e^{-u})\big|_{u=0} dt + (2b)^{3/2}\int_{0}^{-\mathfrak{s}} (3t^{2}+2\mathfrak{s}t) \partial_{u}^{j}\mathcal{F}(t,e^{u})\big|_{u=0} dt \\
& - \frac{b^{3/2}}{\sqrt{2}}\int_{-\infty}^{-\mathfrak{s}} \partial_{u}^{j}\mathcal{G}(t,e^{u})\big|_{u=0} \frac{21 t - 193 t^{3} + 50 t^{5} + 6\mathfrak{s}(1-29t^{2}+10t^{4})-9\mathfrak{s}^{2}(3t-2t^{3})}{18} dt \\
& - \frac{b^{3/2}}{2\sqrt{2}} \int_{-\infty}^{-\mathfrak{s}} \partial_{u}^{j}\big[\mathcal{G}(t,e^{u})^{2}\big]\big|_{u=0} \bigg( \frac{5 t^{2} + 3\mathfrak{s} t-1}{3} \bigg)^{2}dt \\
& + \bigg( \bigg(\frac{1}{2}+\alpha\bigg) \frac{2\mathfrak{s}^{2}-1}{3\sqrt{2}}\sqrt{b} + \frac{1+6\alpha+6\alpha^{2}}{12\sqrt{2b}} \bigg) \partial_{u}^{j} \mathcal{G}(-\mathfrak{s},e^{u})\big|_{u=0}.
\end{align*}
For $j=1$ and $j=2$, the coefficients $c_{j}$, $d_{j}$ and $e_{j}$ can be evaluated explicitly (in terms of $\mathrm{erfc}$) using integration by parts, and we obtain the following:
\begin{align}
& c_{1} = \frac{\sqrt{b} \, \mathfrak{s}}{\sqrt{2}}\mathrm{erfc}(\mathfrak{s}) - \frac{\sqrt{b}}{\sqrt{2\pi}}e^{-\mathfrak{s}^{2}}, \nonumber \\
& d_{1} = -\frac{1}{2}\bigg( \frac{1}{2}+\alpha - \frac{b}{2} \bigg) \mathrm{erfc} (\mathfrak{s}) - \frac{b \, \mathfrak{s}}{3\sqrt{\pi}}e^{-\mathfrak{s}^{2}}, \nonumber \\
& e_{1} = \frac{e^{-\mathfrak{s}^{2}}}{\sqrt{2\pi}}\bigg( \frac{b(2+4\alpha)-1-6\alpha-6\alpha^{2}}{12\sqrt{b}} + \frac{(3b-2-4\alpha)\mathfrak{s}^{2}}{6}\sqrt{b} - \frac{2\mathfrak{s}^{4}}{9}b^{3/2} \bigg), \nonumber \\
& c_{2} = \frac{\sqrt{b}}{2\sqrt{\pi}}\mathrm{erfc}(\sqrt{2} \, \mathfrak{s}) + \sqrt{b}\frac{e^{-\mathfrak{s}^{2}}}{\sqrt{2\pi}}\big( 1-\mathrm{erfc} (\mathfrak{s})\big) + \frac{\sqrt{b} \, \mathfrak{s}}{\sqrt{2}}\mathrm{erfc} ( \mathfrak{s}) \bigg( \frac{1}{2}\mathrm{erfc}( \mathfrak{s})-1 \bigg), \nonumber \\
& d_{2} = -\frac{b}{12 \pi}e^{-2 \mathfrak{s}^{2}} + \frac{b \,\mathfrak{s}}{2\sqrt{2\pi}}\mathrm{erfc}( \sqrt{2} \,  \mathfrak{s} ) + \frac{b \, \mathfrak{s}}{3\sqrt{\pi}}e^{-\mathfrak{s}^{2}}\big( 1-\mathrm{erfc}( \mathfrak{s}) \big) \nonumber \\
& \hspace{0.75cm} + \frac{b-1-2\alpha}{4} \mathrm{erfc}( \mathfrak{s}) \bigg( \frac{1}{2}\mathrm{erfc}( \mathfrak{s}) - 1 \bigg), \nonumber \\
& e_{2} = \frac{e^{-\mathfrak{s}^{2}}}{12\sqrt{2\pi b}} \bigg( 1-2b+6\alpha-4b \alpha + 6\alpha^{2} + 2(2-3b+4\alpha) b \,\mathfrak{s}^{2} + \frac{8b^{2}}{3}\mathfrak{s}^{4} \bigg) \big( 1-\mathrm{erfc}( \mathfrak{s} ) \big) \nonumber \\
& \hspace{0.75cm}  -\frac{b^{3/2}  \mathfrak{s}}{72 \sqrt{2} \, \pi}e^{-2\mathfrak{s}^{2}} - \frac{b^{3/2}(1+4\mathfrak{s}^{2})}{32\sqrt{\pi}}\mathrm{erfc}( \sqrt{2} \, \mathfrak{s} ). \label{coeff for exp and var}
\end{align}
In particular, for $r=b^{-\frac{1}{2b}}$ (thus $D_{r}=\mathrm{supp} \, \mu$ and $\mathfrak{s}=0$), as $n \to + \infty$ we have
\begin{align*}
& \mathbb{E}[N(D_{r})] = n - \frac{\sqrt{b}}{\sqrt{2 \pi}}\sqrt{n} + \frac{b-1-2\alpha}{4} + \frac{b(2+4\alpha)-1-6\alpha - 6\alpha^{2}}{12\sqrt{2\pi b}\sqrt{n}} + \bigO\bigg(\frac{(\log n)^{2}}{n}\bigg), \\
& \mathrm{Var}[N(D_{r})] = \frac{\sqrt{b}}{2\sqrt{\pi}} \sqrt{n} + \frac{1+2\alpha-b}{8} - \frac{b}{12\pi} - \frac{b^{3/2}}{32 \sqrt{\pi}}\frac{1}{\sqrt{n}} + \bigO\bigg(\frac{(\log n)^{2}}{n}\bigg).
\end{align*}
\item[(c)] (Asymptotics for the cumulants in the regime bounded away from the bulk) \\[0.2cm] 
Let $\alpha > -1$, $b>0$ and $r > b^{-\frac{1}{2b}}$ be fixed. As $n \to +\infty$, we have
\begin{align}\label{formulas cumulants outside the bulk}
\kappa_{j}  = \begin{cases}
n + \bigO\big(\frac{(\log n)^{2}}{n}\big), & \mbox{if } j=1, \\[0.2cm]
\hspace{0.65cm} \bigO\big(\frac{(\log n)^{2}}{n}\big), & \mbox{if } j\geq 2.
\end{cases}
\end{align}
\item[(d)] (Asymptotics for the joint cumulants) \\[0.2cm] 
Let $p \in \mathbb{N}$, $p \geq 2$, $m\in \{0,1,\ldots,p-1\}$, $j_{1},\ldots,j_{p} \in \mathbb{N}$, $\alpha > -1$, $b>0$, and 
\begin{align*}
0 < r_{1}<r_{2}<\ldots<r_{m}<b^{-\frac{1}{2b}} < r_{m+2}<\ldots < r_{p}<+\infty, \qquad \mathfrak{s}\in \mathbb{R},
\end{align*}
be fixed parameters, and for $n \in \mathbb{N}_{>0}$, define $r_{m+1}=b^{-\frac{1}{2b}}(1+\sqrt{2b}\frac{\mathfrak{s}}{\sqrt{n}})^{\frac{1}{2b}}$. If at least two $j_{\ell}$'s are positive, then as $n \to +\infty$ we have
\begin{align}\label{cov asymp}
& \kappa_{j_{1},\ldots,j_{p}} = \bigO\bigg(\frac{(\log n)^{2}}{n}\bigg).
\end{align}
\item[(e)] (joint Gaussian fluctuations) \\[0.2cm] 
Let $\alpha > -1$, $b>0$, $\mathfrak{s}\in \mathbb{R}$ and $0<r_{1}<r_{2}<\ldots<r_{m}<b^{-\frac{1}{2b}}$ be fixed, and for $n \in \mathbb{N}_{>0}$, define $r_{m+1}:=b^{-\frac{1}{2b}}\big( 1+\sqrt{2b}\frac{\mathfrak{s}}{\sqrt{n}} \big)^{\frac{1}{2b}}$. Consider the random variables
\begin{align}
& N_{j} := \pi^{1/4}\frac{N(D_{r_{j}})-br_{j}^{2b}n}{\sqrt{br_{\smash{j}}^{b}} \; n^{1/4}}, \qquad j=1,\ldots,m, \label{Nj bulk} \\
& N_{m+1}:= \frac{N(D_{r_{m+1}})-(n+c_{1}\sqrt{n})}{\sqrt{c_{2}}\; n^{1/4}}, \label{Nj edge}
\end{align}
where $c_{1},c_{2}$ are as in \eqref{coeff for exp and var}. As $n \to + \infty$, $(N_{1},\ldots,N_{m+1})$ convergences in distribution to a multivariate normal random variable of mean $(0,\ldots,0)$ and covariance matrix $I_{m+1}$, where $I_{m+1}$ is the $(m+1) \times (m+1)$ identity matrix.
\end{itemize}
\end{corollary}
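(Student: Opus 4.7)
The plan is to derive each part of Corollary \ref{coro:counting function} directly from Theorem \ref{thm:main thm}. The decisive observation is that \eqref{der of main result} already furnishes asymptotics for arbitrary $u$-derivatives of $\log \mathbb{E}\bigl[\prod_{j=1}^{p} e^{u_{j}N(D_{r_{j}})}\bigr]$ with the same error $\bigO((\log n)^{2}/n)$ as the expansion itself. Consequently, every (joint) cumulant in \eqref{cumulant}--\eqref{joint cumulant} is obtained by formally differentiating the explicit expressions for $C_{1}, C_{2}, C_{3}, C_{4}$ in the relevant $u_{j}$'s and evaluating at $u_{j} = 0$, with the same error term inherited.

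For parts (a) and (b), specialize to $p = 1$, taking $m = 1$ in (a) so that $r = r_{1}$ lies in the bulk, and $m = 0$ in (b) so that $r = r_{m+1}$ is the edge discontinuity. Applying $\partial_{u}^{j}|_{u=0}$ to $C_{1},\ldots,C_{4}$ then yields \eqref{formula cumulant bulk}--\eqref{cjdjej bulk} and \eqref{formula cumulant edge}. The odd/even dichotomy in (a) rests on two parity facts: $u \mapsto \mathcal{F}(t,e^{u}) \pm \mathcal{F}(t,e^{-u})$ is even/odd in $u$; and the symmetry $\mathcal{G}(-t, e^{-u}) = -\mathcal{G}(t, e^{u})$ (immediate from $\mathrm{erfc}(-t) = 2 - \mathrm{erfc}(t)$) implies $\partial_{u}^{k}\mathcal{G}(t, e^{u})|_{u=0}$ has parity $(-1)^{k+1}$ in $t$. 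Combined with the parities of the polynomial weights $t,\ t^{2},\ \tfrac{5t^{2}-1}{3},\ \tfrac{21t - 193t^{3} + 50t^{5}}{18}$ appearing in $C_{2}, C_{3}, C_{4}$, these force $d_{j} = 0$ for even $j$ and $c_{j} = e_{j} = 0$ for odd $j \neq 1$, precisely matching the stated case split. For part (c), apply Theorem \ref{thm:main thm} with $p = 2$, $m = 0$, $u_{1} = 0$, and $r_{2} = r$ fixed outside the bulk; since $\mathcal{F}(t, 1) = 0 = \mathcal{G}(t, 1)$, every $\mathcal{F}$- and $\mathcal{G}$-dependent contribution to $C_{2}, C_{3}, C_{4}$ vanishes, leaving only $C_{1} = u_{2}$ and hence $\log \mathbb{E}[e^{u N(D_{r})}] = u n + \bigO((\log n)^{2}/n)$, from which \eqref{formulas cumulants outside the bulk} is immediate.

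For part (d), the crucial structural feature of Theorem \ref{thm:main thm} is that each of $C_{1}, C_{2}, C_{3}, C_{4}$ is a \emph{sum} of terms each depending on at most one $u_{\ell}$; there are no cross terms. Hence any mixed partial derivative involving at least two distinct $u_{\ell}$'s annihilates the entire main part, leaving only the error, which yields \eqref{cov asymp}. For part (e), parts (a)--(d) imply that the normalized variables \eqref{Nj bulk}--\eqref{Nj edge} satisfy $\mathbb{E}[N_{\ell}] \to 0$, $\mathrm{Var}[N_{\ell}] \to 1$, all higher-order individual cumulants of $N_{\ell}$ tend to $0$ (since $\kappa_{j} = \bigO(\sqrt{n})$ at worst, divided by $(n^{1/4})^{j}$ with $j \geq 3$), and all genuinely mixed cumulants tend to $0$ by (d). Convergence of all joint cumulants to those of a standard multivariate Gaussian --- which is moment-determinate --- implies convergence in distribution via the classical cumulant-to-moment argument.

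The main computational obstacle is the closed-form evaluation of the $j = 1, 2$ integrals producing \eqref{mean bulk}--\eqref{var bulk} and \eqref{coeff for exp and var}. Using
\[
\partial_{u}\mathcal{F}(t, e^{u})\big|_{u=0} = \tfrac{1}{2}\mathrm{erfc}(t), \qquad \partial_{u}^{2}\mathcal{F}(t, e^{u})\big|_{u=0} = \tfrac{1}{2}\mathrm{erfc}(t) - \tfrac{1}{4}\mathrm{erfc}(t)^{2},
\]
together with repeated integration by parts against the polynomial weights and the antiderivative identities $\int_{t}^{\infty} e^{-x^{2}} dx = \tfrac{\sqrt{\pi}}{2}\mathrm{erfc}(t)$, $\int_{-\infty}^{\infty} \mathrm{erfc}(t)^{2}\, dt = \sqrt{2/\pi}$ (and its translates), each of the six coefficients reduces to an explicit expression in $\mathrm{erfc}(\mathfrak{s})$, $\mathrm{erfc}(\sqrt{2}\mathfrak{s})$, and $e^{-\mathfrak{s}^{2}}$. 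The algebra is lengthy but entirely mechanical once the correct parities and antiderivatives are tabulated.
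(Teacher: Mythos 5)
Your treatment of parts (a)--(d) is essentially the paper's own argument: differentiate the uniform expansion of Theorem \ref{thm:main thm} via \eqref{der of main result}, use the evenness/oddness in $u$ of $\mathcal{F}(t,e^{u})\pm\mathcal{F}(t,e^{-u})$ together with the symmetry $\mathcal{G}(-t,e^{-u})=-\mathcal{G}(t,e^{u})$ to get the parity of $\mathcal{G}_{j}$ (and, which you leave implicit, of $\partial_{u}^{j}[\mathcal{G}(t,e^{u})^{2}]|_{u=0}$, needed for the last term of $e_{j}$), and observe that the $C_{i}$'s contain no cross terms in the $u_{\ell}$'s, so mixed derivatives kill the main part. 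For part (e) you take a genuinely different route: you deduce convergence of all joint cumulants of $(N_{1},\ldots,N_{m+1})$ to Gaussian ones from (a), (b), (d) and then invoke the cumulant/moment method plus moment-determinacy of the multivariate normal. This is valid. The paper instead exploits the fact that \eqref{asymp in main thm} is uniform for the $u_{j}$'s in fixed complex disks around $0$: it substitutes $u_{j}=\pi^{1/4}t_{j}/(\sqrt{br_{j}^{b}}\,n^{1/4})$ and $u_{m+1}=t_{m+1}/(\sqrt{c_{2}}\,n^{1/4})$ directly into the expansion and reads off $\mathbb{E}[\prod_{j}e^{t_{j}N_{j}}]=\exp(\sum_{j}t_{j}^{2}/2+\bigO(n^{-1/4}))$, so convergence in distribution follows from pointwise convergence of moment generating functions, with no appeal to Fr\'echet--Shohat; your route buys nothing extra but costs the determinacy argument, while the paper's route makes essential use of the uniformity (not merely the fixed-$u$ derivative bounds) built into Theorem \ref{thm:main thm}.

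Two small slips, neither structural. First, "$p=1$, $m=1$" is not an admissible pair in Theorem \ref{thm:main thm} (it requires $m\leq p-1$, and an edge radius $r_{m+1}$ is always present); the bulk case of part (a) should be obtained e.g.\ with $p=2$, $m=1$ and $u_{2}=0$, the edge contributions then vanishing because $\mathcal{F}(t,1)=\mathcal{G}(t,1)=0$ — exactly the mechanism you correctly use in part (c). Second, the identity $\int_{-\infty}^{\infty}\mathrm{erfc}(t)^{2}\,dt=\sqrt{2/\pi}$ is false: the integral diverges since $\mathrm{erfc}(t)\to 2$ as $t\to-\infty$; the quantities actually needed are convergent ones such as $\int_{0}^{\infty}\mathrm{erfc}(t)^{2}\,dt=(2-\sqrt{2})/\sqrt{\pi}$ and Gaussian-weighted integrals coming from $\mathcal{G}$, so the closed-form evaluations of \eqref{mean bulk}, \eqref{var bulk}, \eqref{coeff for exp and var} go through by integration by parts as you indicate, but with the correct table of integrals.
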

\begin{remark}\label{remark:cumulants}
Some parts of Corollary \ref{coro:counting function} were already known:
\begin{itemize}
\item In \cite{Rider}, Rider obtained various results for the variance and covariance of radial and angular statistics in the complex Ginibre point process (which corresponds to $b=1$ and $\alpha=0$ in our setting). In particular, for $(b,\alpha)=(1,0)$, the leading coefficient $c_{2}=b r^{b}/\sqrt{\pi}$ of \eqref{var bulk} was determined in \cite[Theorem 1.6]{Rider}.
\item Let $N_{\mathrm{Ell}}$ be the number of points lying outside the droplet of the Elliptic Ginibre ensemble. Fine asymptotics for $\mathbb{E}[N_{\mathrm{Ell}}]$, including the term of order $n^{-\frac{1}{2}}$, were obtained in \cite[eq (70)]{LeeRiser2016}. In particular, for the edge regime, the coefficients $c_{1}|_{(\mathfrak{s},b,\alpha)=(0,1,0)}$, $d_{1}|_{(\mathfrak{s},b,\alpha)=(0,1,0)}$ and $e_{1}|_{(\mathfrak{s},b,\alpha)=(0,1,0)}$ were previously found in \cite{LeeRiser2016}.
\item Let $N_{A} := \#\{z_{j}:z_{j}\in A\}$ be the number of points of the complex Ginibre process lying in a given Borel set $A$. The following was proved in \cite[Theorem 1.6]{CE2020}: when $A$ is in the bulk, has smooth boundary, and is independent of $n$, the cumulants $\{\kappa_{j}(A)\}_{j=1}^{+\infty}$ satisfy
\begin{align}\label{all order expansion}
\kappa_{j}(A) = \begin{cases}
\alpha_{j,0}n \hspace{+0.2cm} + \sum_{\ell=1}^{N} \alpha_{j,\ell}n^{1-\ell} + \bigO(n^{-N}), & \mbox{if } j =1, \\
\hspace{1.43cm} \sum_{\ell=1}^{N} \alpha_{j,\ell}n^{1-\ell} + \bigO(n^{-N}), & \mbox{if $j$ is odd and } j \geq 3, \\
\beta_{j,0}n^{\frac{1}{2}} + \sum_{\ell=1}^{N} \beta_{j,\ell}n^{\frac{1}{2}-\ell} + \bigO(n^{-N-\frac{1}{2}}), & \mbox{if $j$ is even,}
\end{cases}
\end{align}
for any $N \in \mathbb{N}$ and some $\alpha_{j,\ell},\beta_{j,\ell} \in \mathbb{R}$. Moreover, the constants $\alpha_{1,0}, \beta_{j,0}$ were also determined explicitly in \cite[Theorem 1.6]{CE2020}. Our asymptotics \eqref{coro:counting function} are consistent with \eqref{all order expansion}. (Actually, our results also suggest that the cumulants $\kappa_{j}(D_{r})$ of the Mittag-Leffler ensemble with general $b$ and $\alpha$ also satisfy an all-order expansion of the form \eqref{all order expansion}.)
\item The coefficients $\{c_{j}\}_{j=1}^{+\infty}$ were obtained for both the bulk and the edge regimes of the complex Ginibre point process in \cite[eqs (55)--(67)]{LMS2018} (the analysis of \cite{LMS2018} is done in the context of fermions in a rotating trap, and this model is equivalent to the complex Ginibre point process \cite{LMS2018}, see also \cite{KMS2021}). The coefficients $\{d_{j}|_{(b,\alpha)=(1,0)}\}_{j=1}^{+\infty}$ were then obtained in \cite[Remark 4]{FenzlLambert} for the bulk regime.
\item Corollary \ref{coro:counting function} (e), when specialized to $(b,\alpha)=(1,0)$, was already known from \cite{Rider} for $m=1$, and from \cite[Proposition 2.2]{FenzlLambert} for general $m \in \mathbb{N}_{>0}$. Note that disk counting statistics are linear statistics with indicator type test functions (thus non-smooth). We mention in passing that for smooth linear statistics of non-Hermitian random matrices, some Gaussian fluctuation formulas were already obtained by Forrester in \cite{Forrester}, then proved in \cite{RiderVirag} for Ginibre matrices, and then in more generality in \cite{AHM2011, LebleSerfaty}.
\end{itemize}
\end{remark}

\begin{remark}
As a sanity check, note that $c_{1}$ and $c_{2}$ in \eqref{coeff for exp and var} satisfy $c_{1}<0$ and $c_{2}>0$ for all $\mathfrak{s}\in \mathbb{R}$, which is consistent with $\kappa_{1} = \mathbb{E}[N(D_{r})]\leq n$ and $\kappa_{2} = \mathrm{Var}[N(D_{r})]  >0$. Note also that $c_{2}$ in \eqref{coeff for exp and var} decays exponentially fast as $\mathfrak{s} \to + \infty$, which suggests that for any $\epsilon>0$, $\mathrm{Var}[N(D_{r})]$ decays very fast as $n \to + \infty$ and simultaneously $n^{-\epsilon}\mathfrak{s} \to + \infty$ (but this is only a heuristic since the error terms in \eqref{formula cumulant edge} are proved for fixed $\mathfrak{s}$). The error terms in \eqref{formulas cumulants outside the bulk} and \eqref{cov asymp} are far from optimal and could be easily improved if needed, but we do not pursue that here (see also \cite[Theorem 1.7]{Rider} where an exponentially small error term was obtained for $\mathrm{Cov}[N(D_{r_{1}}),N(D_{r_{2}})]$ in the Ginibre case). 
\end{remark}
\begin{proof}[Proof of Corollary \ref{coro:counting function}]
Proof of parts (a), (b), (c) and (d): the asymptotics \eqref{formula cumulant bulk}, \eqref{formula cumulant edge}, \eqref{formulas cumulants outside the bulk} and \eqref{cov asymp} are obtained by combining \eqref{cumulant}--\eqref{joint cumulant} with Theorem \ref{thm:main thm} (using in particular \eqref{der of main result}). To obtain \eqref{formula cumulant bulk}, one needs to further note that 
\begin{align*}
& \partial_{u}^{j} \Big( \mathcal{F}(t,e^{u}) + \mathcal{F}(t,e^{-u}) \Big) = 0 \quad \mbox{for } j \mbox{ odd}, \qquad \partial_{u}^{j} \Big( \mathcal{F}(t,e^{u}) - \mathcal{F}(t,e^{-u}) \Big) = 0 \quad \mbox{for } j \mbox{ even}, \\
& t \mapsto \mathcal{G}_{j}(t):=\partial_{u}^{j} \mathcal{G}(t,e^{u})\big|_{u=0} \quad \mbox{satisfies } \mathcal{G}_{j}(t)=\mathcal{G}_{j}(-t) \mbox{ for $j$ odd and } \mathcal{G}_{j}(t)=-\mathcal{G}_{j}(-t) \mbox{ for $j$ even},
\end{align*}
from which it easily follows that the coefficients $c_{j}, d_{j}, e_{j}$ of \eqref{cjdjej bulk} satisfy $c_{j}=e_{j}=0$ for $j$ odd and $d_{j}=0$ for $j$ even. The simplified formulas \eqref{mean bulk}, \eqref{var bulk}, \eqref{coeff for exp and var} are then obtained using integration by parts. We now turn to the proof of part (e). For this, recall that there exists $\delta>0$ such that \eqref{asymp in main thm} holds uniformly for $u_{1},\ldots,u_{m+1} \in \{z \in \mathbb{C}: |z| \leq \delta\}$. Hence, using Theorem \ref{thm:main thm} with $p=m+1$ and
\begin{align*}
& u_{j} = \pi^{1/4}\frac{t_{j}}{\sqrt{br_{\smash{j}}^{b}} \; n^{1/4}}, \qquad j=1,\ldots,m, & & u_{m+1}:= \frac{t_{m+1}}{\sqrt{c_{2}}\; n^{1/4}},
\end{align*}
where $t_{1},\ldots,t_{m+1}\in \mathbb{R}$ are arbitrary but fixed and $c_{2}$ is as in \eqref{coeff for exp and var}, we obtain
\begin{align*}
\mathbb{E}\bigg[ \prod_{j=1}^{m+1}e^{t_{j}N_{j}} \bigg] = \exp \bigg( \sum_{j=1}^{m+1}\frac{t_{j}^{2}}{2} + \bigO(n^{-\frac{1}{4}}) \bigg), \qquad \mbox{as } n \to + \infty,
\end{align*}
which implies the claim.
\end{proof}
\subsection*{Outline of the proof of Theorem \ref{thm:main thm}.}
Since $w$ is rotation-invariant, $D_{n}$ can be identically expressed in terms of one-fold integrals (albeit not being a Selberg integral). This fact is well-known and has already been used in different contexts, see e.g. \cite{Rider, FyoSommer, DeanoSimm, EZ2018}. For convenience, we also give a proof of this result here.
\begin{lemma}\label{lemma: exact identity}
Let $\mathsf{w}$ be a rotation invariant weight satisfying
\begin{align*}
\int_{0}^{+\infty} u^{j}\mathsf{w}(u)du < + \infty, \qquad \mbox{for all } j \geq 0.
\end{align*}
Then
\begin{align*}
\frac{1}{n!} \int_{\mathbb{C}}\ldots \int_{\mathbb{C}} \prod_{1 \leq j < k \leq n} |z_{k} -z_{j}|^{2} \prod_{j=1}^{n}\mathsf{w}(z_{j}) d^{2}z_{j} = (2\pi)^{n} \prod_{j=0}^{n-1} \int_{0}^{+\infty} u^{2j+1}\mathsf{w}(u)du.
\end{align*}
\end{lemma}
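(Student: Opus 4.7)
The plan is to apply the standard Andreief (or Heine) identity to rewrite the $n$-fold integral as an $n\times n$ determinant of moments, and then exploit rotation invariance to show that this matrix of moments is diagonal.

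First I would recall that the squared Vandermonde factorizes as
\begin{align*}
\prod_{1\le j<k\le n}|z_k-z_j|^2 = \bigl|\det(z_k^{j-1})_{j,k=1}^n\bigr|^2 = \det(z_k^{j-1})_{j,k=1}^n \cdot \overline{\det(z_k^{j-1})_{j,k=1}^n}.
\end{align*}
Expanding both determinants as sums over permutations and using the multilinearity/symmetry of the integrand (i.e. Andreief's identity), the $\frac{1}{n!}$ cancels and one obtains
\begin{align*}
\frac{1}{n!}\int_{\mathbb{C}}\cdots\int_{\mathbb{C}}\prod_{1\le j<k\le n}|z_k-z_j|^2\prod_{j=1}^n \mathsf{w}(z_j)\,d^2z_j = \det\!\left(\int_{\mathbb{C}} z^{j}\overline{z}^{k}\mathsf{w}(z)\,d^2z\right)_{\!j,k=0}^{n-1}.
\end{align*}
The moment hypothesis on $\mathsf{w}$ guarantees that every entry of this matrix is finite.

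Next I would switch to polar coordinates $z=u e^{i\theta}$ to compute the entries. Since $\mathsf{w}$ is rotation invariant, $\mathsf{w}(z)=\mathsf{w}(u)$, and
\begin{align*}
\int_{\mathbb{C}}z^{j}\overline{z}^{k}\mathsf{w}(z)\,d^2z = \int_{0}^{+\infty}u^{j+k+1}\mathsf{w}(u)\,du\int_{0}^{2\pi}e^{i(j-k)\theta}\,d\theta = 2\pi\,\delta_{jk}\int_{0}^{+\infty}u^{2j+1}\mathsf{w}(u)\,du,
\end{align*}
because the angular integral vanishes whenever $j\ne k$. Thus the moment matrix is diagonal, and its determinant is simply the product of diagonal entries, giving
\begin{align*}
\det\!\left(\int_{\mathbb{C}}z^{j}\overline{z}^{k}\mathsf{w}(z)\,d^2z\right)_{\!j,k=0}^{n-1} = (2\pi)^n\prod_{j=0}^{n-1}\int_{0}^{+\infty}u^{2j+1}\mathsf{w}(u)\,du,
\end{align*}
which is exactly the claimed identity.

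There is no serious obstacle here: the argument is elementary and entirely standard. The only step that requires a moment's care is the use of Andreief's identity (to handle the absolute convergence of the multiple sum/integral swap), but this is justified by the assumed finiteness of all the moments $\int_0^{+\infty} u^j\mathsf{w}(u)\,du$.
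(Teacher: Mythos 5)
Your proposal is correct and follows essentially the same route as the paper: reduce the $n$-fold integral to the determinant of the moment matrix (the paper cites \cite[Lemma 2.1]{WebbWong} for this step, which is the Andreief identity you invoke), then use rotation invariance in polar coordinates to see that the off-diagonal moments vanish, so the determinant is the product of the diagonal entries.
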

\begin{proof}
It follows from e.g. \cite[Lemma 2.1]{WebbWong} that 
\begin{align}\label{lol7}
\frac{1}{n!} \int_{\mathbb{C}}\ldots \int_{\mathbb{C}} \prod_{1 \leq j < k \leq n} |z_{k} -z_{j}|^{2} \prod_{j=1}^{n}\mathsf{w}(z_{j}) d^{2}z_{j} = \det \left( \int_{\mathbb{C}} z^{j} \overline{z}^{k} \mathsf{w}(z) d^{2}z \right)_{j,k=0}^{n-1}.
\end{align}
Since $\mathsf{w}$ is rotation-invariant, 
\begin{align*}
\int_{\mathbb{C}} z^{j} \overline{z}^{k} \mathsf{w}(z) d^{2}z = \begin{cases}
0, & \mbox{if } j \neq k, \\
2\pi \int_{0}^{+\infty}u^{2j+1}\mathsf{w}(u)du, & \mbox{if } j=k,
\end{cases}
\end{align*}
and the claim follows. 
\end{proof}
\begin{remark}\label{remark:asymp of Zn}
Recall that $Z_{n}$ is the normalization constant of \eqref{def of point process}. Applying Lemma \ref{lemma: exact identity} to $\mathsf{w}(z)=|z|^{2\alpha} e^{-n |z|^{2b}}$, we obtain
\begin{align}\label{explicit formula for Zn}
Z_{n} := \frac{1}{n!} \int_{\mathbb{C}}\ldots \int_{\mathbb{C}} \prod_{1 \leq j < k \leq n} |z_{k} -z_{j}|^{2} \prod_{j=1}^{n}|z_{j}|^{2\alpha} e^{-n |z_{j}|^{2b}} d^{2}z_{j} = n^{-\frac{n^{2}}{2b}}n^{-\frac{1+2\alpha}{2b}n} \frac{\pi^{n}}{b^{n}} \prod_{j=1}^{n} \Gamma(\tfrac{j+\alpha}{b}). 
\end{align}
The above right-hand side can be easily expanded as $n \to + \infty$ using \cite[formula 5.11.1]{NIST}  
\begin{align*}
\log \Gamma(z) = z \log z - z - \frac{\log z}{2} + \frac{\log 2\pi}{2} + \frac{1}{12 z} + \bigO(z^{-3}), \qquad \mbox{as } z \to + \infty,
\end{align*}
and we obtain
\begin{align}
\log Z_{n} & = - \frac{3+2 \log b}{4b}n^{2} - \frac{1}{2} n \log n + \bigg( \frac{\log (2\pi)}{2} + \frac{b-2\alpha-1}{2b}(1+\log b) + \log \frac{\pi}{b} \bigg)n \nonumber \\
& + \frac{1-3b+b^{2}+6\alpha-6b \alpha + 6\alpha^{2}}{12b}\log n + \mathfrak{g}(b,\alpha) + \bigO(n^{-2}), \qquad \mbox{as } n \to +\infty, \label{asymp of Zn}
\end{align}
for a certain constant $\mathfrak{g}(b,\alpha)$. As noticed in \cite[Proposition 1.4]{CLM2021}, if $b = \frac{n_{1}}{n_{2}}$ for some $n_{1},n_{2}\in \mathbb{N}_{>0}$, then $\mathfrak{g}(b,\alpha)$ is explicitly given by\footnote{$\mathfrak{g}(b,\alpha)$ here corresponds to $d(\frac{1}{b},\frac{\alpha}{b}-1)$ in \cite{CLM2021}.}
\begin{align*}
\mathfrak{g}(b,\alpha) & = n_{1}n_{2}\zeta'(-1) + \frac{b(n_{2}-n_{1})+2n_{1}\alpha}{4b}\log(2\pi) \\
& - \frac{1-3b+b^{2}+6\alpha-6b\alpha+6\alpha^{2}}{12b}\log n_{1} - \sum_{j=1}^{n_{2}}\sum_{k=1}^{n_{1}} \log G \bigg( \frac{j+\frac{\alpha}{b}-1}{n_{2}} + \frac{k}{n_{1}} \bigg),
\end{align*}
where $G$ is Barnes' $G$-function. Thus one can obtain the large $n$ asymptotics of $D_{n}$ by combining \eqref{def of Dn as n fold integral}, \eqref{asymp in main thm} and \eqref{asymp of Zn}.
\end{remark}
For convenience, let us write $\omega$ (which was defined in \eqref{def of omega}) as
\begin{align}\label{def of omegaell}
\omega (x) = \sum_{\ell=1}^{p+1}\omega_{\ell} \mathbf{1}_{[0,r_{\ell})}(x), \qquad \omega_{\ell} := \begin{cases}
e^{u_{\ell}+\ldots+u_{p}}-e^{u_{\ell+1}+\ldots+u_{p}}, & \mbox{if } \ell < p, \\
e^{u_{p}}-1, & \mbox{if } \ell=p, \\
1, & \mbox{if } \ell=p+1,
\end{cases}
\end{align}
where $r_{p+1}:=+\infty$. Applying Lemma \ref{lemma: exact identity} to $\mathsf{w}=w$, we immediately get the following exact identity for $D_{n}$:
\begin{align}
D_{n} & = (2\pi)^{n} \prod_{j=0}^{n-1} \int_{0}^{+\infty} u^{2j+1+2\alpha}e^{-n u^{2b}}\omega(u)du \nonumber \\
& = n^{-\frac{n^{2}}{2b}}n^{-\frac{1+2\alpha}{2b}n} \frac{\pi^{n}}{b^{n}} \prod_{j=1}^{n} \bigg(\sum_{\ell=1}^{p} \omega_{\ell} \gamma(\tfrac{j+\alpha}{b},nr_{\ell}^{2b}) + \Gamma(\tfrac{j+\alpha}{b}) \bigg), \label{main exact formula}
\end{align}
where $\gamma(a,z)$ is the incomplete gamma function
\begin{align*}
\gamma(a,z) = \int_{0}^{z}t^{a-1}e^{-t}dt.
\end{align*}
As can be seen from \eqref{main exact formula}, to obtain the large $n$ asymptotics of $D_{n}$, we need the asymptotics of $\gamma(a,z)$ as $z \to +\infty$ uniformly for $a\in [\frac{1+\alpha}{b},\frac{z}{b r_{1}^{2b}}+\frac{\alpha}{b}]$. These asymptotics are already known and are stated in the following lemmas. 
\begin{lemma}\label{lemma:various regime of gamma}(Taken from \cite[formula 8.11.2]{NIST}).
Let $a>0$ be fixed. As $z \to +\infty$,
\begin{align*}
\gamma(a,z) = \Gamma(a) + \bigO(e^{-\frac{z}{2}}).
\end{align*}
\end{lemma}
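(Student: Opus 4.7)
The plan is to reduce to the incomplete gamma function complement and produce a direct exponential bound. By definition $\gamma(a,z) = \Gamma(a) - \Gamma(a,z)$, where
\[
\Gamma(a,z) := \int_{z}^{+\infty} t^{a-1} e^{-t}\, dt,
\]
so the lemma is equivalent to $\Gamma(a,z) = \bigO(e^{-z/2})$ as $z \to +\infty$, with $a>0$ fixed.

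To get this bound, I would split $e^{-t} = e^{-t/2}\, e^{-t/2}$ and use the elementary estimate $e^{-t/2} \leq e^{-z/2}$ for $t \geq z$, which pulls out the desired exponential factor:
\[
\Gamma(a,z) \;\leq\; e^{-z/2} \int_{z}^{+\infty} t^{a-1} e^{-t/2}\, dt \;\leq\; e^{-z/2} \int_{0}^{+\infty} t^{a-1} e^{-t/2}\, dt \;=\; 2^{a}\, \Gamma(a)\, e^{-z/2}.
\]
Since $a>0$ is fixed, the prefactor $2^{a}\Gamma(a)$ is a finite constant, so $\Gamma(a,z) = \bigO(e^{-z/2})$, and the stated asymptotic follows.

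There is no real obstacle here: $a$ is fixed and the tail of the integrand has genuine exponential decay in $t$, so the only ingredient needed is the pointwise bound $e^{-t/2}\leq e^{-z/2}$ on $[z,\infty)$. (If one wanted a slightly sharper constant, integration by parts gives $\Gamma(a,z) \sim z^{a-1}e^{-z}$ as $z \to +\infty$, which immediately implies the $\bigO(e^{-z/2})$ statement, but the crude split above is already sufficient.)
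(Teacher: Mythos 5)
Your proposal is correct. Note that the paper does not actually prove this lemma: it simply cites NIST DLMF formula 8.11.2, which gives the full asymptotic expansion $\Gamma(a,z)\sim z^{a-1}e^{-z}\big(1+\frac{a-1}{z}+\cdots\big)$ of the upper incomplete gamma function for fixed $a$, from which the stated (weaker) bound $\gamma(a,z)=\Gamma(a)+\bigO(e^{-z/2})$ follows immediately since the factor $z^{a-1}$ is absorbed by $e^{-z/2}$. Your argument replaces the citation by a short self-contained estimate: writing $\gamma(a,z)=\Gamma(a)-\Gamma(a,z)$ and splitting $e^{-t}=e^{-t/2}e^{-t/2}$ with $e^{-t/2}\leq e^{-z/2}$ on $[z,\infty)$ gives $\Gamma(a,z)\leq 2^{a}\Gamma(a)e^{-z/2}$, which is exactly what is needed and uses nothing beyond monotonicity of the exponential and convergence of $\int_{0}^{\infty}t^{a-1}e^{-t/2}dt$ for fixed $a>0$. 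The trade-off is only that the citation yields the sharper rate $\bigO(z^{a-1}e^{-z})$ (as you note in your closing remark), whereas your crude split already delivers the $\bigO(e^{-z/2})$ form in which the lemma is stated and used later in the paper; either route is perfectly adequate here.
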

\begin{lemma}\label{lemma: uniform}(Taken from \cite[Section 11.2.4]{Temme}).
We have
\begin{align*}
& \frac{\gamma(a,z)}{\Gamma(a)} = \frac{1}{2}\mathrm{erfc}(-\eta \sqrt{a/2}) - R_{a}(\eta), \qquad R_{a}(\eta) = \frac{e^{-\frac{1}{2}a \eta^{2}}}{2\pi i}\int_{-\infty}^{\infty}e^{-\frac{1}{2}a u^{2}}g(u)du,
\end{align*}
where $\mathrm{erfc}$ is defined in \eqref{def of erfc}, $\lambda := \frac{z}{a}$, $g(u) := \frac{dt}{du}\frac{1}{\lambda-t}+\frac{1}{u+i\eta}$,
\begin{align}\label{lol8}
& \eta = (\lambda-1)\sqrt{\frac{2 (\lambda-1-\log \lambda)}{(\lambda-1)^{2}}}, \qquad  u = -i(t-1)\sqrt{\frac{2(t-1-\log t)}{(t-1)^{2}}},
\end{align}
	where the principal branch is used for the roots. In particular, $\eta>0$ for $\lambda > 1$, $\eta<0$ for $\lambda < 1$, and $u \in \mathbb{R}$ for $t \in \mathcal{L}:=\{\frac{\theta}{\sin \theta}e^{i\theta}:-\pi<\theta<\pi\}$. Furthermore, as $a \to + \infty$, uniformly for $z \in [0,\infty)$,
\begin{align}\label{asymp of Ra}
& R_{a}(\eta) \sim \frac{e^{-\frac{1}{2}a \eta^{2}}}{\sqrt{2\pi a}}\sum_{j=0}^{\infty} \frac{c_{j}(\eta)}{a^{j}},
\end{align}
where all coefficients $c_{j}(\eta)$ are bounded functions of $\eta \in \mathbb{R}$ (i.e. bounded for $\lambda \in [0,\infty)$). The first two coefficients are given by (see \cite[p. 312]{Temme})
\begin{align*}
c_{0}(\eta) = \frac{1}{\lambda-1}-\frac{1}{\eta}, \qquad c_{1}(\eta) = \frac{1}{\eta^{3}}-\frac{1}{(\lambda-1)^{3}}-\frac{1}{(\lambda-1)^{2}}-\frac{1}{12(\lambda-1)}.
\end{align*}
In particular, the following hold:
\item[(i)] Let $\delta>1$ be fixed, and let $z=\lambda a$. As $a \to +\infty$, uniformly for $\lambda \geq 1+\delta$,
\begin{align*}
\gamma(a,z) = \Gamma(a)\big(1 + \bigO(e^{-\frac{a \eta^{2}}{2}})\big).
\end{align*}
\item[(ii)] Let $z=\lambda a$. As $a \to +\infty$, uniformly for $\lambda$ in compact subsets of $(0,1)$,
\begin{align*}
\gamma(a,z) = \Gamma(a)\bigO(e^{-\frac{a \eta^{2}}{2}}).
\end{align*}
\end{lemma}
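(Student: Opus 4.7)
The plan is to apply the uniform saddle-point method of Temme. Since $\gamma(a,z)=\Gamma(a)-\Gamma(a,z)$, I work with the upper incomplete gamma. Setting $t=as$ gives
\begin{equation*}
\Gamma(a,a\lambda) \;=\; a^{a}\int_{\lambda}^{\infty} s^{-1} e^{-a(s-\log s)}\,ds,
\end{equation*}
and the phase $\phi(s)=s-\log s$ has a non-degenerate minimum at $s=1$ with $\phi(1)=1$, $\phi''(1)=1$. The classical Laplace expansion fails to be uniform exactly because the endpoint $s=\lambda$ collides with the saddle $s=1$ as $\lambda\to 1$, and the goal of the Temme change of variable is to disentangle these two scales.

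Next I would introduce the conformal substitution $\tfrac{1}{2}\eta^{2}=s-1-\log s$ with $\mathrm{sgn}\,\eta=\mathrm{sgn}(s-1)$, which is indeed biholomorphic near $s=1$ because the right-hand side has a double zero there (and corresponds to the contour $\mathcal{L}$ defined in \eqref{lol8} on the $t$-side). The integral then transforms into
\begin{equation*}
\Gamma(a,a\lambda) \;=\; a^{a}e^{-a}\!\int_{\eta(\lambda)}^{\infty} e^{-a\eta^{2}/2}\, f(\eta)\,d\eta, \qquad f(\eta):=\tfrac{1}{s}\,\tfrac{ds}{d\eta},
\end{equation*}
whose saddle is now at $\eta=0$ for every $\lambda$. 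I then decompose $f(\eta)=\tfrac{1}{\eta}+h(\eta)$, where $h$ is analytic at $\eta=0$ because the simple pole of $1/\eta$ exactly cancels the apparent singular contribution coming from the endpoint as $\lambda\to 1$. The piece $1/\eta$, combined with Stirling's formula $\Gamma(a)\sim a^{a-\frac{1}{2}}e^{-a}\sqrt{2\pi}$, produces the $\tfrac{1}{2}\mathrm{erfc}(-\eta\sqrt{a/2})$ term after a single integration by parts, while the smooth piece yields exactly $-R_{a}(\eta)$.

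To obtain the asymptotic series \eqref{asymp of Ra}, I would expand $h$ in powers of $\eta$ (equivalently, integrate by parts repeatedly using $e^{-a\eta^{2}/2}=-\tfrac{1}{a\eta}\partial_{\eta}e^{-a\eta^{2}/2}$ against a suitable smoothing factor, or use Watson's lemma on the Gaussian kernel). The coefficients $c_{j}(\eta)$ then arise from $h$ and its derivatives: expanding the Temme map as $s-1=\eta+\tfrac{1}{3}\eta^{2}+\tfrac{1}{36}\eta^{3}+\cdots$ and inverting gives the explicit $c_{0}$ and $c_{1}$. Uniform boundedness of $c_{j}(\eta)$ on $\mathbb{R}$ follows from the fact that $h$ and all its derivatives are smooth on $\mathbb{R}$ and remain bounded at the endpoints (as $\eta\to+\infty$ one has $s\sim\eta^{2}/2$, and as $\eta\to-\infty$ one has $s\to 0^{+}$, both cases manageable). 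Parts (i) and (ii) follow immediately: for $\lambda\geq 1+\delta$ one has $\eta\geq\eta_{0}>0$, so $\tfrac{1}{2}\mathrm{erfc}(-\eta\sqrt{a/2})=1+\bigO(e^{-a\eta^{2}/2})$ and $R_{a}(\eta)=\bigO(e^{-a\eta^{2}/2})$; for $\lambda$ in a compact subset of $(0,1)$, $\eta\leq-\eta_{0}<0$ and $\tfrac{1}{2}\mathrm{erfc}(-\eta\sqrt{a/2})=\bigO(e^{-a\eta^{2}/2})$ directly.

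The main obstacle I expect is obtaining the uniform bound on the remainder of the asymptotic series for $R_{a}(\eta)$ in $\eta\in\mathbb{R}$. A clean way to handle this is via the complex contour representation already stated in the lemma, $R_{a}(\eta)=\tfrac{e^{-a\eta^{2}/2}}{2\pi i}\int_{-\infty}^{\infty}e^{-au^{2}/2}g(u)\,du$ with $g(u)=\tfrac{dt}{du}\tfrac{1}{\lambda-t}+\tfrac{1}{u+i\eta}$: the two singularities in $g$ cancel precisely because of the Temme identity \eqref{lol8}, leaving a function analytic and bounded along $\mathbb{R}$, on which standard steepest-descent analysis on the real line produces a full expansion with remainders $\bigO(e^{-a\eta^{2}/2}/a^{N+1/2})$ uniformly in $\lambda\in[0,\infty)$. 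This contour approach also makes the uniformity in $z\in[0,\infty)$ (equivalently $\lambda\in[0,\infty)$) transparent, which is what is needed downstream in the analysis of $D_{n}$.
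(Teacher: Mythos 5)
The paper itself does not prove this lemma: it is quoted from Temme's book (Section 11.2.4), with $c_{0},c_{1}$ taken from p.~312, and only the easy consequences (i) and (ii) are left implicit. So the comparison is really between your sketch and Temme's derivation that the paper cites. Your overall plan is indeed Temme's uniform method; the inversion $s-1=\eta+\tfrac{1}{3}\eta^{2}+\tfrac{1}{36}\eta^{3}+\cdots$ is correct, and your deduction of (i) and (ii) from the representation (monotonicity of $\eta$ in $\lambda$, $\mathrm{erfc}(x)\leq e^{-x^{2}}$ for $x\geq 0$, boundedness of the $c_{j}$) is exactly what the paper implicitly uses.

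However, the central step of your derivation is wrong as stated, and the conclusion is partly circular. With $\tfrac{1}{2}\eta^{2}=s-1-\log s$ and $f(\eta)=\tfrac{1}{s}\tfrac{ds}{d\eta}$ one gets $f(\eta)=\tfrac{\eta}{s-1}$, which is analytic at $\eta=0$ with $f(0)=1$: there is no decomposition $f(\eta)=\tfrac{1}{\eta}+h(\eta)$, and the $\mathrm{erfc}$ term does not arise from a $1/\eta$ pole. In the real-line (Bleistein-type) route it arises from the constant part $f(0)$, and it then carries the factor $\sqrt{2\pi}\,a^{a-\frac{1}{2}}e^{-a}/\Gamma(a)=1+\bigO(1/a)$, which must itself be re-expanded, so this route does not directly yield the exact identity $\gamma(a,z)/\Gamma(a)=\tfrac{1}{2}\mathrm{erfc}(-\eta\sqrt{a/2})-R_{a}(\eta)$ with $R_{a}$ given by the stated contour integral. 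The pole that Temme's method splits off lives in the complex $u$-plane: under $u=-i(t-1)\sqrt{2(t-1-\log t)/(t-1)^{2}}$ the point $t=\lambda$ maps to $u=-i\eta$, where $\tfrac{dt}{du}\tfrac{1}{\lambda-t}$ has a simple pole behaving like $\tfrac{-1}{u+i\eta}$; adding $\tfrac{1}{u+i\eta}$ removes it (this is the cancellation you invoke at the end), and integrating the subtracted pole against the Gaussian produces precisely the $\mathrm{erfc}$ term. Your final paragraph takes this $u$-plane representation ``as already stated in the lemma'', but that representation is the first assertion to be proved; as written your argument assumes it rather than derives it. To close the gap you must obtain the $u$-plane formula from a contour-integral representation of $\Gamma(a,z)/\Gamma(a)$ (Temme's starting point) before splitting off the pole --- or simply cite Temme, as the paper does.
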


\section{Proof of Theorem \ref{thm:main thm}}\label{section:proof}
In this section, $\log$ always denotes the principal branch of the logarithm, and $c$ and $C$ denote positive constants which may change within a computation. 

\medskip Let $M'$ be a large integer independent of $n$, let $\epsilon > 0$ be a small constant independent of $n$, and let $M:=M'\sqrt{\log n}$. Define
\begin{align*}
& j_{\ell,-}:=\lceil \tfrac{bnr_{\ell}^{2b}}{1+\epsilon} - \alpha \rceil, \qquad j_{\ell,+} := \lfloor  \tfrac{bnr_{\ell}^{2b}}{1-\epsilon} - \alpha \rfloor, \qquad \ell=1,\ldots,m, \\
& j_{m+1,-}:=\lceil \tfrac{n}{1+\epsilon} - \alpha \rceil, \qquad j_{m+1,+} := n,
\end{align*}
$j_{0,-}:=1$, $j_{0,+}:=M'$. We take $\epsilon$ sufficiently small such that
\begin{align*}
\frac{br_{\ell}^{2b}}{1-\epsilon} < \frac{br_{\ell+1}^{2b}}{1+\epsilon}, \qquad \mbox{for all } \ell \in \{1,\ldots,m\}.
\end{align*}
Using \eqref{main exact formula}, we split $\log D_{n}$ into $2m+4$ parts
\begin{align}\label{log Dn as a sum of sums}
\log D_{n} = S_{-1} + S_{0} + \sum_{k=1}^{m}(S_{2k-1}+S_{2k}) + S_{2m+1} + S_{2m+2},
\end{align}
with 
\begin{align}
& S_{-1} = -\tfrac{1}{2b}n^{2}\log n - \tfrac{1+2\alpha}{2b}n \log n + n \log \tfrac{\pi}{b}, & & S_{0} = \sum_{j=1}^{M'} \log \bigg( \sum_{\ell=1}^{p+1} \omega_{\ell} \gamma(\tfrac{j+\alpha}{b},nr_{\ell}^{2b}) \bigg), \label{def of Sm1 and S0} \\
& S_{2k-1} = \sum_{j=j_{k-1,+}+1}^{j_{k,-}-1} \hspace{-0.3cm} \log \bigg( \sum_{\ell=1}^{p+1} \omega_{\ell} \gamma(\tfrac{j+\alpha}{b},nr_{\ell}^{2b}) \bigg), & & S_{2k} = \sum_{j=j_{k,-}}^{j_{k,+}} \log \bigg( \sum_{\ell=1}^{p+1} \omega_{\ell} \gamma(\tfrac{j+\alpha}{b},nr_{\ell}^{2b}) \bigg), \label{def of S bulk} \\
& S_{2m+1}=\sum_{j=j_{m,+}+1}^{j_{m+1,-}-1} \hspace{-0.3cm} \log \bigg( \sum_{\ell=1}^{p+1} \omega_{\ell} \gamma(\tfrac{j+\alpha}{b},nr_{\ell}^{2b}) \bigg), & & S_{2m+2} = \sum_{j=j_{m+1,-}}^{n} \log \bigg( \sum_{\ell=1}^{p+1} \omega_{\ell} \gamma(\tfrac{j+\alpha}{b},nr_{\ell}^{2b}) \bigg). \label{def of S edge}
\end{align}
For convenience, we also define
\begin{align}\label{def of Omega j}
\Omega_{\ell} = \sum_{j=\ell}^{p+1}\omega_{j} = \begin{cases}
e^{u_{\ell}+\ldots+u_{p}}, & \mbox{if } \ell \leq p, \\
1 & \mbox{if } \ell=p+1,
\end{cases}
\end{align}
so that $\omega$ can be rewritten as
\begin{align*}
\omega (x) = \sum_{\ell=1}^{p+1}\omega_{\ell} \mathbf{1}_{[0,r_{\ell})}(x) = \sum_{\ell=1}^{p+1}\Omega_{\ell} \mathbf{1}_{[r_{\ell-1},r_{\ell})}(x).
\end{align*}
\begin{lemma}\label{lemma: S0}
For any $x_{1},\ldots,x_{p} \in \mathbb{R}$, there exists $\delta > 0$ such that
\begin{align}\label{asymp of S0}
S_{0} = M' \log \Omega_{1} + \sum_{j=1}^{M'} \log \Gamma(\tfrac{j+\alpha}{b}) + \bigO(e^{-cn}), \qquad \mbox{as } n \to + \infty,
\end{align}
uniformly for $u_{1} \in \{z \in \mathbb{C}: |z-x_{1}|\leq \delta\},\ldots,u_{p} \in \{z \in \mathbb{C}: |z-x_{p}|\leq \delta\}$.
\end{lemma}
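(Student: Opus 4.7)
The plan is to exploit the fact that $M'$ is fixed independently of $n$, so for every $j\in\{1,\ldots,M'\}$ the parameter $a=\frac{j+\alpha}{b}$ stays in a bounded set while each $nr_{\ell}^{2b}$ tends to $+\infty$. I would therefore apply Lemma~\ref{lemma:various regime of gamma} with this fixed $a$ to each of the $p+1$ terms $\gamma(\tfrac{j+\alpha}{b},nr_{\ell}^{2b})$. Writing $c:=\tfrac{1}{2}r_{1}^{2b}>0$ (the smallest exponent gives the worst error), this yields
\begin{align*}
\gamma(\tfrac{j+\alpha}{b},nr_{\ell}^{2b}) = \Gamma(\tfrac{j+\alpha}{b}) + \bigO(e^{-cn}), \qquad \ell=1,\ldots,p,
\end{align*}
and $\gamma(\tfrac{j+\alpha}{b},nr_{p+1}^{2b})=\Gamma(\tfrac{j+\alpha}{b})$ exactly since $r_{p+1}=+\infty$. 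The implied constants are uniform in $j\in\{1,\ldots,M'\}$ because there are finitely many such $j$.

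The second ingredient is the telescoping identity
\begin{align*}
\sum_{\ell=1}^{p+1}\omega_{\ell} = \Omega_{1} = e^{u_{1}+\ldots+u_{p}},
\end{align*}
which follows directly from the definition \eqref{def of omegaell} of $\omega_{\ell}$. Combining this with the previous display and using that each $\omega_{\ell}$ is bounded on the prescribed $\delta$-neighbourhoods of $x_{1},\ldots,x_{p}$, I would conclude
\begin{align*}
\sum_{\ell=1}^{p+1}\omega_{\ell}\gamma(\tfrac{j+\alpha}{b},nr_{\ell}^{2b}) = \Omega_{1}\Gamma(\tfrac{j+\alpha}{b}) + \bigO(e^{-cn}), \qquad j=1,\ldots,M',
\end{align*}
uniformly in $(u_{1},\ldots,u_{p})$ in those neighbourhoods.

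Finally, to take the logarithm, I would choose $\delta>0$ small enough that $\Omega_{1}=e^{u_{1}+\ldots+u_{p}}$ and the factors $\Gamma(\tfrac{j+\alpha}{b})$ stay uniformly bounded away from $0$ (the $\Gamma$-values are positive constants, and $|\Omega_{1}|$ is bounded below by a positive constant on a compact neighbourhood of $(x_{1},\ldots,x_{p})$). Factoring out $\Omega_{1}\Gamma(\tfrac{j+\alpha}{b})$ inside the logarithm and using $\log(1+\bigO(e^{-cn}))=\bigO(e^{-cn})$ gives
\begin{align*}
\log\!\bigg(\sum_{\ell=1}^{p+1}\omega_{\ell}\gamma(\tfrac{j+\alpha}{b},nr_{\ell}^{2b})\bigg) = \log\Omega_{1} + \log\Gamma(\tfrac{j+\alpha}{b}) + \bigO(e^{-cn}),
\end{align*}
and summing over $j=1,\ldots,M'$ produces \eqref{asymp of S0}. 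The only mild subtlety—which I do not expect to be a real obstacle—is checking the uniformity in the complex parameters $u_{j}$, but this reduces to verifying that $\Omega_{1}$ stays bounded away from $0$, which is immediate for small $\delta$.
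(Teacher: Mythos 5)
Your proposal is correct and follows essentially the same route as the paper: apply the fixed-$a$ asymptotics of Lemma \ref{lemma:various regime of gamma} to each incomplete gamma term (with the $\ell=p+1$ term exact), use $\sum_{\ell=1}^{p+1}\omega_{\ell}=\Omega_{1}$, and take logarithms after choosing $\delta$ so that $\Omega_{1}$ stays away from the branch cut, with uniformity in $u$ coming from the error terms being independent of the $u_{j}$'s. The extra details you supply (the explicit constant $c=\tfrac{1}{2}r_{1}^{2b}$ and the finiteness of the range of $j$) are consistent with, and merely expand upon, the paper's argument.
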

\begin{proof}
By \eqref{def of Sm1 and S0} and Lemma \ref{lemma:various regime of gamma}, as $n \to +\infty$ we have
\begin{align*}
S_{0} & = \sum_{j=1}^{M'} \log \bigg( \sum_{\ell=1}^{p+1} \omega_{\ell} \big[\Gamma(\tfrac{j+\alpha}{b}) + \bigO(e^{-cn}) \big] \bigg) = \sum_{j=1}^{M'} \log \big( \Omega_{1}\Gamma(\tfrac{j+\alpha}{b}) \big) + \bigO(e^{-cn}),
\end{align*}
where the first error term in the above expression is independent of $u_{1},\ldots,u_{p}$. This clearly implies the claim.
\end{proof}

\begin{lemma}\label{lemma: S2km1}
Let $k \in \{1,\ldots,m+1\}$. For any $x_{1},\ldots,x_{p} \in \mathbb{R}$, there exists $\delta > 0$ such that 
\begin{align*}
S_{2k-1} = 
(j_{k,-}-j_{k-1,+}-1) \log \Omega_{k} + \sum_{j=j_{k-1,+}+1}^{j_{k,-}-1}  \hspace{-0.3cm} \log \Gamma(\tfrac{j+\alpha}{b}) + \bigO(e^{-cn})
\end{align*}
as $n \to +\infty$ uniformly for $u_{1} \in \{z \in \mathbb{C}: |z-x_{1}|\leq \delta\},\ldots,u_{p} \in \{z \in \mathbb{C}: |z-x_{p}|\leq \delta\}$.
\end{lemma}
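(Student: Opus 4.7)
The plan is to show that for each $j$ in the summation range, exactly one value of the index $\ell$ effectively contributes, so that the quantity inside the logarithm is $\Gamma(\tfrac{j+\alpha}{b})\,\Omega_{k}$ up to an exponentially small multiplicative error. Writing $a:=(j+\alpha)/b$ and $\lambda_{\ell}:=\frac{nr_{\ell}^{2b}}{a}=\frac{bnr_{\ell}^{2b}}{j+\alpha}$, the first step is to verify from the definitions of $j_{k-1,+}$ and $j_{k,-}$ that, uniformly for $j\in[j_{k-1,+}+1,j_{k,-}-1]$,
\[
\lambda_{\ell}\;\leq\;(1-\epsilon)\frac{r_{\ell}^{2b}}{r_{k-1}^{2b}}\;\leq\;1-\epsilon \quad(\ell\leq k-1),\qquad \lambda_{\ell}\;\geq\;(1+\epsilon)\frac{r_{\ell}^{2b}}{r_{k}^{2b}}\;\geq\;1+\epsilon \quad(\ell\geq k).
\]
For the edge case $k=m+1$ the threshold $r_{m+1}$ depends on $n$, but $br_{m+1}^{2b}=1+\sqrt{2b}\mathfrak{s}/\sqrt{n}\to 1$, so by taking $\epsilon$ slightly smaller and $n$ sufficiently large these two-sided bounds persist. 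Note also that for $\ell\leq k-1$ one has $\lambda_{\ell}\geq br_{\ell}^{2b}>0$ (since $j\leq n$), so $\lambda_{\ell}$ lies in a compact subset of $(0,1)$, and Lemma \ref{lemma: uniform}(ii) applies; for $\ell\geq k$ one has $\lambda_{\ell}\geq 1+\epsilon$ bounded away from $1$, and Lemma \ref{lemma: uniform}(i) applies.

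Since $a\geq cn$ throughout the range and $\eta^{2}$ is bounded below by a positive constant in both regimes, Lemma \ref{lemma: uniform} yields
\[
\gamma(a,nr_{\ell}^{2b})=\Gamma(a)\,\bigO(e^{-cn})\quad(\ell\leq k-1),\qquad \gamma(a,nr_{\ell}^{2b})=\Gamma(a)\bigl(1+\bigO(e^{-cn})\bigr)\quad(\ell\geq k),
\]
uniformly in $j$. Summing and using \eqref{def of Omega j} gives
\[
\sum_{\ell=1}^{p+1}\omega_{\ell}\,\gamma(\tfrac{j+\alpha}{b},nr_{\ell}^{2b})=\Gamma(\tfrac{j+\alpha}{b})\bigl(\Omega_{k}+\bigO(e^{-cn})\bigr).
\]
Provided the disks $|u_{j}-x_{j}|\leq\delta$ are small enough, $\Omega_{k}=e^{u_{k}+\ldots+u_{p}}$ (or $1$ if $k=p+1$) is bounded away from $0$ uniformly in $u_{1},\dots,u_{p}$, so we may take logarithms and expand $\log(1+\bigO(e^{-cn})/\Omega_{k})=\bigO(e^{-cn})$.

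Summing over the at most $n$ values of $j$ then gives
\[
S_{2k-1}=(j_{k,-}-j_{k-1,+}-1)\log\Omega_{k}+\sum_{j=j_{k-1,+}+1}^{j_{k,-}-1}\log\Gamma(\tfrac{j+\alpha}{b})+n\,\bigO(e^{-cn}),
\]
and the polynomial prefactor $n$ is absorbed by slightly shrinking the exponential rate, giving the claimed $\bigO(e^{-cn})$. I do not anticipate a serious obstacle; the main points requiring care are the uniformity of the two-sided bounds on $\lambda_{\ell}$ across the full (growing) range of $j$ and the treatment of the edge regime $k=m+1$ where $r_{m+1}$ is itself $n$-dependent, both of which are addressed by taking $n$ large enough and $\epsilon$ slightly smaller than in the definition of the $j_{\ell,\pm}$'s.
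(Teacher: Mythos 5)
Your argument coincides with the paper's for $k\in\{2,\ldots,m+1\}$: the two-sided bounds on $\lambda_{j,\ell}$, the application of Lemma \ref{lemma: uniform} (i)--(ii), and the absorption of the polynomially many exponentially small errors are exactly the paper's steps, and your handling of the $n$-dependence of $r_{m+1}$ in the edge case is fine. The gap is the case $k=1$. There $S_{1}$ runs over $j\in\{M'+1,\ldots,j_{1,-}-1\}$ with $M'$ a constant independent of $n$, so your assertion that ``$a\geq cn$ throughout the range'' is false: $a_{j}=(j+\alpha)/b$ can be as small as $(M'+1+\alpha)/b$. Consequently the factorization ``$a\geq cn$ and $\eta^{2}\geq c$'' does not yield the exponent $cn$ on that part of the range, and, more seriously, Lemma \ref{lemma: uniform}(i) is an asymptotic statement as $a\to+\infty$, so invoking it for indices $j$ where $a_{j}$ stays bounded is not justified as written.

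The paper treats $k=1$ separately: it uses the quantitative form of Lemma \ref{lemma: uniform}(i) (for every $\epsilon'>0$ there exist $A(\epsilon'),C(\epsilon')$ such that $|\gamma(a,z)/\Gamma(a)-1|\leq Ce^{-a\eta^{2}/2}$ for all $a\geq A$ and $\lambda\geq 1+\epsilon'$), chooses $M'$ large enough that $a_{j}\geq A(\epsilon/2)$ for all $j\in\{M'+1,\ldots,j_{1,-}-1\}$, and then notes that $a_{j}\eta_{j,\ell}^{2}$ decreases in $j$ on this range, hence is bounded below by its value at $j=j_{1,-}-1$, which is $\geq cn$. Your proof can be repaired along these lines, or by observing directly that $\tfrac{1}{2}a\eta^{2}=z-a-a\log(z/a)\geq c_{\epsilon}\,z\geq c\,n$ whenever $\lambda=z/a\geq 1+\epsilon$ (it is the product $a\eta^{2}$, not $a$ alone, that carries the factor $n$ when $a$ is small and $\lambda$ is large), together with the choice of $M'$ large enough for the uniform estimate to apply; but as stated the $k=1$ case is not covered.
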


\begin{proof}
Recall that $S_{2k-1}$ is defined in \eqref{def of S bulk} and \eqref{def of S edge}, and define $a_{j} := \tfrac{j+\alpha}{b}$, $\lambda_{j,\ell} := \frac{bnr_{\ell}^{2b}}{j+\alpha}$ and 
\begin{align}\label{def etajl}
\eta_{j,\ell} := (\lambda_{j,\ell}-1)\sqrt{\frac{2 (\lambda_{j,\ell}-1-\ln \lambda_{j,\ell})}{(\lambda_{j,\ell}-1)^{2}}}.
\end{align}
Let us treat the case $k \geq 2$ first. We use Lemma \ref{lemma: uniform} (i)--(ii) with $a$ and $\lambda$ replaced by $a_{j}$ and $\lambda_{j,\ell}$ respectively, where $j \in \{j_{k-1,+}+1,\ldots,j_{k,-}-1\}$ and $\ell\in \{1,\ldots,p\}$. This gives
\begin{align}\label{lol9}
S_{2k-1} & = \hspace{-0.15cm} \sum_{j=j_{k-1,+}+1}^{j_{k,-}-1}  \hspace{-0.3cm} \log \Gamma(\tfrac{j+\alpha}{b}) + \hspace{-0.15cm} \sum_{j=j_{k-1,+}+1}^{j_{k,-}-1} \hspace{-0.3cm}
\log \bigg( \sum_{\ell=1}^{k-1} \omega_{\ell}  \bigO(e^{-\frac{a_{j}\eta_{j,\ell}^{2}}{2}}) + \sum_{\ell=k}^{p} \omega_{\ell}  \big(1+\bigO(e^{-\frac{a_{j}\eta_{j,\ell}^{2}}{2}})\big) + 1 \bigg), 
\end{align}
as $n \to + \infty$, where the above error terms are independent of $\omega_{1},\ldots,\omega_{p}$. Note that for each $k \in \{2,\ldots,m+1\}$, there exist positive constants $\{c_{j},c_{j}'\}_{j=1}^{3}$ such that $c_{1} n \leq  a_{j} \leq c_{1}'n$, $c_{2} \leq |\lambda_{j,\ell}-1| \leq c_{2}'$ and $c_{3} \leq \eta_{j,\ell}^{2} \leq c_{3}'$ hold for all $n$ sufficiently large, for all $j \in \{j_{k-1,+}+1,\ldots,j_{k,-}-1\}$ and for all $\ell\in \{1,\ldots,p\}$. This proves the claim for $k=2,\ldots,m+1$ with $c=\frac{c_{1}c_{3}}{2}$. The proof for $k=1$ is only slightly different. By Lemma \ref{lemma: uniform} (i), for any $\epsilon'>0$ there exist $A=A(\epsilon'),C=C(\epsilon')>0$ such that $|\frac{\gamma(a,z)}{\Gamma(a)}-1| \leq Ce^{-\frac{a\eta^{2}}{2}}$ for all $a \geq A$, for all $\lambda=\frac{z}{a} \in [1+\epsilon',+\infty]$, and where $\eta$ is defined by \eqref{lol8}. Let us take $\epsilon'=\frac{\epsilon}{2}$ and choose $M'$ large enough so that $a_{j} = \frac{j+\alpha}{b} \geq A(\frac{\epsilon}{2})$ for all $j \in \{M'+1,\ldots,j_{1,-}-1\}$. Hence, as in \eqref{lol9} we find
\begin{align*}
S_{1} & = \sum_{j=M'+1}^{j_{1,-}-1}  \hspace{-0.3cm} \log \Gamma(\tfrac{j+\alpha}{b}) + \sum_{j=M'+1}^{j_{1,-}-1} \hspace{-0.3cm}
\log \bigg( \sum_{\ell=1}^{p} \omega_{\ell}  \big(1+\bigO(e^{-\frac{a_{j}\eta_{j,\ell}^{2}}{2}})\big) + 1 \bigg), \qquad \mbox{as } n \to + \infty.
\end{align*}
It is easy to check that for each $\ell \in \{1,\ldots,p\}$, the quantity $a_{j}\eta_{j,\ell}^{2}$ decreases as $j$ increases from $M'+1$ to $j_{1,-}-1$. Hence
\begin{align*}
\frac{a_{j}\eta_{j,\ell}^{2}}{2} \geq \frac{a_{j_{1,-}-1}\eta_{j_{1,-}-1,\ell}^{2}}{2} \geq cn, \qquad \mbox{for all } j \in \{M'+1,\ldots,j_{1,-}-1\}, \; \ell \in \{1,\ldots,p\},
\end{align*}
for a sufficiently small $c>0$. This proves the claim for $k=1$.
\end{proof}
We now turn our attention to the sums $S_{2k}$, $k=1,\ldots,m+1$.
\begin{lemma}
For any $x_{1},\ldots,x_{p} \in \mathbb{R}$, there exists $\delta > 0$ such that
\begin{align*}
& S_{2k}=S_{2k}^{(1)}+S_{2k}^{(2)}+S_{2k}^{(3)} + \bigO(e^{-cn}), & & k =1,2,\ldots,m,
\end{align*}
as $n \to \infty$ uniformly for $u_{1} \in \{z \in \mathbb{C}: |z-x_{1}|\leq \delta\},\ldots,u_{p} \in \{z \in \mathbb{C}: |z-x_{p}|\leq \delta\}$, where
\begin{align}\label{asymp prelim of S2kpvp}
& S_{2k}^{(v)} = \sum_{j:\lambda_{j,k}\in I_{v}}  \log \bigg( \omega_{k} \gamma(a_{j},z_{k}) +   \Omega_{k+1} \Gamma(\tfrac{j+\alpha}{b}) \bigg), \quad v=1,2,3, \quad k=1,\ldots,m
\end{align}
with 
\begin{align*}
a_{j}:=\frac{j+\alpha}{b}, \qquad z_{k}:=nr_{k}^{2b}, \qquad \lambda_{j,k} := \frac{z_{k}}{a_{j}} = \frac{bnr_{k}^{2b}}{j+\alpha},
\end{align*}
and where
\begin{align*}
I_{1} = [1-\epsilon,1-\tfrac{M}{\sqrt{n}}), \qquad I_{2} = [1-\tfrac{M}{\sqrt{n}},1+\tfrac{M}{\sqrt{n}}], \qquad I_{3} = (1+\tfrac{M}{\sqrt{n}},1+\epsilon].
\end{align*}
Similarly, for any $x_{1},\ldots,x_{p} \in \mathbb{R}$, there exists $\delta > 0$ such that
\begin{align*}
S_{2m+2}=S_{2m+2}^{(2)}+S_{2m+2}^{(3)} + \bigO(e^{-cn}),
\end{align*}
as $n \to \infty$ uniformly for $u_{1} \in \{z \in \mathbb{C}: |z-x_{1}|\leq \delta\},\ldots,u_{p} \in \{z \in \mathbb{C}: |z-x_{p}|\leq \delta\}$, where
\begin{align}\label{asymp prelim of S2mp2pvp}
& S_{2m+2}^{(v)} = \sum_{j:\lambda_{j,m+1}\in I_{v}}  \log \bigg( \omega_{m+1} \gamma(a_{j},z_{m+1}) +   \Omega_{m+2} \Gamma(\tfrac{j+\alpha}{b}) \bigg),  \quad v=2',3,
\end{align}
with 
\begin{align*}
a_{j}:=\frac{j+\alpha}{b}, \qquad z_{m+1}:=nr_{m+1}^{2b}, \qquad \lambda_{j,m+1} := \frac{z_{m+1}}{a_{j}} = \frac{bnr_{m+1}^{2b}}{j+\alpha},
\end{align*}
and where
\begin{align*}
I_{2'} = [\tfrac{1+\frac{\mathcal{R}}{\sqrt{n}}}{1+\frac{\alpha}{n}},1+\tfrac{M}{\sqrt{n}}], \qquad I_{3} = (1+\tfrac{M}{\sqrt{n}},1+\epsilon], \qquad \mathcal{R} := \sqrt{2b} \, \mathfrak{s}.
\end{align*}
\end{lemma}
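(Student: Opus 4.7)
The plan is to substitute the uniform asymptotics of Lemma~\ref{lemma: uniform} into each term $\omega_\ell \gamma(a_j,nr_\ell^{2b})$ with $\ell \neq k$, keep the $\ell = k$ term intact, collect the rest into $\Omega_{k+1}\Gamma(a_j)$, and then verify that the resulting exponentially small remainder survives passage through the logarithm and summation over $j$. For $S_{2k}$ with $k \in \{1,\ldots,m\}$, the range of $j$ gives $a_j = \Theta(n)$ and $\lambda_{j,k}$ essentially in $[1-\epsilon,1+\epsilon]$. For $\ell \neq k$, the ratio $\lambda_{j,\ell}/\lambda_{j,k} = r_\ell^{2b}/r_k^{2b}$ is a fixed positive constant, so the compatibility condition on $\epsilon$ forces $\lambda_{j,\ell}$ into a compact subset of $(0,1)$ when $\ell < k$ and into an interval of the form $[1+c,C]$ when $k < \ell \leq p$. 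In particular $\eta_{j,\ell}^{2}$ is uniformly bounded below by a positive constant, so Lemma~\ref{lemma: uniform}(i)--(ii) yield $\gamma(a_j,nr_\ell^{2b}) = \Gamma(a_j)\bigO(e^{-cn})$ for $\ell < k$ and $\gamma(a_j,nr_\ell^{2b}) = \Gamma(a_j)(1+\bigO(e^{-cn}))$ for $k < \ell \leq p$, while the $\ell = p+1$ term contributes $\Gamma(a_j)$ exactly. Summing over $\ell$ and using $\Omega_{k+1} = \sum_{\ell=k+1}^{p+1}\omega_\ell$ gives
\begin{align*}
\sum_{\ell=1}^{p+1}\omega_\ell \gamma(a_j,nr_\ell^{2b}) = \omega_k \gamma(a_j,z_k) + \Omega_{k+1}\Gamma(a_j) + \Gamma(a_j)\bigO(e^{-cn}).
\end{align*}

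To pass this estimate through the logarithm, I would use $\omega_k = \Omega_k - \Omega_{k+1}$ and $\Omega_k/\Omega_{k+1} = e^{u_k}$ to rewrite
\begin{align*}
\omega_k \gamma(a_j,z_k) + \Omega_{k+1}\Gamma(a_j) = \Omega_{k+1}\bigl[e^{u_k}\gamma(a_j,z_k) + (\Gamma(a_j) - \gamma(a_j,z_k))\bigr].
\end{align*}
For real $u$'s the bracketed expression is bounded below by $\min(e^{u_k},1)\Gamma(a_j) > 0$ and $\Omega_{k+1} > 0$, so by continuity there exists $\delta > 0$ such that the right-hand side has modulus at least $c\Gamma(a_j)$ uniformly for each $u_\ell$ in the $\delta$-disk around $x_\ell$. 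The logarithm is then well-defined, with error $\bigO(e^{-cn})$ per term; summing over the $\bigO(n)$ values of $j$ in the range of $S_{2k}$ leaves a total error $\bigO(e^{-c'n})$.

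Finally, I would partition the range $[j_{k,-},j_{k,+}]$ according to which of $I_1,I_2,I_3$ contains $\lambda_{j,k}$. Since $j \mapsto \lambda_{j,k}$ is strictly decreasing and continuously sweeps $[1-\epsilon,1+\epsilon]$ (up to $\bigO(1/n)$ boundary effects from the ceiling/floor in the definition of $j_{k,\pm}$), this partition matches exactly the three sums $S_{2k}^{(1)}, S_{2k}^{(2)}, S_{2k}^{(3)}$ of the statement. For $S_{2m+2}$ the same argument applies with $k = m+1$, the only difference being that the upper endpoint of the $j$-range is $n$ rather than $j_{m+1,+}$; a direct computation gives $\lambda_{n,m+1} = (1+\mathcal{R}/\sqrt{n})/(1+\alpha/n)$, which is precisely the lower endpoint of $I_{2'}$, accounting for the absence of an $I_1$ piece. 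The main technical delicacy is the uniform lower bound on $|\omega_k\gamma(a_j,z_k) + \Omega_{k+1}\Gamma(a_j)|$ needed to control the logarithm; once the factorization by $\Omega_{k+1}$ is set up and $\delta$ is taken small enough, everything else reduces to routine applications of Lemma~\ref{lemma: uniform}(i)--(ii).
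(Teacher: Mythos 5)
Your proposal is correct and follows essentially the same route as the paper: apply Lemma~\ref{lemma: uniform}(i)--(ii) to the $\ell\neq k$ terms (which the compatibility condition on $\epsilon$ places at $\lambda_{j,\ell}$ bounded away from $1$), pull the resulting $\bigO(e^{-cn})$ errors through the logarithm after shrinking $\delta$, and then split the $j$-range according to $I_{1},I_{2},I_{3}$ (resp.\ $I_{2'},I_{3}$ at the edge). One small remark: to pass the error through the principal logarithm one needs $\omega_{k}\gamma(a_{j},z_{k})/\Gamma(a_{j})+\Omega_{k+1}$ to stay bounded away from $(-\infty,0]$, not merely to have modulus bounded below; your continuity argument from the positive real values attained at real $u$'s (using $0\leq\gamma/\Gamma\leq 1$, uniformly in $j$ and $n$) delivers exactly that, so the argument is sound and matches the paper's.
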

\begin{proof}
By \eqref{def of S bulk}, \eqref{def of S edge} and Lemma \ref{lemma: uniform} (i)--(ii), we have
\begin{align}\label{lol13}
S_{2k} & = \sum_{j=j_{k,-}}^{j_{k,+}} \log  \Gamma(\tfrac{j+\alpha}{b}) + \sum_{j=j_{k,-}}^{j_{k,+}} \log \bigg( \sum_{\ell=1}^{k-1} \omega_{\ell} \bigO(e^{-cn}) +  \omega_{k} \frac{\gamma(\tfrac{j+\alpha}{b},nr_{k}^{2b})}{\Gamma(\tfrac{j+\alpha}{b})} + \sum_{\ell=k+1}^{p+1} \omega_{\ell}\big(1+\bigO(e^{-cn})\big) \bigg) ,
\end{align}
as $n \to + \infty$, where the above error terms are independent of $\omega_{1},\ldots,\omega_{p}$. Let us choose $\delta>0$ sufficiently small such that
\begin{align*}
\omega_{k} \frac{\gamma(\tfrac{j+\alpha}{b},nr_{k}^{2b})}{\Gamma(\tfrac{j+\alpha}{b})} + \Omega_{k+1}
\end{align*}
remains bounded away from the interval $(-\infty,0]$ as $n \to + \infty$ uniformly for $u_{1} \in \{z \in \mathbb{C}: |z-x_{1}|\leq \delta\},\ldots,u_{p} \in \{z \in \mathbb{C}: |z-x_{p}|\leq \delta\}$ and for $j \in \{j_{k,-},\ldots,j_{k,+}\}$. By \eqref{lol13},
\begin{align*}
S_{2k} & = \sum_{j=j_{k,-}}^{j_{k,+}} \log  \Gamma(\tfrac{j+\alpha}{b}) + \sum_{j=j_{k,-}}^{j_{k,+}} \log \bigg( \Omega_{k+1} +  \omega_{k} \frac{\gamma(\tfrac{j+\alpha}{b},nr_{k}^{2b})}{\Gamma(\tfrac{j+\alpha}{b})} \bigg) + \bigO(e^{-cn}),
\end{align*}
as $n \to + \infty$ uniformly for $u_{1} \in \{z \in \mathbb{C}: |z-x_{1}|\leq \delta\},\ldots,u_{p} \in \{z \in \mathbb{C}: |z-x_{p}|\leq \delta\}$. Note that $\lambda_{j,k} \in [1-\epsilon,1+\epsilon]$ for each $k \in \{1,\ldots,m\}$ and $j \in \{j_{k,-},\ldots,j_{k,+}\}$, while $\lambda_{j,m+1} \in [\frac{1+\frac{\mathcal{R}}{\sqrt{n}}}{1+\frac{\alpha}{n}},1+\epsilon]$ for each $j \in \{j_{m+1,-},\ldots,j_{m+1,+}\}$. The claim now follows directly by splitting $S_{2k}$, $k=1,\ldots,m$ into three parts and $S_{2m+2}$ into two parts.
\end{proof}
For $k=1,\ldots,m$, define $g_{k,-} := \lceil \frac{bnr_{k}^{2b}}{1+\frac{M}{\sqrt{n}}}-\alpha \rceil$, $g_{k,+} := \lfloor \frac{bnr_{k}^{2b}}{1-\frac{M}{\sqrt{n}}}-\alpha \rfloor$, and  
\begin{align*}
& \theta_{k,-}^{(n,M)} := g_{k,-} - \bigg( \frac{bn r_{k}^{2b}}{1+\frac{M}{\sqrt{n}}} - \alpha \bigg) = \bigg\lceil \frac{bn r_{k}^{2b}}{1+\frac{M}{\sqrt{n}}} - \alpha \bigg\rceil - \bigg( \frac{bn r_{k}^{2b}}{1+\frac{M}{\sqrt{n}}} - \alpha \bigg), \\
& \theta_{k,+}^{(n,M)} := \bigg( \frac{bn r_{k}^{2b}}{1-\frac{M}{\sqrt{n}}} - \alpha \bigg) - g_{k,+} = \bigg( \frac{bn r_{k}^{2b}}{1-\frac{M}{\sqrt{n}}} - \alpha \bigg) - \bigg\lfloor \frac{bn r_{k}^{2b}}{1-\frac{M}{\sqrt{n}}} - \alpha \bigg\rfloor.
\end{align*}
Clearly, $\theta_{k,-}^{(n,M)},\theta_{k,+}^{(n,M)} \in [0,1)$, and formally we can write
\begin{align}\label{sums lambda j}
& \sum_{j:\lambda_{j,k}\in I_{3}} = \sum_{j=j_{k,-}}^{g_{k,-}-1}, \qquad \sum_{j:\lambda_{j,k}\in I_{2}} = \sum_{j= g_{k,-}}^{g_{k,+}}, \qquad \sum_{j:\lambda_{j,k}\in I_{1}} = \sum_{j= g_{k,+}+1}^{j_{k,+}}. 
\end{align}
Define also $g_{m+1,-} := \lceil \frac{bnr_{m+1}^{2b}}{1+\frac{M}{\sqrt{n}}}-\alpha \rceil$ and
\begin{align*}
& \theta_{m+1,-}^{(n,M)} := g_{m+1,-} - \bigg( \frac{bnr_{m+1}^{2b}}{1+\frac{M}{\sqrt{n}}} - \alpha \bigg) = \bigg\lceil \frac{bnr_{m+1}^{2b}}{1+\frac{M}{\sqrt{n}}} - \alpha \bigg\rceil - \bigg( \frac{bn r_{m+1}^{2b}}{1+\frac{M}{\sqrt{n}}} - \alpha \bigg) \in [0,1).
\end{align*}
Formally, we have
\begin{align}\label{sums lambda j 2}
& \sum_{j:\lambda_{j,m+1}\in I_{3}} = \sum_{j=j_{m+1,-}}^{g_{m+1,-}-1}, \qquad \sum_{j:\lambda_{j,m+1}\in I_{2'}} = \sum_{j= g_{m+1,-}}^{n}. 
\end{align}
We collect in the following lemma some straightforward but useful asymptotic formulas.
\begin{lemma}
As $n \to + \infty$, we have
\begin{align}
\sum_{j=g_{k,+}+1}^{j_{k,+}} 1 & = j_{k,+}-g_{k,+} = j_{k,+}- \bigg( \frac{bn r_{k}^{2b}}{1-\frac{M}{\sqrt{n}}} - \alpha \bigg) + \theta_{k,+}^{(n,M)} \nonumber \\
& = j_{k,+}-br_{k}^{2b} n - bM r_{k}^{2b} \sqrt{n} - bM^{2}r_{k}^{2b}+\alpha+\theta_{k,+}^{(n,M)} - bM^{3}r_{k}^{2b}n^{-\frac{1}{2}} + \bigO(M^{4}n^{-1}), \label{sum jmax} \\
\sum_{j = j_{k,-}}^{g_{k,-}-1} 1 & = g_{k,-}-j_{k,-} = \bigg( \frac{bn r_{k}^{2b}}{1+\frac{M}{\sqrt{n}}} - \alpha \bigg)+\theta_{k,-}^{(n,M)} -j_{k,-} \nonumber \\
& = br_{k}^{2b}n - j_{k,-} - bMr_{k}^{2b}\sqrt{n} + bM^{2}r_{k}^{2b}-\alpha+\theta_{k,-}^{(n,M)} - bM^{3}r_{k}^{2b}n^{-\frac{1}{2}} + \bigO(M^{4}n^{-1}), \label{sum jmin} 
\end{align}
where \eqref{sum jmax} holds for all $k \in \{1,\ldots,m\}$ and \eqref{sum jmin} holds for all $k \in \{1,\ldots,m+1\}$. For $k=m+1$, \eqref{sum jmin} can be further expanded using $r_{m+1}=b^{-\frac{1}{2b}}(1+\frac{\mathcal{R}}{\sqrt{n}})^{\frac{1}{2b}}$, and this gives
\begin{align}
& \sum_{j = j_{m+1,-}}^{g_{m+1,-}-1} 1 = g_{m+1,-}-j_{m+1,-} = \bigg( \frac{bn r_{m+1}^{2b}}{1+\frac{M}{\sqrt{n}}} - \alpha \bigg)+\theta_{m+1,-}^{(n,M)} -j_{m+1,-} = n - j_{m+1,-} \nonumber \\
&  + (\mathcal{R} - M)\sqrt{n} + M(M-\mathcal{R})-\alpha+\theta_{m+1,-}^{(n,M)} - M^{2}(M-\mathcal{R})n^{-\frac{1}{2}} + \bigO(M^{4}n^{-1}) \label{sum jmin mp1}
\end{align}
as $n \to + \infty$.
\end{lemma}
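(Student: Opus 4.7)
The lemma is essentially a bookkeeping exercise: the first equality in each display is just $\sum_{j=a}^{b}1=b-a+1$ applied to the explicit indices, and the remaining expansions follow from expanding geometric series in powers of $M/\sqrt{n}$. I will proceed in that order.

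First, I would unfold the definitions of $g_{k,+}$ and $\theta_{k,+}^{(n,M)}$ to write
\begin{align*}
g_{k,+}=\frac{bnr_{k}^{2b}}{1-\frac{M}{\sqrt{n}}}-\alpha-\theta_{k,+}^{(n,M)},
\end{align*}
and similarly $g_{k,-}=\frac{bnr_{k}^{2b}}{1+\frac{M}{\sqrt{n}}}-\alpha+\theta_{k,-}^{(n,M)}$ and $g_{m+1,-}=\frac{bnr_{m+1}^{2b}}{1+\frac{M}{\sqrt{n}}}-\alpha+\theta_{m+1,-}^{(n,M)}$. This directly gives the middle expressions in \eqref{sum jmax}, \eqref{sum jmin}, and \eqref{sum jmin mp1}.

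Next I would apply the geometric series expansion
\begin{align*}
\frac{1}{1\mp\frac{M}{\sqrt{n}}}=1\pm\frac{M}{\sqrt{n}}+\frac{M^{2}}{n}\pm\frac{M^{3}}{n^{3/2}}+\bigO\!\left(\frac{M^{4}}{n^{2}}\right),\qquad n\to+\infty,
\end{align*}
which is valid since $M=M'\sqrt{\log n}=o(\sqrt{n})$. Multiplying by $bnr_{k}^{2b}$ and subtracting $\alpha\pm\theta_{k,\pm}^{(n,M)}$ yields the claimed expansions \eqref{sum jmax} and \eqref{sum jmin}; the error term $\bigO(M^{4}/n)$ corresponds to the $\bigO((\log n)^{2}/n)$ error appearing throughout the paper. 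Note the asymmetry of signs between the two formulas comes directly from the fact that $g_{k,+}$ involves $1-\frac{M}{\sqrt{n}}$ while $g_{k,-}$ involves $1+\frac{M}{\sqrt{n}}$.

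For \eqref{sum jmin mp1}, the one extra step is to substitute $r_{m+1}^{2b}=b^{-1}\bigl(1+\tfrac{\mathcal{R}}{\sqrt{n}}\bigr)$, which comes directly from the definition $r_{m+1}=b^{-\frac{1}{2b}}(1+\sqrt{2b}\,\mathfrak{s}/\sqrt{n})^{\frac{1}{2b}}$ together with $\mathcal{R}=\sqrt{2b}\,\mathfrak{s}$. Then
\begin{align*}
\frac{bnr_{m+1}^{2b}}{1+\frac{M}{\sqrt{n}}}=\frac{n(1+\tfrac{\mathcal{R}}{\sqrt{n}})}{1+\tfrac{M}{\sqrt{n}}},
\end{align*}
and multiplying the two expansions $(1+\mathcal{R}/\sqrt{n})$ and $(1-M/\sqrt{n}+M^{2}/n-M^{3}/n^{3/2}+\bigO(M^{4}/n^{2}))$ and collecting powers of $n^{-1/2}$ gives $n+(\mathcal{R}-M)\sqrt{n}+M(M-\mathcal{R})-M^{2}(M-\mathcal{R})n^{-1/2}+\bigO(M^{4}/n)$. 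Subtracting $\alpha-\theta_{m+1,-}^{(n,M)}$ and $j_{m+1,-}$ produces \eqref{sum jmin mp1}. There is no real obstacle here; the only point requiring mild care is to check that all the $\bigO(M^{4}/n)$ remainders are indeed uniform in the parameters $u_{1},\ldots,u_{p}$ (which is immediate since the left-hand sides do not depend on them at all).
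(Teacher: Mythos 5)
Your proposal is correct and is exactly the "short computation" the paper alludes to: unfolding the definitions of $g_{k,\pm}$ and $\theta_{k,\pm}^{(n,M)}$, expanding $\bigl(1\mp\tfrac{M}{\sqrt{n}}\bigr)^{-1}$ as a geometric series (valid since $M=M'\sqrt{\log n}=o(\sqrt{n})$), and for the edge case inserting $bnr_{m+1}^{2b}=n+\mathcal{R}\sqrt{n}$; all the stated coefficients and the $\bigO(M^{4}n^{-1})$ remainders check out.
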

\begin{proof}
The proof is a short computation.
\end{proof}
Our next task is to evaluate $\{S_{2k}^{(v)}\}_{k=1,\ldots,m}^{v=1,2,3}$ and $\{S_{2m+2}^{(v)}\}^{v=2,3}$. These sums are rather delicate to analyze and involve the asymptotics of $\gamma(a,z)$ in the regime $a \to + \infty$, $z \to +\infty$ with $\lambda=\frac{z}{a} \in [1-\epsilon,1+\epsilon]$.

\begin{lemma}\label{lemma:S2kp1p}
For any $k\in \{1,\ldots,m\}$ and any $x_{1},\ldots,x_{p} \in \mathbb{R}$, there exists $\delta > 0$ such that
\begin{align*}
S_{2k}^{(1)} = & \; \sum_{j= g_{k,+}+1}^{j_{k,+}} \log \Gamma(\tfrac{j+\alpha}{b}) + \Big(j_{k,+}-br_{k}^{2b} n - bM r_{k}^{2b} \sqrt{n} - bM^{2}r_{k}^{2b} \\
&  +\alpha+\theta_{k,+}^{(n,M)} - bM^{3}r_{k}^{2b}n^{-\frac{1}{2}}\Big)  \log  \Omega_{k+1} + \bigO(M^{4}n^{-1}),
\end{align*}
as $n \to +\infty$ uniformly for $u_{1} \in \{z \in \mathbb{C}: |z-x_{1}|\leq \delta\},\ldots,u_{p} \in \{z \in \mathbb{C}: |z-x_{p}|\leq \delta\}$. 
\end{lemma}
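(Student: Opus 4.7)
The plan is to treat $S_{2k}^{(1)}$ as a perturbation of $\sum_{j=g_{k,+}+1}^{j_{k,+}}\log\bigl(\Omega_{k+1}\Gamma(\tfrac{j+\alpha}{b})\bigr)$, using Lemma \ref{lemma: uniform}(ii) to show that the incomplete gamma contributions are super-polynomially small on the index range corresponding to $I_{1}=[1-\epsilon,1-\tfrac{M}{\sqrt{n}})$.

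First I would factor out $\Gamma(a_{j})$ in each summand and write
\begin{align*}
S_{2k}^{(1)} = \sum_{j= g_{k,+}+1}^{j_{k,+}}\log\Gamma(\tfrac{j+\alpha}{b}) + \sum_{j= g_{k,+}+1}^{j_{k,+}} \log\Bigl(\Omega_{k+1} + \omega_{k}\tfrac{\gamma(a_{j},z_{k})}{\Gamma(a_{j})}\Bigr).
\end{align*}
For $j$ in this range, $\lambda_{j,k}\in[1-\epsilon,1-\tfrac{M}{\sqrt{n}})$ so Lemma \ref{lemma: uniform}(ii) applies and yields $\gamma(a_{j},z_{k})/\Gamma(a_{j}) = \mathcal{O}\bigl(e^{-a_{j}\eta_{j,k}^{2}/2}\bigr)$ uniformly. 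A short computation using $\eta^{2}\sim(\lambda-1)^{2}$ near $\lambda=1$ and $a_{j}=(j+\alpha)/b$, $z_{k}=nr_{k}^{2b}$, gives
\begin{align*}
\frac{a_{j}\eta_{j,k}^{2}}{2} \;\geq\; \frac{a_{g_{k,+}+1}\,\eta_{g_{k,+}+1,k}^{2}}{2} \;\sim\; \tfrac{1}{2}\,br_{k}^{2b} M^{2} \;=\; \tfrac{1}{2}\,br_{k}^{2b} (M')^{2}\log n,
\end{align*}
which I would verify by expanding $(z_{k}-a_{j})^{2}/a_{j}$ at the endpoint $\lambda=1-M/\sqrt{n}$. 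Monotonicity of $a_{j}\eta_{j,k}^{2}$ in $j$ (which can be checked directly from \eqref{def etajl}) then shows $e^{-a_{j}\eta_{j,k}^{2}/2}\leq n^{-br_{k}^{2b}(M')^{2}/2}$ for every $j$ in the range.

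Next I would choose $\delta>0$ so small that $\Omega_{k+1}+\omega_{k}\gamma(a_{j},z_{k})/\Gamma(a_{j})$ stays bounded away from $(-\infty,0]$ on the parameter disks (this is possible by the uniform bound above together with the fact that $\Omega_{k+1}$ is close to a positive real). Then $\log(\Omega_{k+1}+\omega_{k}\gamma/\Gamma)=\log\Omega_{k+1}+\log\bigl(1+\omega_{k}\Omega_{k+1}^{-1}\gamma/\Gamma\bigr)$ and the second term is $\mathcal{O}(n^{-br_{k}^{2b}(M')^{2}/2})$ pointwise. Summing over at most $\mathcal{O}(M\sqrt{n})$ values of $j$ (using \eqref{sum jmax}) gives a total contribution of $\mathcal{O}\bigl(M\sqrt{n}\cdot n^{-br_{k}^{2b}(M')^{2}/2}\bigr)$, which becomes $\mathcal{O}(n^{-N})$ for any prescribed $N$ by taking $M'$ large enough; in particular it is absorbed into the claimed $\mathcal{O}(M^{4}/n)$.

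Finally, substituting \eqref{sum jmax} for the number of terms $j_{k,+}-g_{k,+}$ produces exactly the stated coefficient of $\log\Omega_{k+1}$, and the remaining expansion error $\mathcal{O}(M^{4}/n)$ is precisely the one appearing in \eqref{sum jmax}. The only delicate point I anticipate is verifying the lower bound $a_{j}\eta_{j,k}^{2}/2\gtrsim (M')^{2}\log n$ uniformly over the range $I_{1}$ with an explicit constant large enough to dominate the $\mathcal{O}(M\sqrt{n})$ number of summands; this amounts to a careful use of \eqref{def etajl} together with the definitions of $g_{k,+}$ and $M=M'\sqrt{\log n}$, and is where the parameter $M'$ must be chosen sufficiently large.
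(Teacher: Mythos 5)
Your overall strategy coincides with the paper's: split off $\sum\log\Gamma(\tfrac{j+\alpha}{b})$, show that the ratio $\gamma(a_{j},z_{k})/\Gamma(a_{j})$ is super-polynomially small uniformly over the range corresponding to $I_{1}$ (because there $\lambda_{j,k}\leq 1-\tfrac{M}{\sqrt{n}}$ with $M=M'\sqrt{\log n}$ and $M'$ large), so that each summand contributes $\log\Omega_{k+1}$ up to a negligible error, and then insert \eqref{sum jmax} for the number of terms. Your lower bound $a_{j}\eta_{j,k}^{2}/2\gtrsim \tfrac12 br_{k}^{2b}(M')^{2}\log n$ at the endpoint nearest $\lambda=1$, and the choice of $\delta$ keeping $\Omega_{k+1}+\omega_{k}\gamma/\Gamma$ away from $(-\infty,0]$, are exactly the right quantitative points.

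There is, however, one step that does not stand as written: you justify $\gamma(a_{j},z_{k})/\Gamma(a_{j})=\bigO(e^{-a_{j}\eta_{j,k}^{2}/2})$ by Lemma \ref{lemma: uniform}(ii), but that item is only stated uniformly for $\lambda$ in \emph{compact} subsets of $(0,1)$, i.e.\ bounded away from $1$. On $I_{1}=[1-\epsilon,1-\tfrac{M}{\sqrt{n}})$ the value of $\lambda_{j,k}$ approaches $1$ as $n\to+\infty$ (and this is precisely the portion of the range where the bound is weakest), so part (ii) does not apply uniformly there. The correct tool — and the one the paper uses in this lemma — is the uniform representation $\gamma(a,z)/\Gamma(a)=\tfrac12\mathrm{erfc}(-\eta\sqrt{a/2})-R_{a}(\eta)$ together with the uniform expansion \eqref{asymp of Ra} of $R_{a}(\eta)$ (valid uniformly for $z\in[0,\infty)$, hence for $\lambda$ arbitrarily close to $1$): one bounds the $\mathrm{erfc}$ term via $-\eta_{j,k}\sqrt{a_{j}/2}\geq \tfrac{Mr_{k}^{b}}{\sqrt{2}}+\bigO(\tfrac{M^{2}}{\sqrt{n}})$ and the $R_{a_{j}}(\eta_{j,k})$ term via \eqref{asymp of Ra}, each being $\bigO(e^{-r_{k}^{2b}M^{2}/4})=\bigO(n^{-10})$ for $M'$ large. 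With this repair your argument goes through verbatim. A second, harmless slip: the number of summands with $\lambda_{j,k}\in I_{1}$ is of order $\epsilon n$, not $\bigO(M\sqrt{n})$ (it is the $I_{2}$ range that has $\bigO(M\sqrt{n})$ terms); since $\bigO(n)\cdot n^{-cr_{k}^{2b}(M')^{2}}$ is still $\bigO(n^{-N})$ for $M'$ large, the conclusion is unaffected.
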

\begin{proof}
Since $I_{1}=[1-\epsilon,1-\frac{M}{\sqrt{n}})$, by \eqref{asymp prelim of S2kpvp} and Lemma \ref{lemma: uniform} we have
\begin{align*}
S_{2k}^{(1)} & = \sum_{j:\lambda_{j,k}\in I_{1}}  \log \bigg( \omega_{k} \gamma(a_{j},z_{k}) +   \Omega_{k+1} \Gamma(\tfrac{j+\alpha}{b})  \bigg) \\
& = \sum_{j:\lambda_{j,k}\in I_{1}} \bigg[ \log \Gamma(\tfrac{j+\alpha}{b}) + \log \bigg( \omega_{k} \bigg[ \frac{1}{2}\mathrm{erfc}\Big(-\eta_{j,k} \sqrt{a_{j}/2}\Big) - R_{a_{j}}(\eta_{j,k}) \bigg] + \Omega_{k+1}  \bigg)\bigg],
\end{align*}
where $\eta_{j,k} = (\lambda_{j,k}-1)\sqrt{\frac{2 (\lambda_{j,k}-1-\ln \lambda_{j,k})}{(\lambda_{j,k}-1)^{2}}}$. Since 
\begin{align}
& \eta_{j,k} = \lambda_{j,k}-1 +\bigO((\lambda_{j,k}-1)^{2}) \leq -\tfrac{M}{\sqrt{n}} + \bigO(\tfrac{M^{2}}{n}), & & \mbox{as } n \to \infty, \label{lol10} \\
& -\eta_{j,k} \sqrt{a_{j}/2} \geq  \tfrac{Mr_{k}^{b}}{\sqrt{2}} + \bigO(\tfrac{M^{2}}{\sqrt{n}}), & & \mbox{as } n \to \infty, \label{lol11}
\end{align}
uniformly for $j\in \{j: \lambda_{j,k} \in I_{1}\}$, by choosing $M'$ large enough we have
\begin{align*}
& R_{a_{j}}(\eta_{j,k}) =  \bigO(e^{-\frac{r_{k}^{2b}M^{2}}{4}}) = \bigO(n^{-10}), & & \frac{1}{2}\mathrm{erfc}\Big(-\eta_{j,k} \sqrt{a_{j}/2}\Big) = \bigO(e^{-\frac{r_{k}^{2b}M^{2}}{4}})= \bigO(n^{-10}),
\end{align*}
as $n \to + \infty$ uniformly for $j\in \{j: \lambda_{j,k} \in I_{1}\}$, and thus (using also \eqref{sums lambda j})
\begin{align}
S_{2k}^{(1)} & = \sum_{j= g_{k,+}+1}^{j_{k,+}} \bigg[ \log \Gamma(\tfrac{j+\alpha}{b}) + \log \Omega_{k+1} \bigg] + \bigO(n^{-9}) \nonumber \\
& = \sum_{j= g_{k,+}+1}^{j_{k,+}} \log \Gamma(\tfrac{j+\alpha}{b}) + (j_{k,+}-g_{k,+}) \log  \Omega_{k+1}  + \bigO(n^{-9}), \label{lol12}
\end{align}
as $n \to + \infty$. Since the error terms in \eqref{lol10} and \eqref{lol11} are clearly independent of $\omega_{1},\ldots,\omega_{p}$, the asymptotics \eqref{lol12} are uniform for $u_{1} \in \{z \in \mathbb{C}: |z-x_{1}|\leq \delta\},\ldots,u_{p} \in \{z \in \mathbb{C}: |z-x_{p}|\leq \delta\}$. The claim now follows after inserting \eqref{sum jmax} in \eqref{lol12}.
\end{proof}

\begin{lemma}\label{lemma:S2kp3p}
For any $k\in \{1,\ldots,m+1\}$ and any $x_{1},\ldots,x_{p} \in \mathbb{R}$, there exists $\delta > 0$ such that
\begin{align*}
S_{2k}^{(3)} = & \; \sum_{j=j_{k,-}}^{g_{k,-}-1} \log  \Gamma(\tfrac{j+\alpha}{b}) + \Big( br_{k}^{2b}n - j_{k,-} - bMr_{k}^{2b}\sqrt{n} + bM^{2}r_{k}^{2b} \\
& -\alpha+\theta_{k,-}^{(n,M)} - bM^{3}r_{k}^{2b}n^{-\frac{1}{2}} \Big) \log  \Omega_{k} + \bigO(M^{4}n^{-1}),
\end{align*}
as $n \to +\infty$ uniformly for $u_{1} \in \{z \in \mathbb{C}: |z-x_{1}|\leq \delta\},\ldots,u_{p} \in \{z \in \mathbb{C}: |z-x_{p}|\leq \delta\}$. For $k=m+1$, the above formula can be further expanded using $r_{m+1}=b^{-\frac{1}{2b}}(1+\frac{\mathcal{R}}{\sqrt{n}})^{\frac{1}{2b}}$, and this gives
\begin{align*}
S_{2m+2}^{(3)} = & \; \sum_{j=j_{m+1,-}}^{g_{m+1,-}-1} \log  \Gamma(\tfrac{j+\alpha}{b}) + \Big( n - j_{m+1,-} + (\mathcal{R} - M)\sqrt{n} + M(M-\mathcal{R}) \nonumber \\
&  -\alpha+\theta_{m+1,-}^{(n,M)} - M^{2}(M-\mathcal{R})n^{-\frac{1}{2}} \Big) \log \Omega_{m+1} + \bigO(M^{4}n^{-1}),
\end{align*}
as $n \to + \infty$ uniformly for $u_{1} \in \{z \in \mathbb{C}: |z-x_{1}|\leq \delta\},\ldots,u_{p} \in \{z \in \mathbb{C}: |z-x_{p}|\leq \delta\}$.
\end{lemma}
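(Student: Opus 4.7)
The plan is to mirror the argument used for $S_{2k}^{(1)}$ in the preceding lemma, adapted to the regime $I_{3} = (1+\tfrac{M}{\sqrt{n}},1+\epsilon]$ where $\lambda_{j,k} > 1$. The key difference is that now $\eta_{j,k} > 0$, so the argument $-\eta_{j,k}\sqrt{a_{j}/2}$ of $\mathrm{erfc}$ is large and negative, making $\tfrac{1}{2}\mathrm{erfc}(-\eta_{j,k}\sqrt{a_{j}/2}) \to 1$ rather than $\to 0$. Consequently, $\gamma(a_{j},z_{k})/\Gamma(a_{j}) = 1 + \bigO(n^{-10})$ instead of being exponentially small, and the prefactor $\Omega_{k+1}$ that appeared in the $S_{2k}^{(1)}$ analysis will be replaced here by $\omega_{k} + \Omega_{k+1} = \Omega_{k}$.

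First I would invoke Lemma \ref{lemma: uniform} to write, for $j$ with $\lambda_{j,k} \in I_{3}$,
\[
\gamma(a_{j},z_{k}) = \Gamma(\tfrac{j+\alpha}{b})\bigl(\tfrac{1}{2}\mathrm{erfc}(-\eta_{j,k}\sqrt{a_{j}/2}) - R_{a_{j}}(\eta_{j,k})\bigr).
\]
A Taylor expansion of \eqref{def etajl} yields $\eta_{j,k} = (\lambda_{j,k}-1) + \bigO((\lambda_{j,k}-1)^{2}) \geq \tfrac{M}{\sqrt{n}}(1+o(1))$, so that $\eta_{j,k}\sqrt{a_{j}/2} \geq \tfrac{M r_{k}^{b}}{\sqrt{2}} + \bigO(M^{2}/\sqrt{n})$. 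On choosing the integer $M'$ large enough (recall $M = M'\sqrt{\log n}$), both $\tfrac{1}{2}\mathrm{erfc}(\eta_{j,k}\sqrt{a_{j}/2})$ and $R_{a_{j}}(\eta_{j,k})$ are $\bigO(n^{-10})$ uniformly for $j$ with $\lambda_{j,k} \in I_{3}$, with errors independent of $\omega_{1},\ldots,\omega_{p}$.

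Thus the summand becomes $\log \Gamma(\tfrac{j+\alpha}{b}) + \log \Omega_{k} + \bigO(n^{-10})$, provided $\delta>0$ is taken small enough that $\Omega_{k}$ remains bounded away from $(-\infty,0]$ uniformly in the given parameter neighborhoods. Summing the $\bigO(\sqrt{n})$ many terms from $j=j_{k,-}$ to $j=g_{k,-}-1$ gives
\[
S_{2k}^{(3)} = \sum_{j=j_{k,-}}^{g_{k,-}-1} \log \Gamma(\tfrac{j+\alpha}{b}) + (g_{k,-}-j_{k,-}) \log \Omega_{k} + \bigO(n^{-9}),
\]
and inserting \eqref{sum jmin} for $g_{k,-}-j_{k,-}$ produces the first claim. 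The second claim (case $k = m+1$) follows upon substituting $r_{m+1} = b^{-\frac{1}{2b}}(1+\mathcal{R}/\sqrt{n})^{\frac{1}{2b}}$, so that $b r_{m+1}^{2b} = 1 + \mathcal{R}/\sqrt{n}$; equivalently, one invokes \eqref{sum jmin mp1} directly in place of \eqref{sum jmin}.

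There is no serious obstacle: the proof is a near-verbatim repetition of the argument for $S_{2k}^{(1)}$, with the roles of $\Omega_{k+1}$ and $\Omega_{k}$ interchanged since $\mathrm{erfc}$ now lands near $2$ rather than $0$. The only care needed is to verify uniformity in $u_{1},\ldots,u_{p}$, and to choose $M'$ large enough that the exponentially small contributions from $R_{a_{j}}$ and the tail of $\mathrm{erfc}$ are absorbed into the announced error $\bigO(M^{4}n^{-1})$.
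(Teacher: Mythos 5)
Your proposal is correct and follows essentially the same route as the paper's proof: apply Lemma \ref{lemma: uniform} to each term with $\lambda_{j,k}\in I_{3}$, note that $\tfrac{1}{2}\mathrm{erfc}\big({-\eta_{j,k}\sqrt{a_{j}/2}}\big)=1-\bigO(n^{-10})$ and $R_{a_{j}}(\eta_{j,k})=\bigO(n^{-10})$ once $M'$ is large, so each summand becomes $\log\Gamma(\tfrac{j+\alpha}{b})+\log\Omega_{k}+\bigO(n^{-10})$, and then conclude via \eqref{sum jmin} and \eqref{sum jmin mp1}. The only slip is your count of the summation range: the number of indices with $\lambda_{j,k}\in I_{3}$ is $\bigO(n)$ rather than $\bigO(\sqrt{n})$, but this is harmless since it still yields the cumulative error $\bigO(n^{-9})$ you state, well within the announced $\bigO(M^{4}n^{-1})$.
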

\begin{proof}
The proof is similar to Lemma \ref{lemma:S2kp1p}. Recall that $I_{3} = (1+\frac{M}{\sqrt{n}},1+\epsilon]$. Thus by \eqref{asymp prelim of S2kpvp}, \eqref{asymp prelim of S2mp2pvp} and Lemma \ref{lemma: uniform}, we have
\begin{align*}
S_{2k}^{(3)} & = \sum_{j:\lambda_{j,k}\in I_{3}} \bigg[ \log \Gamma(\tfrac{j+\alpha}{b}) + \log \bigg( \omega_{k} \bigg[ \frac{1}{2}\mathrm{erfc}\Big(-\eta_{j,k} \sqrt{a_{j}/2}\Big) - R_{a_{j}}(\eta_{j,k}) \bigg] +   \Omega_{k+1}  \bigg)\bigg],
\end{align*}
where $\eta_{j,k} = (\lambda_{j,k}-1)\sqrt{\frac{2 (\lambda_{j,k}-1-\ln \lambda_{j,k})}{(\lambda_{j,k}-1)^{2}}}$. Using the fact that 
\begin{align*}
& \eta_{j,k} = \lambda_{j,k}-1 +\bigO((\lambda_{j,k}-1)^{2}) \geq \tfrac{M}{\sqrt{n}} + \bigO(\tfrac{M^{2}}{n}), & & \mbox{as } n \to \infty, \\
& -\eta_{j,k} \sqrt{a_{j}/2} \leq - \tfrac{M r_{k}^{b}}{\sqrt{2}} + \bigO(\tfrac{M^{2}}{\sqrt{n}}), & & \mbox{as } n \to \infty,
\end{align*}
uniformly for $j\in \{j:\lambda_{j,k}\in I_{3}\}$, by choosing $M'$ large enough we have
\begin{align*}
& R_{a_{j}}(\eta_{j,k}) = \bigO(e^{-\frac{r_{k}^{2b}M^{2}}{4}}) = \bigO(n^{-10}), & & \frac{1}{2}\mathrm{erfc}\Big(-\eta_{j,k} \sqrt{a_{j}/2}\Big) = 1-\bigO(e^{-\frac{r_{k}^{2b}M^{2}}{4}}) = 1-\bigO(n^{-10}),
\end{align*}
as $n \to + \infty$ uniformly for $j\in \{j:\lambda_{j,k}\in I_{3}\}$, and thus 
\begin{align*}
S_{2k}^{(3)} & = \sum_{j=j_{k,-}}^{g_{k,-}-1} \bigg[  \log \Gamma(\tfrac{j+\alpha}{b}) + \log \Omega_{k} \bigg] +\bigO(n^{-9}) \\
& = \sum_{j=j_{k,-}}^{g_{k,-}-1} \log  \Gamma(\tfrac{j+\alpha}{b}) + (g_{k,-}-j_{k,-})  \log  \Omega_{k} +\bigO(n^{-9}),
\end{align*}
where we have also used \eqref{sums lambda j} and \eqref{sums lambda j 2} for the first equality. The claim now follows directly from \eqref{sum jmin} and \eqref{sum jmin mp1}.
\end{proof}
Our next goal is to obtain the large $n$ asymptotics of the sums $\{S_{2k}^{(2)}\}_{k=1,\ldots,m+1}$. This is the most technical part of the proof of Theorem \ref{thm:main thm}. Let us define
\begin{align*}
& M_{j,k} := \sqrt{n}(\lambda_{j,k}-1), & & \mbox{for all } k \in \{1,\ldots,m\} \mbox{ and } j \in \{j: \lambda_{j,k} \in I_{2}\}=\{g_{k,-},\ldots,g_{k,+}\}, \\
& M_{j,m+1} := \sqrt{n}(\lambda_{j,m+1}-1), & & \mbox{for all } j \in \{j: \lambda_{j,m+1} \in I_{2'}\}=\{g_{m+1,-},\ldots,n\}.
\end{align*}
By definition of $I_{2}$, as $n \to +\infty$ the points $M_{g_{k,-},k}, \ldots, M_{g_{k,+},k}$ tend to spread all over the interval $[-M,M]$ for each $k \in \{1,\ldots,m\}$. Similarly, by definition of $I_{2'}$, as $n \to +\infty$ the points $M_{g_{m+1,-},m+1}, \ldots, M_{n,m+1}$ tend to spread all over the interval $[\smash{\frac{\mathcal{R}n-\alpha \sqrt{n}}{n+\alpha}},M]$. Note also that $M_{j,k}$ decreases as $j$ increases. The following lemma will be useful to obtain the large $n$ asymptotics of $\{\smash{S_{2k}^{(2)}}\}_{k=1,\ldots,m+1}$.
\begin{lemma}\label{lemma:Riemann sum}
Let $f \in C^{3}(\mathbb{R})$ be a function such that $|f|,|f'|,|f''|,|f'''|$ are bounded. For any $k \in \{1,\ldots,m\}$, as $n \to + \infty$ we have
\begin{align}
& \sum_{j=g_{k,-}}^{g_{k,+}}f(M_{j,k}) = br_{k}^{2b} \int_{-M}^{M} f(t) dt \; \sqrt{n} - 2 b r_{k}^{2b} \int_{-M}^{M} tf(t) dt + \bigg( \frac{1}{2}-\theta_{k,-}^{(n,M)} \bigg)f(M)+ \bigg( \frac{1}{2}-\theta_{k,+}^{(n,M)} \bigg)f(-M) \nonumber \\
& + \frac{1}{\sqrt{n}}\bigg[ 3br_{k}^{2b} \int_{-M}^{M}t^{2}f(t)dt + \bigg( \frac{1}{12}+\frac{\theta_{k,-}^{(n,M)}(\theta_{k,-}^{(n,M)}-1)}{2} \bigg)\frac{f'(M)}{br_{k}^{2b}} - \bigg( \frac{1}{12}+\frac{\theta_{k,+}^{(n,M)}(\theta_{k,+}^{(n,M)}-1)}{2} \bigg)\frac{f'(-M)}{br_{k}^{2b}} \bigg] \nonumber \\
& + \bigO(M^{4}n^{-1}). \label{sum f asymp 2}
\end{align}
Also, as $n \to + \infty$, we have
\begin{align}
& \sum_{j=g_{m+1,-}}^{n}f(M_{j,m+1}) = \int_{\mathcal{R}}^{M} f(t) dt \; (\sqrt{n}+\mathcal{R}) - 2 \int_{\mathcal{R}}^{M} tf(t) dt + \bigg( \frac{1}{2}-\theta_{m+1,-}^{(n,M)} \bigg)f(M) + \bigg( \frac{1}{2} + \alpha \bigg)f(\mathcal{R}) \nonumber \\
& + \frac{1}{\sqrt{n}}\bigg[ - 2 \mathcal{R} \int_{\mathcal{R}}^{M} tf(t) dt + 3 \int_{\mathcal{R}}^{M}t^{2}f(t)dt + \bigg( \frac{1}{12}+\frac{\theta_{m+1,-}^{(n,M)}(\theta_{m+1,-}^{(n,M)}-1)}{2} \bigg)f'(M) \nonumber \\
& \hspace{1.2cm} - \frac{1+6\alpha+6\alpha^{2}}{12}f'(\mathcal{R}) \bigg] + \bigO(M^{4}n^{-1}). \label{sum f asymp 2 edge}
\end{align}
\end{lemma}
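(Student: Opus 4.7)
The plan is to apply the Euler--Maclaurin summation formula to $\sum_{j} h(j)$ with $h(j):=f(M_{j,k})$, and then to change variables from $j$ to $t$ in the resulting integral. Since $t(j)=\sqrt{n}\bigl(\tfrac{bnr_k^{2b}}{j+\alpha}-1\bigr)$ has $t'(j)=\bigO(n^{-1/2})$ on the range of summation, and $f\in C^3$ has bounded derivatives, one has $h^{(r)}(j)=\bigO(n^{-r/2})$ for $r=1,2,3$. Euler--Maclaurin through the first Bernoulli correction reads
\[
\sum_{j=a}^{b} h(j) = \int_a^b h(j)\,dj + \tfrac{h(a)+h(b)}{2} + \tfrac{1}{12}\bigl[h'(b)-h'(a)\bigr] + R,\qquad R=\bigO(Mn^{-1}),
\]
with the remainder bounded by $(b-a)\|h'''\|_\infty=\bigO(M\sqrt{n}\cdot n^{-3/2})$, which sits well within the target $\bigO(M^4 n^{-1})$.

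For the bulk formula \eqref{sum f asymp 2} I set $a=g_{k,-}$, $b=g_{k,+}$, and change variables via $j+\alpha=br_k^{2b}n/(1+t/\sqrt{n})$, with Jacobian $dj=-\sqrt{n}\,br_k^{2b}(1+t/\sqrt{n})^{-2}\,dt$. Expanding $(1+t/\sqrt{n})^{-2}=1-2t/\sqrt{n}+3t^2/n+\bigO(|t|^3 n^{-3/2})$ term by term, and using $M_{\pm}:=M_{g_{k,\pm},k}=\pm M\mp\tfrac{\theta_{k,\pm}^{(n,M)}}{br_k^{2b}\sqrt{n}}+\bigO(Mn^{-1})$ together with $t'(g_{k,\pm})=-(br_k^{2b}\sqrt{n})^{-1}(1+\bigO(M/\sqrt{n}))$, one reads off the three integrals $\int_{-M}^M f,\;\int_{-M}^M tf,\;\int_{-M}^M t^2 f$ with the stated coefficients $\sqrt{n}\,br_k^{2b}$, $-2br_k^{2b}$, $3br_k^{2b}/\sqrt{n}$. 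The endpoint coefficient $\tfrac12-\theta_{k,-}^{(n,M)}$ of $f(M)$ combines the trapezoid contribution $\tfrac12 h(g_{k,-})\to\tfrac12 f(M)$ with $-\theta_{k,-}^{(n,M)}f(M)$ arising from the mismatch between $M$ and $M_{-}$ (analogously for $f(-M)$). At order $n^{-1/2}$, the coefficient of $f'(M)/(br_k^{2b}\sqrt{n})$ is the sum of three pieces: $-\tfrac{\theta_{k,-}^{(n,M)}}{2}$ from the first-order Taylor of $\tfrac12 h(g_{k,-})$, $+\tfrac{(\theta_{k,-}^{(n,M)})^2}{2}$ from the second-order Taylor of the boundary segment $-\sqrt{n}\,br_k^{2b}\int_{M_{-}}^{M}f(t)(1-2t/\sqrt{n}+\ldots)\,dt$, and $+\tfrac{1}{12}$ from the Euler--Maclaurin derivative term, which combine to $\tfrac{1}{12}+\tfrac{\theta_{k,-}^{(n,M)}(\theta_{k,-}^{(n,M)}-1)}{2}$. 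The $f'(-M)$ coefficient is obtained analogously with opposite overall sign.

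For the edge formula \eqref{sum f asymp 2 edge} I take $a=g_{m+1,-}$, $b=n$. Two new features appear: the upper endpoint $j=n$ is an exact integer, so no $\theta_{+}$ appears, but the corresponding $t$-bound is $M_{n,m+1}=\mathcal{R}-\alpha/\sqrt{n}-\mathcal{R}\alpha/n+\bigO(n^{-3/2})$, shifted from $\mathcal{R}$ by an $\alpha$-dependent amount; and the prefactor $\sqrt{n}\,br_{m+1}^{2b}=\sqrt{n}+\mathcal{R}$ gives the expansion $\sqrt{n}\,br_{m+1}^{2b}(1+t/\sqrt{n})^{-2}=\sqrt{n}+(\mathcal{R}-2t)+(3t^2-2\mathcal{R}t)/\sqrt{n}+\bigO(|t|^3/n)$, which produces directly the stated $(\sqrt{n}+\mathcal{R})\int_{\mathcal{R}}^{M}f$ prefactor and the $\tfrac{1}{\sqrt{n}}\int_{\mathcal{R}}^{M}(3t^2-2\mathcal{R}t)f$ term. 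Proceeding identically to the bulk, the coefficient $\tfrac12+\alpha$ of $f(\mathcal{R})$ combines the trapezoid $\tfrac12 f(\mathcal{R})$ with $\alpha f(\mathcal{R})$ from extending the $t$-integration from $M_{n,m+1}$ up to $\mathcal{R}$. The main technical obstacle is verifying the coefficient $-(1+6\alpha+6\alpha^2)/12$ of $f'(\mathcal{R})/\sqrt{n}$: summing $-\tfrac{1}{12}$ (Euler--Maclaurin derivative term), $-\tfrac{\alpha}{2}$ (first-order Taylor of $\tfrac12 h(n)=\tfrac12 f(M_{n,m+1})$), and $-\tfrac{\alpha^2}{2}$ (second-order Taylor of $\sqrt{n}\int_{M_{n,m+1}}^{\mathcal{R}}f(t)(1+(\mathcal{R}-2t)/\sqrt{n}+\ldots)\,dt$, after the $\mathcal{R}\alpha/\sqrt{n}\,f(\mathcal{R})$ pieces from the two separate contributions cancel) yields precisely $-(1+6\alpha+6\alpha^2)/12$.
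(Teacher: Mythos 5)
Your argument is correct but takes a genuinely different route from the paper. The paper never invokes Euler--Maclaurin: it writes $\int_{M_{g_{k,+}}}^{M_{g_{k,-}}}f$ as the telescoping sum of $\int_{M_j}^{M_{j-1}}f$, Taylor-expands each piece to third order using the explicit gap $M_{j-1}-M_j$, and then solves for $\sum_j f(M_j)$ by bootstrapping: the auxiliary sums $\sum_j M_jf(M_j)$, $\sum_j f'(M_j)$, $\sum_j M_j^2f(M_j)$, $\sum_j f''(M_j)$ appearing at subleading order are re-expanded by the same scheme and substituted back, and the endpoint corrections are inserted at the end, much as you do. Your Euler--Maclaurin-plus-change-of-variables argument reaches the same expansion more directly (one summation formula, one Jacobian expansion), and your use of the $h'''$-form of the remainder, $\bigO\big((g_{k,+}-g_{k,-})\|h'''\|_\infty\big)=\bigO(Mn^{-1})$, is exactly what makes the $C^3$ hypothesis suffice; the $h''$-form would only give $\bigO(Mn^{-1/2})$. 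Your endpoint bookkeeping, including the combinations $\frac{1}{12}+\frac{\theta(\theta-1)}{2}$ and $-\frac{1+6\alpha+6\alpha^2}{12}$ at $t=\mathcal{R}$ (where you correctly keep the $-\mathcal{R}\alpha/n$ term of $M_{n,m+1}$ and observe the cancellation of the two $\mathcal{R}\alpha f(\mathcal{R})/\sqrt{n}$ contributions), reproduces the paper's coefficients.

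One step needs tightening. The bulk endpoint expansion you quote, $M_{g_{k,\pm},k}=\pm M\mp\theta_{k,\pm}^{(n,M)}/(br_k^{2b}\sqrt{n})+\bigO(Mn^{-1})$, is one order too crude for the stated error: the boundary segment is multiplied by $\sqrt{n}\,br_k^{2b}$, so an $\bigO(Mn^{-1})$ uncertainty in the endpoint produces an uncontrolled $\bigO(Mn^{-1/2})$ term, which exceeds the target $\bigO(M^4n^{-1})$ since $M=M'\sqrt{\log n}$. You must retain the next term, $\mp\,2M\theta_{k,\pm}^{(n,M)}/(br_k^{2b}n)$, as the paper does; its contribution $\pm\,2M\theta_{k,\pm}^{(n,M)}f(\pm M)/\sqrt{n}$ then cancels against the $-2t/\sqrt{n}$ part of the Jacobian in the boundary segment (equivalently, against the endpoint correction of the $tf$-integral) --- the same mechanism you spell out at the edge for the $\mathcal{R}\alpha$ pieces. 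Since you already write the boundary segment with the full factor $(1-2t/\sqrt{n}+\ldots)$, this is a one-line fix, and the analogous remark applies to $M_{g_{m+1,-},m+1}$ in the edge formula.
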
 
\begin{proof}
Let $k \in \{1,\ldots,m\}$. For conciseness, we will write $M_{j}$ instead of $M_{j,k}$. Since $f \in C^{3}(\mathbb{R})$ is bounded and has bounded derivatives, we have
\begin{align}
 & \int_{M_{g_{k,+}}}^{M_{g_{k,-}}}f(t)dt = \sum_{j=g_{k,-}+1}^{g_{k,+}}\int_{M_{j}}^{M_{j-1}}f(t)dt = \sum_{j=g_{k,-}+1}^{g_{k,+}} \bigg\{ f(M_{j})(M_{j-1}-M_{j}) \nonumber \\
&  + f'(M_{j})\frac{(M_{j-1}-M_{j})^{2}}{2} + f''(M_{j})\frac{(M_{j-1}-M_{j})^{3}}{6} \bigg\} + \sum_{j=g_{k,-}+1}^{g_{k,+}}\bigO\big((M_{j-1}-M_{j})^{4}\big), \label{riemann sum 1}
\end{align}
as $n \to + \infty$.
Note that
\begin{align}
M_{j-1}-M_{j} = \frac{bn^{3/2}r_{k}^{2b}}{(j+\alpha)(j-1+\alpha)} = \frac{1}{br_{k}^{2b}n^{1/2}} + \frac{2M_{j}}{br_{k}^{2b}n}+\bigg( \frac{1}{b^{2}r_{k}^{4b}}+\frac{M_{j}^{2}}{br_{k}^{2b}} \bigg)\frac{1}{n^{3/2}}+\bigO\bigg(\frac{1+|M_{j}|}{n^{2}}\bigg), \label{diff of Mj}
\end{align}
as $n \to +\infty$ uniformly for $j \in \{g_{k,-}+1,\ldots,g_{k,+}\}$. Substituting \eqref{diff of Mj} in \eqref{riemann sum 1} and rearranging the terms, we get
\begin{align}
& \sum_{j=g_{k,-}+1}^{g_{k,+}}f(M_{j}) = br_{k}^{2b} \int_{M_{g_{k,+}}}^{M_{g_{k,-}}} f(t) dt \; \sqrt{n} -\frac{1}{\sqrt{n}} \sum_{j=g_{k,-}+1}^{g_{k,+}}\bigg( 2 M_{j}f(M_{j}) + \frac{1}{2br_{k}^{2b}}f'(M_{j}) \bigg) \nonumber \\
& -\frac{1}{n}\sum_{j=g_{k,-}+1}^{g_{k,+}}\bigg( \frac{1}{br_{k}^{2b}} f(M_{j}) + M_{j}^{2}f(M_{j}) + \frac{2}{br_{k}^{2b}} M_{j}f'(M_{j}) + \frac{1}{6b^{2}r_{k}^{4b}}f''(M_{j}) \bigg) + \bigO(M^{4}n^{-1}) \label{sum f asymp}
\end{align}
as $n \to + \infty$. In the same way as for \eqref{sum f asymp}, by replacing $f(t)$ above by $tf(t)$, $f'(t)$, $t^{2}f(t)$ and $f''(t)$, we obtain respectively
\begin{subequations}\label{lol1}
\begin{align}
\sum_{j=g_{k,-}+1}^{g_{k,+}}M_{j}f(M_{j}) & = br_{k}^{2b}\int_{M_{g_{k,+}}}^{M_{g_{k,-}}} tf(t) dt \; \sqrt{n} \nonumber \\
& \hspace{-1.2cm} - \frac{1}{\sqrt{n}}\sum_{j=g_{k,-}+1}^{g_{k,+}}\bigg( 2M_{j}^{2}f(M_{j}) + \frac{1}{2br_{k}^{2b}} f(M_{j}) + \frac{1}{2br_{k}^{2b}}M_{j}f'(M_{j}) \bigg) + \bigO(M^{4}n^{-\frac{1}{2}}),  \\
\sum_{j=g_{k,-}+1}^{g_{k,+}}f'(M_{j}) & = br_{k}^{2b} \Big( f(M_{g_{k,-}})-f(M_{g_{k,+}}) \Big) \sqrt{n} \nonumber \\
& - \frac{1}{\sqrt{n}}\sum_{j=g_{k,-}+1}^{g_{k,+}} \bigg( 2 M_{j}f'(M_{j}) + \frac{1}{2br_{k}^{2b}}f''(M_{j}) \bigg) + \bigO(M^{3}n^{-\frac{1}{2}}), \\
\sum_{j=g_{k,-}+1}^{g_{k,+}} M_{j}^{2}f(M_{j}) & = br_{k}^{2b} \int_{M_{g_{k,+}}}^{M_{g_{k,-}}} t^{2}f(t) dt \; \sqrt{n} +\bigO(M^{4}), \\
\sum_{j=g_{k,-}+1}^{g_{k,+}}f''(M_{j}) & = br_{k}^{2b} \Big( f'(M_{g_{k,-}})-f'(M_{g_{k,+}}) \Big) \sqrt{n} + \bigO(M^{2}),
\end{align}
\end{subequations}
as $n \to + \infty$. Substituting \eqref{lol1} in \eqref{sum f asymp} yields
\begin{align}
& \sum_{j=g_{k,-}}^{g_{k,+}}f(M_{j})  = br_{k}^{2b}\int_{M_{g_{k,+}}}^{M_{g_{k,-}}} f(t) dt \; \sqrt{n} - 2 b r_{k}^{2b} \int_{M_{g_{k,+}}}^{M_{g_{k,-}}} tf(t) dt + \frac{f(M_{g_{k,-}})+f(M_{g_{k,+}})}{2} \nonumber \\
& + \frac{br_{k}^{2b}}{\sqrt{n}}\bigg( 3 \int_{M_{g_{k,+}}}^{M_{g_{k,-}}} t^{2}f(t) dt + \frac{f'(M_{g_{k,-}})-f'(M_{g_{k,+}})}{12b^{2}r_{k}^{4b}} \bigg) + \bigO(M^{4}n^{-\frac{1}{2}}), \quad \mbox{as } n \to + \infty. \label{lol2}
\end{align}
Note that the sum on the left-hand side of \eqref{lol2} starts at $j=g_{k,-}$. The integrals on the right-hand side can be expanded using 
\begin{align*}
& M_{g_{k,-}} = M - \frac{\theta_{k,-}^{(n,M)}}{br_{k}^{2b}\sqrt{n}} - \frac{2M \theta_{k,-}^{(n,M)}}{br_{k}^{2b}n} + \bigO(M^{2}n^{-\frac{3}{2}}), & & \mbox{as } n \to + \infty, \\
& M_{g_{k,+}} = -M + \frac{\theta_{k,+}^{(n,M)}}{b r_{k}^{2b}\sqrt{n}} - \frac{2M \theta_{k,+}^{(n,M)}}{br_{k}^{2b}n} + \bigO(M^{2}n^{-\frac{3}{2}}), & & \mbox{as } n \to + \infty.
\end{align*}
We then find \eqref{sum f asymp 2} after a computation. We now turn to the proof of \eqref{sum f asymp 2 edge}. Again, for conciseness we will write $M_{j}$ instead of $M_{j,m+1}$. In the same way as for \eqref{lol2}, we have
\begin{align}
& \sum_{j=g_{m+1,-}}^{n}f(M_{j})  = \int_{M_{n}}^{M_{g_{m+1,-}}} f(t) dt \; (\sqrt{n} + \mathcal{R}) - 2 \int_{M_{n}}^{M_{g_{m+1,-}}} tf(t) dt + \frac{f(M_{g_{m+1,-}})+f(M_{n})}{2} \nonumber \\
& + \frac{1}{\sqrt{n}}\bigg( - 2 \mathcal{R} \int_{M_{n}}^{M_{g_{m+1,-}}} tf(t) dt + 3 \int_{M_{n}}^{M_{g_{m+1,-}}} t^{2}f(t) dt + \frac{f'(M_{g_{m+1,-}})-f'(M_{n})}{12} \bigg) + \bigO(M^{4}n^{-\frac{1}{2}}) \label{lol2 edge}
\end{align}
as $n \to + \infty$, where we have used that $br_{m+1}^{2b}=1+\frac{\mathcal{R}}{\sqrt{n}}$ and $M_{n}:=\frac{\mathcal{R}n-\alpha \sqrt{n}}{n+\alpha}$. We then obtain \eqref{sum f asymp 2 edge} from a direct computation using the expansions
\begin{align*}
& M_{g_{m+1,-}} = M - \frac{\theta_{m+1,-}^{(n,M)}}{\sqrt{n}} - \frac{(2M-\mathcal{R}) \theta_{m+1,-}^{(n,M)}}{n} + \bigO(M^{2}n^{-\frac{3}{2}}), & & \mbox{as } n \to + \infty, \\
& M_{n} = \mathcal{R} - \frac{\alpha}{\sqrt{n}} + \bigO(n^{-\frac{3}{2}}), & & \mbox{as } n \to + \infty.
\end{align*}
\end{proof}

\begin{lemma}\label{lemma:S2kp2p}
For any $k \in \{1,\ldots,m\}$ and any $x_{1},\ldots,x_{p} \in \mathbb{R}$, there exists $\delta > 0$ such that
\begin{align*}
&  S_{2k}^{(2)} = \sum_{j=g_{k,-}}^{g_{k,+}} \log \Gamma(\tfrac{j+\alpha}{b}) + \widetilde{C}_{2,k}^{(M)}\sqrt{n} + \widetilde{C}_{3,k}^{(n,M)} + \frac{1}{\sqrt{n}}\widetilde{C}_{4,k}^{(n,M)} + \bigO(M^{4}n^{-1}), \\
& \widetilde{C}_{2,k}^{(M)} = br_{k}^{2b} \int_{-M}^{M}\log \bigg( \frac{\omega_{k}}{2}\mathrm{erfc}\bigg( \frac{t \, r_{k}^{b}}{\sqrt{2}} \bigg) + \Omega_{k+1} \bigg) dt,  \\
& \widetilde{C}_{3,k}^{(n,M)} = 2 b r_{k}^{2b} \int_{-M}^{M} t \log \bigg( \frac{\omega_{k}}{2}\mathrm{erfc}\bigg( \frac{t \, r_{k}^{b}}{\sqrt{2}} \bigg) + \Omega_{k+1} \bigg) dt \\
& + \bigg( \frac{1}{2}-\theta_{k,-}^{(n,M)} \bigg)\log \bigg( \frac{\omega_{k}}{2}\mathrm{erfc}\bigg( -\frac{M \, r_{k}^{b}}{\sqrt{2}} \bigg) + \Omega_{k+1} \bigg)+ \bigg( \frac{1}{2}-\theta_{k,+}^{(n,M)} \bigg)\log \bigg( \frac{\omega_{k}}{2}\mathrm{erfc}\bigg( \frac{M \, r_{k}^{b}}{\sqrt{2}} \bigg) + \Omega_{k+1} \bigg) \\
& + b r_{k}^{2b} \int_{-M}^{M} \frac{\omega_{k}}{\Omega_{k+1}+\frac{\omega_{k}}{2}\mathrm{erfc}\big( \frac{t\, r_{k}^{b}}{\sqrt{2}} \big)}\bigg[\frac{1}{3r_{k}^{b}}-\frac{5t^{2}r_{k}^{b}}{6}\bigg]\frac{e^{-\frac{t^{2}r_{k}^{2b}}{2}}}{\sqrt{2\pi}}dt, \\
& \widetilde{C}_{4,k}^{(n,M)} = 3b r_{k}^{2b} \int_{-M}^{M}t^{2} \log \bigg( \frac{\omega_{k}}{2}\mathrm{erfc}\bigg( \frac{t \, r_{k}^{b}}{\sqrt{2}} \bigg) + \Omega_{k+1} \bigg) dt \\
& + b r_{k}^{2b} \int_{-M}^{M} \frac{\omega_{k}}{\Omega_{k+1}+\frac{\omega_{k}}{2}\mathrm{erfc}\big( \frac{t\, r_{k}^{b}}{\sqrt{2}} \big)}\frac{e^{-\frac{t^{2}r_{k}^{2b}}{2}}}{\sqrt{2\pi}} \frac{t(42-193 r_{k}^{2b} t^{2}+25 r_{k}^{4b}t^{4})}{72r_{k}^{b}}dt \\
& - \frac{b r_{k}^{2b}}{2} \int_{-M}^{M} \bigg[ \frac{\omega_{k}}{\Omega_{k+1}+\frac{\omega_{k}}{2}\mathrm{erfc}\big( \frac{t\, r_{k}^{b}}{\sqrt{2}} \big)}\frac{e^{-\frac{t^{2}r_{k}^{2b}}{2}}}{\sqrt{2\pi}} \bigg( \frac{1}{3 r_{k}^{b}}-\frac{5t^{2}r_{k}^{b}}{6} \bigg) \bigg]^{2}dt \\
& + \bigg[ \bigg( \frac{1}{2}-\theta_{k,-}^{(n,M)} \bigg)\bigg( \frac{1}{3r_{k}^{b}}-\frac{5M^{2}r_{k}^{b}}{6} \bigg) + \bigg( \frac{1}{12}+\frac{\theta_{k,-}^{(n,M)}(\theta_{k,-}^{(n,M)}-1)}{2} \bigg)\frac{1}{b r_{k}^{b}} \bigg] \frac{\omega_{k}}{\Omega_{k+1}+\frac{\omega_{k}}{2}\mathrm{erfc}\big( -\frac{M\, r_{k}^{b}}{\sqrt{2}} \big)}\frac{e^{-\frac{M^{2}r_{k}^{2b}}{2}}}{\sqrt{2\pi}} \\
& + \bigg[ \bigg( \frac{1}{2}-\theta_{k,+}^{(n,M)} \bigg)\bigg( \frac{1}{3r_{k}^{b}}-\frac{5M^{2}r_{k}^{b}}{6} \bigg) -\bigg( \frac{1}{12}+\frac{\theta_{k,+}^{(n,M)}(\theta_{k,+}^{(n,M)}-1)}{2} \bigg)\frac{1}{b r_{k}^{b}} \bigg] \frac{\omega_{k}}{\Omega_{k+1}+\frac{\omega_{k}}{2}\mathrm{erfc}\big( \frac{M\, r_{k}^{b}}{\sqrt{2}} \big)}\frac{e^{-\frac{M^{2}r_{k}^{2b}}{2}}}{\sqrt{2\pi}},
\end{align*}
as $n \to +\infty$ uniformly for $u_{1} \in \{z \in \mathbb{C}: |z-x_{1}|\leq \delta\},\ldots,u_{p} \in \{z \in \mathbb{C}: |z-x_{p}|\leq \delta\}$.
\end{lemma}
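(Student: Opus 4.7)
The strategy is to factor $\Gamma(\tfrac{j+\alpha}{b})$ out of each summand and then apply the uniform asymptotics of Lemma \ref{lemma: uniform} to obtain
\[
S_{2k}^{(2)} = \sum_{j=g_{k,-}}^{g_{k,+}} \log \Gamma(\tfrac{j+\alpha}{b}) + \sum_{j=g_{k,-}}^{g_{k,+}} \log\Bigl( \omega_k \Bigl[ \tfrac{1}{2}\mathrm{erfc}(-\eta_{j,k}\sqrt{a_j/2}) - R_{a_j}(\eta_{j,k}) \Bigr] + \Omega_{k+1} \Bigr),
\]
with $\delta$ chosen small enough that the inner bracket stays uniformly bounded away from $(-\infty,0]$, exactly as in the proof of the previous lemma. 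The goal is then to rewrite the second sum as $\sum_j\bigl[\varphi_0(M_{j,k}) + n^{-1/2}\varphi_1(M_{j,k}) + n^{-1}\varphi_2(M_{j,k})\bigr] + \bigO(M^4 n^{-3/2})$ for explicit smooth bounded functions $\varphi_0,\varphi_1,\varphi_2$, and invoke Lemma \ref{lemma:Riemann sum} (applied in full to $\varphi_0$, and in its leading-order form to $\varphi_1,\varphi_2$).

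On $I_2$ one has $\lambda_{j,k}-1 = M_{j,k}/\sqrt{n}$ with $|M_{j,k}|\leq M$, so $\eta_{j,k}$ is small. From $\eta^2/2 = \lambda-1-\log\lambda$ a Taylor expansion gives
\[
\eta_{j,k} = \frac{M_{j,k}}{\sqrt{n}} - \frac{M_{j,k}^2}{3 n} + \frac{7 M_{j,k}^3}{36\, n^{3/2}} + \bigO\bigl(M^4 n^{-2}\bigr),
\]
and, together with $a_j = nr_k^{2b}/\lambda_{j,k}$, this produces an expansion of $-\eta_{j,k}\sqrt{a_j/2}$ in powers of $1/\sqrt{n}$ whose coefficients are polynomials in $M_{j,k}$. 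A Taylor expansion of $\mathrm{erfc}$ around $-M_{j,k}r_k^b/\sqrt{2}$ then yields $\tfrac{1}{2}\mathrm{erfc}(-M_{j,k}r_k^b/\sqrt{2})$ plus corrections of orders $n^{-1/2}$ and $n^{-1}$ proportional to $e^{-M_{j,k}^2 r_k^{2b}/2}$. The remainder $R_{a_j}(\eta_{j,k})$, expanded via \eqref{asymp of Ra}, contributes starting at $n^{-1/2}$; the coefficients $c_0(\eta)$ and $c_1(\eta)$ must themselves be Taylor expanded around $\eta=0$ (where, for instance, $c_0(\eta) = 1/(\lambda-1) - 1/\eta$ is analytic). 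Expanding the outer logarithm as
\[
\log\bigl(F_0 + n^{-1/2}F_1 + n^{-1}F_2 + \cdots \bigr) = \log F_0 + \frac{F_1}{\sqrt n\, F_0} + \frac{1}{n}\Bigl(\frac{F_2}{F_0} - \frac{F_1^2}{2F_0^2}\Bigr) + \cdots
\]
with $F_0(t) = \tfrac{\omega_k}{2}\mathrm{erfc}(-t r_k^b/\sqrt 2) + \Omega_{k+1}$ produces $\varphi_0,\varphi_1,\varphi_2$ explicitly in closed form.

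Lemma \ref{lemma:Riemann sum} then converts the sum into $br_k^{2b}\int_{-M}^M \varphi_0(t)\,dt\,\sqrt n$ plus boundary and integral corrections at orders $1$ and $n^{-1/2}$ (the $\theta_{k,\pm}^{(n,M)}$ coming precisely from the non-integral contributions), while $\varphi_1$ and $\varphi_2$ contribute only through their leading $\int_{-M}^M$ integrals. A change of variable $t\mapsto -t$ on these symmetric integrals converts $\mathrm{erfc}(-t r_k^b/\sqrt 2)$ into $\mathrm{erfc}(t r_k^b/\sqrt 2)$ and interchanges the boundary contributions at $\pm M$, matching the form written in the statement. The main obstacle is the sheer bookkeeping: the expansions of $\eta_{j,k}$, $\sqrt{a_j/2}$, $\mathrm{erfc}$ and $R_{a_j}(\eta_{j,k})$ must be composed consistently to the correct order, and combined under $\log$, in order to produce the specific rational combinations such as $\tfrac{1}{3r_k^b}-\tfrac{5t^2 r_k^b}{6}$ in $\widetilde{C}_{3,k}^{(n,M)}$ and $\tfrac{t(42-193r_k^{2b}t^2+25r_k^{4b}t^4)}{72 r_k^b}$ in $\widetilde{C}_{4,k}^{(n,M)}$. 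A careful symbolic computation, possibly computer assisted, is needed to verify these numerical coefficients exactly.
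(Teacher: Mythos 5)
Your proposal follows essentially the same route as the paper's proof: factor out $\Gamma(\tfrac{j+\alpha}{b})$, choose $\delta$ so that the bracket stays away from $(-\infty,0]$, expand $\eta_{j,k}$, $-\eta_{j,k}\sqrt{a_{j}/2}$, $\mathrm{erfc}$ and $R_{a_{j}}(\eta_{j,k})$ (via \eqref{asymp of Ra}) in powers of $n^{-1/2}$ with coefficients polynomial in $M_{j,k}$, expand the logarithm to get a decomposition $\Sigma_{1}^{(n)}+n^{-1/2}\Sigma_{2}^{(n)}+n^{-1}\Sigma_{3}^{(n)}$, and convert the sums with Lemma \ref{lemma:Riemann sum}; your expansion of $\eta_{j,k}$ is literally the one used in the paper, and the paper likewise leaves the final coefficient matching to an unprinted computation.

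One point of your plan must be corrected, however: you cannot apply Lemma \ref{lemma:Riemann sum} ``in its leading-order form'' to $\varphi_{1}$ (your $n^{-1/2}$ term, the paper's $\Sigma_{2}^{(n)}$). Because its prefactor is only $n^{-1/2}$, the constant-order part of \eqref{sum f asymp 2} applied to $\varphi_{1}$ — namely $-2br_{k}^{2b}\int_{-M}^{M}t\varphi_{1}(t)\,dt$ together with the boundary contributions $\big(\tfrac{1}{2}-\theta_{k,-}^{(n,M)}\big)\varphi_{1}(M)+\big(\tfrac{1}{2}-\theta_{k,+}^{(n,M)}\big)\varphi_{1}(-M)$ — produces terms of order $n^{-1/2}$ overall, which lie within the claimed precision $\bigO(M^{4}n^{-1})$ and are genuinely present in the statement: they are precisely the source of the boundary terms proportional to $\big(\tfrac{1}{2}-\theta_{k,\pm}^{(n,M)}\big)\big(\tfrac{1}{3r_{k}^{b}}-\tfrac{5M^{2}r_{k}^{b}}{6}\big)$ in $\widetilde{C}_{4,k}^{(n,M)}$ and of part of the integrand $t(42-193r_{k}^{2b}t^{2}+25r_{k}^{4b}t^{4})/(72r_{k}^{b})$. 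Only the genuinely $n^{-1}$ term $\varphi_{2}$ may be kept at leading order. With this adjustment your argument reproduces the paper's proof, the remaining work being the coefficient bookkeeping you already identify.
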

\begin{proof}
For convenience, we will write $M_{j}$ and $\lambda_{j}$ instead of $M_{j,k}$ and $\lambda_{j,k}$. By \eqref{asymp prelim of S2kpvp} and Lemma \ref{lemma: uniform}, we have
\begin{align*}
& S_{2k}^{(2)} = \sum_{j:\lambda_{j}\in I_{2}}  \log \Gamma(\tfrac{j+\alpha}{b}) +  \sum_{j:\lambda_{j}\in I_{2}} \log \bigg( \omega_{k} \bigg[ \frac{1}{2}\mathrm{erfc}\Big(-\eta_{j} \sqrt{a_{j}/2}\Big) - R_{a_{j}}(\eta_{j}) \bigg] +   \Omega_{k+1} \bigg),
\end{align*}
where $\eta_{j} = (\lambda_{j}-1)\sqrt{\frac{2 (\lambda_{j}-1-\ln \lambda_{j})}{(\lambda_{j}-1)^{2}}}$. Recall that for all $j \in \{j:\lambda_{j}\in I_{2}\}$, we have $1-\frac{M}{\sqrt{n}} \leq \lambda_{j} = \frac{bnr_{k}^{2b}}{j+\alpha} \leq 1+\frac{M}{\sqrt{n}}$, and that $-M \leq M_{j} \leq M$. Since 
\begin{align}
& \eta_{j} = (\lambda_{j}-1)\bigg( 1 - \frac{\lambda_{j}-1}{3} + \frac{7}{36}(\lambda_{j}-1)^{2} +\bigO((\lambda_{j}-1)^{3})\bigg) = \frac{M_{j}}{\sqrt{n}} - \frac{M_{j}^{2}}{3n} + \frac{7M_{j}^{3}}{36n^{3/2}} + \bigO\bigg(\frac{M^{4}}{n^{2}}\bigg), \nonumber \\
& -\eta_{j} \sqrt{a_{j}/2} = - \frac{M_{j} r_{k}^{b}}{\sqrt{2}} + \frac{5M_{j}^{2} r_{k}^{b}}{6\sqrt{2}\sqrt{n}} - \frac{53 M_{j}^{3} r_{k}^{b}}{72\sqrt{2}n} +\bigO(M^{4}n^{-\frac{3}{2}}), \label{asymp etaj and etajsqrtajover2} 
\end{align}
as $n \to + \infty$ uniformly in $j\in \{j:\lambda_{j}\in I_{2}\}$, using \eqref{asymp of Ra} we obtain
\begin{align*}
& R_{a_{j}}(\eta_{j}) =  \frac{e^{-\frac{M_{j}^{2}r_{k}^{2b}}{2}}}{\sqrt{2\pi}}\bigg( \frac{-1}{3 r_{k}^{b}\sqrt{n}} - \bigg[\frac{M_{j}}{12 r_{k}^{b}} + \frac{5M_{j}^{3} r_{k}^{b}}{18} \bigg]\frac{1}{n} +\bigO((1+M_{j}^{6})n^{-\frac{3}{2}}) \bigg), \\
& \frac{1}{2}\mathrm{erfc}\Big(-\eta_{j} \sqrt{a_{j}/2}\Big) = \frac{1}{2}\mathrm{erfc}\Big(-\frac{M_{j}r_{k}^{b}}{\sqrt{2}}\Big)+\frac{1}{2}\mathrm{erfc}'\Big(-\frac{M_{j} r_{k}^{b}}{\sqrt{2}}\Big)\frac{5M_{j}^{2} r_{k}^{b}}{6\sqrt{2}\sqrt{n}} \\
& + \frac{M_{j}^{3}}{288n}\bigg( 25 M_{j} r_{k}^{2b} \mathrm{erfc}''\Big( -\frac{M_{j} r_{k}^{b}}{\sqrt{2}} \Big) -53 \sqrt{2} r_{k}^{b}\mathrm{erfc}'\Big( -\frac{M_{j} r_{k}^{b}}{\sqrt{2}} \Big) \bigg) + \bigO\Big(e^{-\frac{M_{j}^{2}r_{k}^{2b}}{2}}(1+M_{j}^{8})n^{-\frac{3}{2}}\Big),
\end{align*}
as $n \to + \infty$. Thus we have
\begin{align}
S_{2k}^{(2)} & = \sum_{j:\lambda_{j}\in I_{2}}  \log \Gamma(\tfrac{j+\alpha}{b}) + \sum_{j=g_{k,-}}^{g_{k,+}} \log \bigg\{ \omega_{k} \bigg[ \frac{1}{2}\mathrm{erfc}\Big(-\frac{M_{j} r_{k}^{b}}{\sqrt{2}}\Big)+\frac{1}{2}\mathrm{erfc}'\Big(-\frac{M_{j} r_{k}^{b}}{\sqrt{2}}\Big)\frac{5M_{j}^{2}r_{k}^{b}}{6\sqrt{2}\sqrt{n}} \nonumber \\
& + \frac{M_{j}^{3}}{288n}\bigg( 25 M_{j} r_{k}^{2b} \mathrm{erfc}''\Big( -\frac{M_{j} r_{k}^{b}}{\sqrt{2}} \Big) -53 \sqrt{2} r_{k}^{b}\mathrm{erfc}'\Big( -\frac{M_{j} r_{k}^{b}}{\sqrt{2}} \Big) \bigg) \nonumber \\
& +\frac{e^{-\frac{M_{j}^{2} r_{k}^{2b}}{2}}}{\sqrt{2\pi}}\bigg( \frac{1}{3 r_{k}^{b}\sqrt{n}} + \bigg[\frac{M_{j}}{12 r_{k}^{b}} + \frac{5M_{j}^{3} r_{k}^{b}}{18} \bigg]\frac{1}{n} \bigg) \bigg] +   \Omega_{k+1} \bigg) +\bigO(n^{-1}) \nonumber \\
& = \Sigma_{1}^{(n)}+\frac{1}{\sqrt{n}}\Sigma_{2}^{(n)}+\frac{1}{n}\Sigma_{3}^{(n)} + \bigO(n^{-1}), \label{asymp of S2kp2p in proof}
\end{align}
as $n \to + \infty$, where
\begin{align*}
& \Sigma_{1}^{(n)} = \sum_{j=g_{k,-}}^{g_{k,+}} \log \bigg(  \frac{\omega_{k}}{2}\mathrm{erfc}\Big(-\frac{M_{j} r_{k}^{b}}{\sqrt{2}}\Big) + \Omega_{k+1}\bigg), \\
& \Sigma_{2}^{(n)} = \sum_{j=g_{k,-}}^{g_{k,+}} \frac{\omega_{k}}{\frac{\omega_{k}}{2}\mathrm{erfc}(-\frac{M_{j}r_{k}^{b}}{\sqrt{2}})+\Omega_{k+1}} \bigg( \frac{1}{2}\mathrm{erfc}'\Big( -\frac{M_{j}r_{k}^{b}}{\sqrt{2}} \Big) \frac{5M_{j}^{2}r_{k}^{b}}{6\sqrt{2}} + \frac{e^{-\frac{M_{j}^{2}r_{k}^{2b}}{2}}}{\sqrt{2\pi}}\frac{1}{3r_{k}^{b}} \bigg), \\
& \Sigma_{3}^{(n)} = \sum_{j=g_{k,-}}^{g_{k,+}} \bigg\{ \frac{\omega_{k}}{\frac{\omega_{k}}{2}\mathrm{erfc}(-\frac{M_{j}r_{k}^{b}}{\sqrt{2}})+\Omega_{k+1}} \bigg( \frac{M_{j}^{3}}{288}\bigg[ 25M_{j}r_{k}^{2b}\mathrm{erfc}''\Big( - \frac{M_{j}r_{k}^{b}}{\sqrt{2}} \Big) - 53 \sqrt{2} r_{k}^{b}\mathrm{erfc}'\Big( - \frac{M_{j}r_{k}^{b}}{\sqrt{2}} \Big) \bigg] \\
& + \frac{e^{-\frac{M_{j}^{2}r_{k}^{2b}}{2}}}{\sqrt{2\pi}}\bigg[ \frac{M_{j}}{12r_{k}^{b}} + \frac{5M_{j}^{3}r_{k}^{b}}{18} \bigg] \bigg) - \frac{1}{2} \bigg[ \frac{\omega_{k}}{\frac{\omega_{k}}{2}\mathrm{erfc}(-\frac{M_{j}r_{k}^{b}}{\sqrt{2}})+\Omega_{k+1}} \bigg( \frac{1}{2}\mathrm{erfc}'\Big( -\frac{M_{j}r_{k}^{b}}{\sqrt{2}} \Big) \frac{5M_{j}^{2}r_{k}^{b}}{6\sqrt{2}} + \frac{e^{-\frac{M_{j}^{2}r_{k}^{2b}}{2}}}{\sqrt{2\pi}}\frac{1}{3r_{k}^{b}} \bigg) \bigg]^{2} \bigg\}.
\end{align*}
By \eqref{sum f asymp 2}, the large $n$ asymptotics of $\Sigma_{1}^{(n)}$, $\Sigma_{2}^{(n)}$ and $\Sigma_{3}^{(n)}$ are of the form
\begin{align*}
& \Sigma_{1}^{(n)} = \Sigma_{1,2} \sqrt{n} + \Sigma_{1,3} + \frac{1}{\sqrt{n}} \Sigma_{1,4} + \bigO(M^{4}n^{-1}), \\
& \frac{1}{\sqrt{n}}\Sigma_{2}^{(n)} = \Sigma_{2,3} + \frac{1}{\sqrt{n}} \Sigma_{2,4} + \bigO(n^{-1}), \qquad \frac{1}{n}\Sigma_{3}^{(n)} = \frac{1}{\sqrt{n}} \Sigma_{3,4} + \bigO(n^{-1}),
\end{align*}
where the coefficients $\Sigma_{1,2},\Sigma_{1,3},\Sigma_{1,4},\Sigma_{2,3},\Sigma_{2,4},\Sigma_{3,4}$ are explicit (but we do not write them down). After simplification (using for example $\mathrm{erfc}'(x) = -\frac{2e^{-x^{2}}}{\sqrt{\pi}}$), we obtain
\begin{align*}
& \Sigma_{1,2} =  \widetilde{C}_{2,k}^{(M)}, & & \Sigma_{1,3} + \Sigma_{2,3} = \widetilde{C}_{3,k}^{(n,M)}, & & \Sigma_{1,4} + \Sigma_{2,4} + \Sigma_{3,4} = \widetilde{C}_{4,k}^{(n,M)}.
\end{align*}
\end{proof}
We are now in a position to compute the large $n$ asymptotics of $S_{2k}$ for $k \in \{1,\ldots,m\}$.
\begin{lemma}\label{lemma: asymp of S2k final}
For any $k \in \{1,\ldots,m\}$ and any $x_{1},\ldots,x_{p} \in \mathbb{R}$, there exists $\delta > 0$ such that
\begin{align*}
& S_{2k} = \sum_{j= j_{k,-}}^{j_{k,+}} \log \Gamma(\tfrac{j+\alpha}{b}) + \Big(j_{k,+}-br_{k}^{2b} n \Big) \log \Omega_{k+1}  \\
& + \Big( br_{k}^{2b}n - j_{k,-} \Big) \log  \Omega_{k}  + C_{2,k} \sqrt{n} + C_{3,k} + \frac{1}{\sqrt{n}} C_{4,k} + \bigO(M^{4}n^{-1}),
\end{align*}
as $n \to +\infty$ uniformly for $u_{1} \in \{z \in \mathbb{C}: |z-x_{1}|\leq \delta\},\ldots,u_{p} \in \{z \in \mathbb{C}: |z-x_{p}|\leq \delta\}$, where
\begin{align*}
& C_{2,k} = b r_{k}^{2b} \bigg[ \int_{0}^{\infty} \log \bigg( 1+\frac{\omega_{k}}{2 \Omega_{k+1}}\mathrm{erfc}\frac{t \, r_{k}^{b}}{\sqrt{2}} \bigg) dt + \int_{0}^{\infty} \log \bigg( 1-\frac{\omega_{k}}{2 \Omega_{k}}\mathrm{erfc}\frac{t \, r_{k}^{b}}{\sqrt{2}} \bigg) dt \bigg], \\
& C_{3,k} = \bigg( \frac{1}{2}+\alpha\bigg) \log \Omega_{k+1} + \bigg( \frac{1}{2} - \alpha \bigg) \log \Omega_{k} \\
& + 2br_{k}^{2b} \bigg[ \int_{0}^{\infty} t \log \bigg( 1+\frac{\omega_{k}}{2 \Omega_{k+1}}\mathrm{erfc}\frac{t \, r_{k}^{b}}{\sqrt{2}} \bigg) dt - \int_{0}^{\infty} t \log \bigg( 1-\frac{\omega_{k}}{2 \Omega_{k}}\mathrm{erfc}\frac{t \, r_{k}^{b}}{\sqrt{2}} \bigg) dt \bigg] \\
& + b r_{k}^{2b} \int_{-\infty}^{\infty} \frac{\omega_{k}}{\Omega_{k+1}+\frac{\omega_{k}}{2}\mathrm{erfc}\big( \frac{t\, r_{k}^{b}}{\sqrt{2}} \big)}\bigg[\frac{1}{3r_{k}^{b}}-\frac{5t^{2}r_{k}^{b}}{6}\bigg]\frac{e^{-\frac{t^{2}r_{k}^{2b}}{2}}}{\sqrt{2\pi}}dt, \\
& C_{4,k} = 3br_{k}^{2b} \bigg[ \int_{0}^{\infty} t^{2} \log \bigg( 1+\frac{\omega_{k}}{2 \Omega_{k+1}}\mathrm{erfc}\frac{t \, r_{k}^{b}}{\sqrt{2}} \bigg) dt + \int_{0}^{\infty} t^{2} \log \bigg( 1-\frac{\omega_{k}}{2 \Omega_{k}}\mathrm{erfc}\frac{t \, r_{k}^{b}}{\sqrt{2}} \bigg) dt \bigg] \\
& + b r_{k}^{b} \int_{-\infty}^{\infty} \frac{\omega_{k}}{\Omega_{k+1}+\frac{\omega_{k}}{2}\mathrm{erfc}\big( \frac{t\, r_{k}^{b}}{\sqrt{2}} \big)} \frac{e^{-\frac{t^{2}r_{k}^{2b}}{2}}}{\sqrt{2\pi}} \frac{t(42-193 r_{k}^{2b} t^{2} + 25 r_{k}^{4b}t^{4})}{72} dt \\
& -\frac{1}{2}b r_{k}^{2b} \int_{-\infty}^{\infty} \Bigg( \frac{\omega_{k}}{\Omega_{k+1}+\frac{\omega_{k}}{2}\mathrm{erfc}\big( \frac{t\, r_{k}^{b}}{\sqrt{2}} \big)}\bigg[\frac{1}{3r_{k}^{b}}-\frac{5t^{2}r_{k}^{b}}{6}\bigg]\frac{e^{-\frac{t^{2}r_{k}^{2b}}{2}}}{\sqrt{2\pi}} \Bigg)^{2}dt.
\end{align*}
\end{lemma}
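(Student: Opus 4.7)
The strategy is purely bookkeeping: we add the three asymptotic formulas for $S_{2k}^{(1)}$, $S_{2k}^{(2)}$, $S_{2k}^{(3)}$ provided by Lemmas \ref{lemma:S2kp1p}, \ref{lemma:S2kp2p}, \ref{lemma:S2kp3p}, recognise the $M$-dependent pieces, and verify that they cancel. The three $\log \Gamma$-sums combine telescopically into $\sum_{j=j_{k,-}}^{j_{k,+}} \log \Gamma(\tfrac{j+\alpha}{b})$, matching the first term of the claim.

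The key algebraic identity is $\Omega_{k} = \omega_{k} + \Omega_{k+1}$, which together with $\mathrm{erfc}(-x) = 2 - \mathrm{erfc}(x)$ gives the two factorizations
\begin{align*}
\tfrac{\omega_{k}}{2}\mathrm{erfc}(x) + \Omega_{k+1} = \Omega_{k+1}\Bigl(1 + \tfrac{\omega_{k}}{2\Omega_{k+1}}\mathrm{erfc}(x)\Bigr), \qquad \tfrac{\omega_{k}}{2}\mathrm{erfc}(-x) + \Omega_{k+1} = \Omega_{k}\Bigl(1 - \tfrac{\omega_{k}}{2\Omega_{k}}\mathrm{erfc}(x)\Bigr).
\end{align*}
In every integral over $[-M,M]$ appearing in $\widetilde{C}_{2,k}^{(M)}$, $\widetilde{C}_{3,k}^{(n,M)}$, $\widetilde{C}_{4,k}^{(n,M)}$, I would split at $t=0$ and change variables $t \mapsto -t$ on the negative half. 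Using the factorizations above, the logarithms produce two polynomial-in-$M$ boundary contributions proportional to $\log \Omega_{k+1}$ and $\log \Omega_{k}$, plus half-line integrals involving $\log(1\pm \tfrac{\omega_{k}}{2\Omega}\mathrm{erfc}(\tfrac{tr_{k}^{b}}{\sqrt{2}}))$. Since $\mathrm{erfc}(\tfrac{tr_{k}^{b}}{\sqrt{2}})$ decays like $e^{-t^{2}r_{k}^{2b}/2}$, each half-line integral $\int_{0}^{M}$ can be extended to $\int_{0}^{\infty}$ at the cost of $\bigO(e^{-cM^{2}}) = \bigO(n^{-cM'^{2}})$, which is absorbed into $\bigO(M^{4}n^{-1})$ once $M'$ is chosen large enough. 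The remaining integrals on $\mathbb{R}$ (the rational-exponential densities weighted by $\mathrm{erfc}$) likewise converge and extend to $(-\infty,\infty)$ with exponentially small error. The boundary terms $f(\pm M)$ in $\widetilde{C}_{3,k}^{(n,M)}$ and $\widetilde{C}_{4,k}^{(n,M)}$ are all $\bigO(e^{-cM^{2}})$ aside from the $\log \Omega_{k}$, $\log \Omega_{k+1}$ pieces obtained from $\log(\tfrac{\omega_{k}}{2}\mathrm{erfc}(\mp \tfrac{Mr_{k}^{b}}{\sqrt{2}})+\Omega_{k+1})$ via the same factorizations.

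The main obstacle is therefore the $M$-cancellation bookkeeping. Collecting the coefficient of $\log\Omega_{k+1}$, one gets from $S_{2k}^{(1)}$ the contribution $j_{k,+} - br_{k}^{2b}n - bMr_{k}^{2b}\sqrt{n} - bM^{2}r_{k}^{2b} + \alpha + \theta_{k,+}^{(n,M)} - bM^{3}r_{k}^{2b}n^{-\frac{1}{2}}$; from $\sqrt{n}\,\widetilde{C}_{2,k}^{(M)}$ the term $br_{k}^{2b}M\sqrt{n}$; from $\widetilde{C}_{3,k}^{(n,M)}$ the terms $br_{k}^{2b}M^{2}$ and $\tfrac{1}{2}-\theta_{k,+}^{(n,M)}$; and from $\tfrac{1}{\sqrt{n}}\widetilde{C}_{4,k}^{(n,M)}$ the term $br_{k}^{2b}M^{3}n^{-\frac{1}{2}}$. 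These sum to $j_{k,+}-br_{k}^{2b}n+\alpha+\tfrac{1}{2}$, accounting precisely for the $(j_{k,+}-br_{k}^{2b}n)\log\Omega_{k+1}$ of the statement plus the $(\tfrac{1}{2}+\alpha)\log\Omega_{k+1}$ appearing in $C_{3,k}$. The identical computation for $\log\Omega_{k}$, drawing contributions from $S_{2k}^{(3)}$, gives the $(br_{k}^{2b}n-j_{k,-})\log\Omega_{k}$ term plus the $(\tfrac{1}{2}-\alpha)\log\Omega_{k}$ in $C_{3,k}$. In both cases the $M^{j}$ and $\theta_{k,\pm}^{(n,M)}$ terms cancel exactly, as does the $\bigO(e^{-cM^{2}})$ error from boundary logarithms.

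Having disposed of the $M$-dependence, what remains are precisely the integrals over $[0,\infty)$ of $\log(1\pm \tfrac{\omega_{k}}{2\Omega}\mathrm{erfc})$ with weights $1,t,t^{2}$ (coming from $\widetilde{C}_{2,k}^{(M)}$, the $\log$-integrals in $\widetilde{C}_{3,k}^{(n,M)}$, and the $t^{2}\log$-integral in $\widetilde{C}_{4,k}^{(n,M)}$), together with the $\mathrm{erfc}$-weighted integrals on $\mathbb{R}$ from the remainder $R_{a_{j}}$-expansion in $\widetilde{C}_{3,k}^{(n,M)}$ and $\widetilde{C}_{4,k}^{(n,M)}$. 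Matching against the stated formulas for $C_{2,k}, C_{3,k}, C_{4,k}$ is then a direct algebraic comparison. The overall error remains $\bigO(M^{4}n^{-1})$, which, with $M=M'\sqrt{\log n}$ and $M'$ taken large, is $\bigO((\log n)^{2}n^{-1})$ and compatible with the absorption of all exponentially small remainders.
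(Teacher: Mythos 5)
Your proposal is correct and follows essentially the same route as the paper: combine Lemmas \ref{lemma:S2kp1p}, \ref{lemma:S2kp2p} and \ref{lemma:S2kp3p}, isolate the $M$-dependent multiples of $\log\Omega_{k}$ and $\log\Omega_{k+1}$ produced by the $[-M,M]$ integrals and boundary terms, verify their exact cancellation against the $M$-terms in \eqref{sum jmax}--\eqref{sum jmin}, and then replace the truncated integrals by their full-line/half-line versions at an exponentially small cost $\bigO(e^{-cM^{2}})=\bigO(n^{-cM'^{2}})$ for $M'$ large. Your explicit bookkeeping of the $\log\Omega_{k+1}$ and $\log\Omega_{k}$ coefficients matches the paper's definitions of $C_{2,k}^{(M)}$, $C_{3,k}^{(n,M)}$, $C_{4,k}^{(n,M)}$ and their limits $C_{2,k},C_{3,k},C_{4,k}$.
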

\begin{proof}
By combining Lemmas \ref{lemma:S2kp1p}, \ref{lemma:S2kp3p} and \ref{lemma:S2kp2p}, we have
\begin{align*}
& S_{2k} = \sum_{j= j_{k,-}}^{j_{k,+}} \log \Gamma(\tfrac{j+\alpha}{b}) + \Big(j_{k,+}-br_{k}^{2b} n \Big) \log \Omega_{k+1} + \Big( br_{k}^{2b}n - j_{k,-} \Big) \log  \Omega_{k}  \\
&  + C_{2,k}^{(M)} \sqrt{n} + C_{3,k}^{(n,M)} + \frac{1}{\sqrt{n}} C_{4,k}^{(n,M)} + \bigO(M^{4}n^{-1}),
\end{align*}
as $n \to +\infty$ uniformly for $u_{1} \in \{z \in \mathbb{C}: |z-x_{1}|\leq \delta\},\ldots,u_{p} \in \{z \in \mathbb{C}: |z-x_{p}|\leq \delta\}$, where
\begin{align*}
& C_{2,k}^{(M)} := \widetilde{C}_{2,k}^{(M)} - bM r_{k}^{2b} \bigg[ \log \Omega_{k+1} + \log \Omega_{k} \bigg], \\
& C_{3,k}^{(n,M)} := \widetilde{C}_{3,k}^{(n,M)} + \Big(- bM^{2}r_{k}^{2b}  +\alpha+\theta_{k,+}^{(n,M)} \Big)  \log \Omega_{k+1} + \Big( bM^{2}r_{k}^{2b} -\alpha+\theta_{k,-}^{(n,M)}  \Big) \log \Omega_{k}, \\
& C_{4,k}^{(n,M)} := \widetilde{C}_{4,k}^{(n,M)} - bM^{3}r_{k}^{2b} \bigg[  \log \Omega_{k+1} + \log \Omega_{k} \bigg].
\end{align*}
By choosing $M'$ sufficiently large, we obtain
\begin{align*}
C_{2,k}^{(M)} = C_{2,k} + \bigO(n^{-10}), \qquad C_{3,k}^{(n,M)} = C_{3,k} + \bigO(n^{-10}), \qquad C_{4,k}^{(n,M)} = C_{4,k} + \bigO(n^{-10}),
\end{align*}
as $n \to + \infty$, which finishes the proof.
\end{proof}

We now turn our attention to the asymptotic analysis of $S_{2m+2}^{(2)}$ and then of $S_{2m+2}$.

\begin{lemma}\label{lemma:S2mp2p2p}
For any $x_{1},\ldots,x_{p} \in \mathbb{R}$, there exists $\delta > 0$ such that
\begin{align*}
&  S_{2m+2}^{(2)} = \sum_{j=g_{m+1,-}}^{n} \log \Gamma(\tfrac{j+\alpha}{b}) + \widetilde{C}_{2,m+1}^{(M)}\sqrt{n} + \widetilde{C}_{3,m+1}^{(n,M)} + \frac{1}{\sqrt{n}}\widetilde{C}_{4,m+1}^{(n,M)} + \bigO(M^{4}n^{-1}), \\
& \widetilde{C}_{2,m+1}^{(M)} = \int_{-M}^{-\mathcal{R}}\log \bigg( \frac{\omega_{m+1}}{2}\mathrm{erfc}\bigg( \frac{t}{\sqrt{2b}} \bigg) + \Omega_{m+2} \bigg) dt,  \\
& \widetilde{C}_{3,m+1}^{(n,M)} = \int_{-M}^{-\mathcal{R}} (2t+\mathcal{R}) \log \bigg( \frac{\omega_{m+1}}{2}\mathrm{erfc}\bigg( \frac{t}{\sqrt{2b}} \bigg) + \Omega_{m+2} \bigg) dt \\
& + \bigg( \frac{1}{2}-\theta_{m+1,-}^{(n,M)} \bigg)\log \bigg( \frac{\omega_{m+1}}{2}\mathrm{erfc}\bigg( -\frac{M}{\sqrt{2b}} \bigg) + \Omega_{m+2} \bigg)+ \bigg( \frac{1}{2}+\alpha \bigg)\log \bigg( \frac{\omega_{m+1}}{2}\mathrm{erfc}\bigg( \frac{-\mathcal{R}}{\sqrt{2b}} \bigg) + \Omega_{m+2} \bigg) \\
& + \int_{-M}^{-\mathcal{R}} \frac{\omega_{m+1}}{\Omega_{m+2}+\frac{\omega_{m+1}}{2}\mathrm{erfc}\big( \frac{t}{\sqrt{2b}} \big)}\frac{2b-3\mathcal{R} t -5t^{2}}{6\sqrt{b}}\frac{e^{-\frac{t^{2}}{2b}}}{\sqrt{2\pi}}dt, \\
& \widetilde{C}_{4,m+1}^{(n,M)} = \int_{-M}^{-\mathcal{R}}t(3t+2\mathcal{R}) \log \bigg( \frac{\omega_{m+1}}{2}\mathrm{erfc}\bigg( \frac{t}{\sqrt{2b}} \bigg) + \Omega_{m+2} \bigg) dt \\
& + \int_{-M}^{-\mathcal{R}} \frac{\omega_{m+1}}{\Omega_{m+2}+\frac{\omega_{m+1}}{2}\mathrm{erfc}\big( \frac{t}{\sqrt{2b}} \big)}\frac{e^{-\frac{t^{2}}{2b}}}{\sqrt{2\pi}} \frac{t(42b^{2}-193 b t^{2}+25 t^{4})+6\mathcal{R}(2b^{2}-29b t^{2}+5t^{4})-9 \mathcal{R}^{2}(3 b t -t^{3})}{72b^{3/2}}dt \\
& - \frac{1}{2} \int_{-M}^{-\mathcal{R}} \bigg[ \frac{\omega_{m+1}}{\Omega_{m+2}+\frac{\omega_{m+1}}{2}\mathrm{erfc}\big( \frac{t}{\sqrt{2b}} \big)}\frac{e^{-\frac{t^{2}}{2b}}}{\sqrt{2\pi}} \frac{2b-3\mathcal{R}t-5t^{2}}{6\sqrt{b}} \bigg]^{2}dt \\
& + \bigg[ \frac{1}{12}+\frac{\theta_{m+1,-}^{(n,M)}(\theta_{m+1,-}^{(n,M)}-1)}{2} +\bigg( \frac{1}{2}-\theta_{m+1,-}^{(n,M)} \bigg) \frac{2b+3\mathcal{R}M-5M^{2}}{6}  \bigg]  \frac{\omega_{m+1}}{\Omega_{m+2}+\frac{\omega_{m+1}}{2}\mathrm{erfc}\big( -\frac{M}{\sqrt{2b}} \big)}\frac{e^{-\frac{M^{2}}{2b}}}{\sqrt{2\pi b}} \\
& + \bigg( \bigg( \frac{1}{2}+\alpha \bigg) \frac{b-\mathcal{R}^{2}}{3} - \frac{1+6\alpha+6\alpha^{2}}{12} \bigg) \frac{\omega_{m+1}}{\Omega_{m+2}+\frac{\omega_{m+1}}{2}\mathrm{erfc}\big( -\frac{\mathcal{R}}{\sqrt{2b}} \big)}\frac{e^{-\frac{\mathcal{R}^{2}}{2b}}}{\sqrt{2\pi b}},
\end{align*}
as $n \to +\infty$ uniformly for $u_{1} \in \{z \in \mathbb{C}: |z-x_{1}|\leq \delta\},\ldots,u_{p} \in \{z \in \mathbb{C}: |z-x_{p}|\leq \delta\}$.
\end{lemma}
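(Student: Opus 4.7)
The plan is to follow exactly the strategy used in the proof of Lemma \ref{lemma:S2kp2p}, with two essential modifications: replace the Riemann sum identity \eqref{sum f asymp 2} by its edge analogue \eqref{sum f asymp 2 edge}, and systematically expand $b r_{m+1}^{2b} = 1 + \mathcal{R}/\sqrt{n}$ wherever the coefficient $b r_k^{2b}$ appeared in the bulk case. Since $a_j = (j+\alpha)/b$ and $\lambda_{j,m+1} \in I_{2'}$, we have $\sqrt{a_j/2} = 1/\sqrt{2b} + \bigO(1/\sqrt{n})$ uniformly in $j$; this means the quantity $r_k^b$ that appeared in \eqref{asymp etaj and etajsqrtajover2} is now replaced by $1/\sqrt{b}$, explaining the appearance of $1/\sqrt{2b}$ (rather than $r_k^b/\sqrt{2}$) inside the $\mathrm{erfc}$'s.

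First, starting from \eqref{asymp prelim of S2mp2pvp} and Lemma \ref{lemma: uniform}, write $S_{2m+2}^{(2)} = \sum_{j=g_{m+1,-}}^{n} \log\Gamma(\tfrac{j+\alpha}{b}) + T_n$, where $T_n$ is the sum of $\log\bigl(\omega_{m+1}[\tfrac{1}{2}\mathrm{erfc}(-\eta_j\sqrt{a_j/2}) - R_{a_j}(\eta_j)] + \Omega_{m+2}\bigr)$. The local expansions of $\eta_j$ and $-\eta_j\sqrt{a_j/2}$ in \eqref{asymp etaj and etajsqrtajover2} and the expansion of $R_{a_j}(\eta_j)$ via \eqref{asymp of Ra} transfer verbatim (with $r_k^b \to 1/\sqrt{b}$), leading to a decomposition $T_n = \Sigma_1^{(n)} + \tfrac{1}{\sqrt{n}}\Sigma_2^{(n)} + \tfrac{1}{n}\Sigma_3^{(n)} + \bigO(n^{-1})$ analogous to \eqref{asymp of S2kp2p in proof}. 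The summands of $\Sigma_1^{(n)}, \Sigma_2^{(n)}, \Sigma_3^{(n)}$ are of the form $f_\ell(M_{j,m+1})$ for explicit bounded $C^3$ functions $f_\ell$.

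Second, apply Lemma \ref{lemma:Riemann sum} formula \eqref{sum f asymp 2 edge} to each $\Sigma_\ell^{(n)}$. Unlike the bulk case, the leading Riemann sum coefficient is $(\sqrt{n}+\mathcal{R})$ rather than $br_k^{2b}\sqrt{n}$: the extra $\mathcal{R}$ redistributes contributions between the $\sqrt{n}$ and constant levels, and the correction term $-2\mathcal{R}\int_{\mathcal R}^M tf(t)dt$ in \eqref{sum f asymp 2 edge} feeds directly into $\widetilde{C}_{4,m+1}^{(n,M)}$. A change of variable $t\mapsto -t$ converts the integration range from $[\mathcal R,M]$ into $[-M,-\mathcal R]$ and turns $\mathrm{erfc}(-t/\sqrt{2b})$ into $\mathrm{erfc}(t/\sqrt{2b})$, matching the integrals in the statement. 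The boundary contributions at $t=\mathcal{R}$ in \eqref{sum f asymp 2 edge}, weighted by $\tfrac{1}{2}+\alpha$ at the function level and by $-\tfrac{1+6\alpha+6\alpha^{2}}{12}$ at the derivative level, produce the two prefactor terms in the final two lines of $\widetilde{C}_{3,m+1}^{(n,M)}$ and $\widetilde{C}_{4,m+1}^{(n,M)}$ respectively. After collecting terms and simplifying using $\mathrm{erfc}'(x) = -2e^{-x^{2}}/\sqrt{\pi}$ as in the proof of Lemma \ref{lemma:S2kp2p}, one recovers the formulas in the statement.

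The main obstacle is the bookkeeping of all $\mathcal R$-dependent cross terms. Specifically, the $-3\mathcal R t$ term inside the integrand of $\widetilde{C}_{3,m+1}^{(n,M)}$ and the terms involving $6\mathcal R(2b^2-29 bt^2+5t^4)$ and $-9\mathcal R^2(3bt - t^3)$ inside the integrand of $\widetilde{C}_{4,m+1}^{(n,M)}$ do not arise from a single source: they are produced by combining the $\mathcal R$-piece of the $(\sqrt{n}+\mathcal R)$ prefactor in \eqref{sum f asymp 2 edge} with the Taylor expansion of $\eta_j$ and $\sqrt{a_j/2}$, and with the $-2\mathcal R \int tf\,dt$ correction at order $n^{-1/2}$. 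Similarly, the final coefficient $\big((\tfrac{1}{2}+\alpha)\tfrac{b-\mathcal R^{2}}{3} - \tfrac{1+6\alpha+6\alpha^{2}}{12}\big)$ at the edge point $\mathcal R$ comes from adding three separate contributions: the $\tfrac{1}{2}+\alpha$-weighted boundary terms inherited from $\Sigma_1^{(n)}$ and $\Sigma_2^{(n)}$ at $t=\mathcal R$, and the derivative boundary term $-\tfrac{1+6\alpha+6\alpha^2}{12}f'(\mathcal R)$ delivered by \eqref{sum f asymp 2 edge} when applied to $\Sigma_1^{(n)}$. Careful matching of all these contributions is the only nontrivial step; once this bookkeeping is carried out, identification with the displayed expressions for $\widetilde{C}_{2,m+1}^{(M)}, \widetilde{C}_{3,m+1}^{(n,M)}, \widetilde{C}_{4,m+1}^{(n,M)}$ is automatic.
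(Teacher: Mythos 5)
Your proposal is correct and takes essentially the same route as the paper's proof: the same initial reduction via Lemma \ref{lemma: uniform} as in Lemma \ref{lemma:S2kp2p}, followed by the edge Riemann-sum formula \eqref{sum f asymp 2 edge}, with the $\mathcal{R}$-dependent cross terms arising both from the $(\sqrt{n}+\mathcal{R})$ prefactor and $-2\mathcal{R}\int t f\,dt$ correction in \eqref{sum f asymp 2 edge} and from the expansion $r_{m+1}^{b}=b^{-1/2}\big(1+\tfrac{\mathcal{R}}{2\sqrt{n}}-\tfrac{\mathcal{R}^{2}}{8n}+\cdots\big)$. The only imprecision is cosmetic: since \eqref{sum f asymp 2 edge} requires an $n$-independent integrand, the substitution ``$r_{k}^{b}\to 1/\sqrt{b}$'' cannot be made verbatim but must retain these $\mathcal{R}/\sqrt{n}$ and $\mathcal{R}^{2}/n$ corrections (as you acknowledge in your bookkeeping paragraph), which is precisely the paper's intermediate step converting $\Sigma_{i}^{(n)}$ into $\widetilde{\Sigma}_{i}^{(n)}$ before applying the Riemann-sum lemma.
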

\begin{proof}
The first part of the proof is identical to the beginning of the proof of Lemma \ref{lemma:S2kp2p}. Namely, in the same way as for \eqref{asymp of S2kp2p in proof}, we have
\begin{align*}
S_{2m+2}^{(2)} & = \sum_{j:\lambda_{j}\in I_{2'}}  \log \Gamma(\tfrac{j+\alpha}{b}) +  \sum_{j=g_{m+1,-}}^{n} \hspace{-0.3cm} \log \bigg( \omega_{m+1} \bigg[ \frac{1}{2}\mathrm{erfc}\Big(-\frac{M_{j} r_{m+1}^{b}}{\sqrt{2}}\Big)+\frac{1}{2}\mathrm{erfc}'\Big(-\frac{M_{j} r_{m+1}^{b}}{\sqrt{2}}\Big)\frac{5M_{j}^{2}r_{m+1}}{6\sqrt{2}\sqrt{n}} \\
& + \frac{M_{j}^{3}}{288n}\bigg( 25 M_{j} r_{m+1}^{2b} \mathrm{erfc}''\Big( -\frac{M_{j} r_{m+1}^{b}}{\sqrt{2}} \Big) -53 \sqrt{2} r_{m+1}^{b}\mathrm{erfc}'\Big( -\frac{M_{j} r_{m+1}^{b}}{\sqrt{2}} \Big) \bigg) \\
& +\frac{e^{-\frac{M_{j}^{2} r_{m+1}^{2b}}{2}}}{\sqrt{2\pi}}\bigg( \frac{1}{3 r_{m+1}^{b}\sqrt{n}} - \bigg[\frac{M_{j}}{12 r_{m+1}^{b}} + \frac{5M_{j}^{3} r_{m+1}^{b}}{18} \bigg]\frac{1}{n} \bigg] +   \Omega_{m+2} \bigg) +\bigO(n^{-1}) \\
& = \sum_{j:\lambda_{j}\in I_{2'}}  \log \Gamma(\tfrac{j+\alpha}{b}) + \Sigma_{1}^{(n)}+\frac{1}{\sqrt{n}}\Sigma_{2}^{(n)}+\frac{1}{n}\Sigma_{3}^{(n)} + \bigO(n^{-1}),
\end{align*}
where
\begin{align*}
& \Sigma_{1}^{(n)} = \sum_{j=g_{m+1,-}}^{n} \log \bigg(  \frac{\omega_{m+1}}{2}\mathrm{erfc}\Big(-\frac{M_{j} r_{m+1}^{b}}{\sqrt{2}}\Big) + \Omega_{m+2}\bigg), \\
& \Sigma_{2}^{(n)} = \sum_{j=g_{m+1,-}}^{n} \frac{\omega_{m+1}}{\frac{\omega_{m+1}}{2}\mathrm{erfc}(-\frac{M_{j}r_{m+1}^{b}}{\sqrt{2}})+\Omega_{m+2}} \bigg( \frac{1}{2}\mathrm{erfc}'\Big( -\frac{M_{j}r_{m+1}^{b}}{\sqrt{2}} \Big) \frac{5M_{j}^{2}r_{m+1}^{b}}{6\sqrt{2}} + \frac{e^{-\frac{M_{j}^{2}r_{m+1}^{2b}}{2}}}{\sqrt{2\pi}}\frac{1}{3r_{m+1}^{b}} \bigg), \\
& \Sigma_{3}^{(n)} = \sum_{j=g_{m+1,-}}^{n} \bigg\{ \frac{\omega_{m+1}}{\frac{\omega_{m+1}}{2}\mathrm{erfc}(-\frac{M_{j}r_{m+1}^{b}}{\sqrt{2}})+\Omega_{m+2}} \bigg( \frac{M_{j}^{3}}{288}\bigg[ 25M_{j}r_{m+1}^{2b}\mathrm{erfc}''\Big( - \frac{M_{j}r_{m+1}^{b}}{\sqrt{2}} \Big) \\
& - 53 \sqrt{2} r_{m+1}^{b}\mathrm{erfc}'\Big( - \frac{M_{j}r_{m+1}^{b}}{\sqrt{2}} \Big) \bigg] + \frac{e^{-\frac{M_{j}^{2}r_{m+1}^{2b}}{2}}}{\sqrt{2\pi}}\bigg[ \frac{M_{j}}{12r_{m+1}^{b}} + \frac{5M_{j}^{3}r_{m+1}^{b}}{18} \bigg] \bigg) \\
& - \frac{1}{2} \bigg[ \frac{\omega_{m+1}}{\frac{\omega_{m+1}}{2}\mathrm{erfc}(-\frac{M_{j}r_{m+1}^{b}}{\sqrt{2}})+\Omega_{m+2}} \bigg( \frac{1}{2}\mathrm{erfc}'\Big( -\frac{M_{j}r_{m+1}^{b}}{\sqrt{2}} \Big) \frac{5M_{j}^{2}r_{m+1}^{b}}{6\sqrt{2}} + \frac{e^{-\frac{M_{j}^{2}r_{m+1}^{2b}}{2}}}{\sqrt{2\pi}}\frac{1}{3r_{m+1}^{b}} \bigg) \bigg]^{2} \bigg\},
\end{align*}
where, for conciseness, we have written $M_{j}$ instead of $M_{j,m+1}$. Note that one cannot yet apply Lemma \ref{lemma:Riemann sum} because $r_{m+1}$ depends on $n$ (if $\mathcal{R} \neq 0$). Using $\mathrm{erfc}'(x)=-\frac{2e^{-x^{2}}}{\sqrt{\pi}}$, $\mathrm{erfc}''(x) = \frac{4xe^{-x^{2}}}{\sqrt{\pi}}$ and
\begin{align*}
r_{m+1}^{b} = \frac{1}{\sqrt{b}} + \frac{\mathcal{R}}{2\sqrt{b}}\frac{1}{\sqrt{n}} - \frac{\mathcal{R}^{2}}{8\sqrt{b}n} + \bigO(n^{-\frac{3}{2}}), \qquad \mbox{as } n \to + \infty,
\end{align*}
we obtain
\begin{align*}
\Sigma_{1}^{(n)}+\frac{1}{\sqrt{n}}\Sigma_{2}^{(n)}+\frac{1}{n}\Sigma_{3}^{(n)} = \widetilde{\Sigma}_{1}^{(n)}+\frac{1}{\sqrt{n}}\widetilde{\Sigma}_{2}^{(n)}+\frac{1}{n}\widetilde{\Sigma}_{3}^{(n)} + \bigO(n^{-1}), \qquad \mbox{as } n \to + \infty,
\end{align*}
where
\begin{align*}
& \widetilde{\Sigma}_{1}^{(n)} = \sum_{j=g_{m+1,-}}^{n} \log \bigg(  \frac{\omega_{m+1}}{2}\mathrm{erfc}\Big(-\frac{M_{j}}{\sqrt{2b}}\Big) + \Omega_{m+2}\bigg), \\
& \widetilde{\Sigma}_{2}^{(n)} = \sum_{j=g_{m+1,-}}^{n} \frac{\omega_{m+1}}{\frac{\omega_{m+1}}{2}\mathrm{erfc}(-\frac{M_{j}}{\sqrt{2b}})+\Omega_{m+2}}\frac{1}{2\sqrt{b}}\bigg(\frac{2b-5M_{j}^{2}}{3}+\mathcal{R}M_{j}\bigg)\frac{e^{-\frac{M_{j}^{2}}{2b}}}{\sqrt{2\pi}}, \\
& \widetilde{\Sigma}_{2}^{(n)} = \sum_{j=g_{m+1,-}}^{n} \Bigg\{ \frac{\omega_{m+1}}{\frac{\omega_{m+1}}{2}\mathrm{erfc}(-\frac{M_{j}}{\sqrt{2b}})+\Omega_{m+2}} \bigg( \frac{6b^{2}M_{j}+73bM_{j}^{3}-25M_{j}^{5}-(12 b^{2} + 42 b M_{j}^{2}-30M_{j}^{4})\mathcal{R}}{72 b^{3/2}} \\ 
&- \frac{9(b M_{j}+M_{j}^{3})\mathcal{R}^{2}}{72b^{3/2}} \bigg) \frac{e^{-\frac{M_{j}^{2}}{2b}}}{\sqrt{2\pi}}  - \frac{\omega_{m+1}^{2}}{(\frac{\omega_{m+1}}{2}\mathrm{erfc}(-\frac{M_{j}}{\sqrt{2b}})+\Omega_{m+2})^{2}}\frac{(2b+3M_{j}\mathcal{R}-5M_{j}^{2})^{2}}{72b}\frac{e^{-\frac{M_{j}^{2}}{b}}}{2\pi} \Bigg\}.
\end{align*}
 The large $n$ asymptotics of $\widetilde{\Sigma}_{1}^{(n)}$, $\widetilde{\Sigma}_{2}^{(n)}$ and $\widetilde{\Sigma}_{3}^{(n)}$ can now be obtained using \eqref{sum f asymp 2 edge}, and by choosing $M'$ sufficiently large, we obtain the claim after a computation and rearranging the terms.
\end{proof}

\begin{lemma}\label{lemma: asymp of S2mp2 final}
For any $x_{1},\ldots,x_{p} \in \mathbb{R}$, there exists $\delta > 0$ such that
\begin{align*}
& S_{2m+2} = \sum_{j= j_{m+1,-}}^{n} \log \Gamma(\tfrac{j+\alpha}{b}) + \Big( n - j_{m+1,-} \Big) \log  \Omega_{m+1} \\
&  + C_{2,m+1} \sqrt{n} + C_{3,m+1} + \frac{1}{\sqrt{n}} C_{4,m+1} + \bigO(M^{4}n^{-1}),
\end{align*}
as $n \to +\infty$ uniformly for $u_{1} \in \{z \in \mathbb{C}: |z-x_{1}|\leq \delta\},\ldots,u_{p} \in \{z \in \mathbb{C}: |z-x_{p}|\leq \delta\}$, where
\begin{align*}
& C_{2,m+1} = \int_{0}^{\infty} \log \bigg( 1-\frac{\omega_{m+1}}{2 \Omega_{m+1}}\mathrm{erfc}\frac{t}{\sqrt{2b}} \bigg) dt + \mathcal{R}\log \frac{\Omega_{m+1}}{\Omega_{m+2}} + \int_{0}^{-\mathcal{R}} \log \bigg( 1+\frac{\omega_{m+1}}{2\Omega_{m+2}}\mathrm{erfc} \frac{t}{\sqrt{2b}} \bigg)dt, \\
& C_{3,m+1} = \bigg( \frac{1}{2}+\alpha\bigg) \log \bigg(  \Omega_{m+2} + \frac{\omega_{m+1}}{2}\mathrm{erfc} \frac{-\mathcal{R}}{\sqrt{2b}} \bigg) + \bigg( \frac{1}{2} - \alpha \bigg) \log \Omega_{m+1} \\
& - \int_{0}^{\infty} (2t-\mathcal{R}) \log \bigg( 1-\frac{\omega_{m+1}}{2 \Omega_{m+1}}\mathrm{erfc}\frac{t}{\sqrt{2b}} \bigg) dt + \int_{0}^{-\mathcal{R}} (2t+\mathcal{R}) \log \bigg( 1+\frac{\omega_{m+1}}{2 \Omega_{m+2}}\mathrm{erfc}\frac{t}{\sqrt{2b}} \bigg) dt  \\
& + \int_{-\infty}^{-\mathcal{R}} \frac{\omega_{m+1}}{\Omega_{m+2}+\frac{\omega_{m+1}}{2}\mathrm{erfc}\big( \frac{t}{\sqrt{2b}} \big)} \frac{2b-3\mathcal{R}t-5t^{2}}{6\sqrt{b}} \frac{e^{-\frac{t^{2}}{2b}}}{\sqrt{2\pi}}dt, \\
& C_{4,m+1} = \int_{0}^{\infty} (3t^{2}-2\mathcal{R}t) \log \bigg( 1-\frac{\omega_{m+1}}{2 \Omega_{m+1}}\mathrm{erfc}\frac{t}{\sqrt{2b}} \bigg) dt + \int_{0}^{-\mathcal{R}} (3t^{2}+2\mathcal{R}t) \log \bigg( 1+\frac{\omega_{m+1}}{2 \Omega_{m+2}}\mathrm{erfc}\frac{t}{\sqrt{2b}} \bigg) dt  \\
& + \int_{-\infty}^{-\mathcal{R}} \frac{\omega_{m+1}}{\Omega_{m+2}+\frac{\omega_{m+1}}{2}\mathrm{erfc}\big( \frac{t}{\sqrt{2b}} \big)} \frac{e^{-\frac{t^{2}}{2b}}}{\sqrt{2\pi}} \frac{t(42b^{2}-193 b t^{2} + 25 t^{4})+6 \mathcal{R}(2b^{2}-29bt^{2}+5t^{4})-9\mathcal{R}^{2}(3bt-t^{3})}{72b^{3/2}} dt \\
& -\frac{1}{2} \int_{-\infty}^{-\mathcal{R}} \Bigg( \frac{\omega_{m+1}}{\Omega_{m+2}+\frac{\omega_{m+1}}{2}\mathrm{erfc}\big( \frac{t}{\sqrt{2b}} \big)} \frac{2b-3 \mathcal{R}t-5t^{2}}{6\sqrt{b}} \frac{e^{-\frac{t^{2}}{2b}}}{\sqrt{2\pi}} \Bigg)^{2}dt \\
& + \bigg( \bigg(\frac{1}{2}+\alpha\bigg)\frac{b-\mathcal{R}^{2}}{3} - \frac{1+6\alpha+6\alpha^{2}}{12} \bigg) \frac{\omega_{m+1}}{\Omega_{m+2}+\frac{\omega_{m+1}}{2}\mathrm{erfc}\big( -\frac{\mathcal{R}}{\sqrt{2b}} \big)} \frac{e^{-\frac{\mathcal{R}^{2}}{2b}}}{\sqrt{2\pi b}}.
\end{align*}
\end{lemma}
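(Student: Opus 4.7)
The plan closely parallels the proof of Lemma \ref{lemma: asymp of S2k final}. Combining the decomposition $S_{2m+2} = S_{2m+2}^{(2)}+S_{2m+2}^{(3)}+\bigO(e^{-cn})$ with the expansion of $S_{2m+2}^{(3)}$ from Lemma \ref{lemma:S2kp3p} (in its second form, already specialized via $r_{m+1}=b^{-\frac{1}{2b}}(1+\frac{\mathcal{R}}{\sqrt{n}})^{\frac{1}{2b}}$) and of $S_{2m+2}^{(2)}$ from Lemma \ref{lemma:S2mp2p2p}, and collecting the explicit $\log\Omega_{m+1}$-terms produced by the former, gives
\begin{align*}
S_{2m+2} &= \sum_{j=j_{m+1,-}}^{n}\log\Gamma(\tfrac{j+\alpha}{b}) + (n-j_{m+1,-})\log \Omega_{m+1} + C_{2,m+1}^{(M)}\sqrt{n} \\
&\quad + C_{3,m+1}^{(n,M)} + \frac{C_{4,m+1}^{(n,M)}}{\sqrt{n}} + \bigO(M^{4}n^{-1}),
\end{align*}
with $C_{2,m+1}^{(M)} := \widetilde{C}_{2,m+1}^{(M)} + (\mathcal{R}-M)\log \Omega_{m+1}$, $C_{3,m+1}^{(n,M)} := \widetilde{C}_{3,m+1}^{(n,M)} + (M(M-\mathcal{R})-\alpha+\theta_{m+1,-}^{(n,M)})\log \Omega_{m+1}$, and $C_{4,m+1}^{(n,M)} := \widetilde{C}_{4,m+1}^{(n,M)} - M^{2}(M-\mathcal{R})\log \Omega_{m+1}$.

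The second step is to show that $C_{j,m+1}^{(\cdot)} = C_{j,m+1} + \bigO(n^{-10})$ for $j=2,3,4$ provided $M'$ is chosen sufficiently large. For every integrand of the form $\log(\frac{\omega_{m+1}}{2}\mathrm{erfc}(\frac{t}{\sqrt{2b}})+\Omega_{m+2})$ appearing on $[-M,-\mathcal{R}]$ in $\widetilde{C}_{j,m+1}^{(\cdot)}$, I apply the identity
\begin{align*}
\log\big(\tfrac{\omega_{m+1}}{2}\mathrm{erfc}(\tfrac{t}{\sqrt{2b}})+\Omega_{m+2}\big) = \log\Omega_{m+1} + \log\big( 1 - \tfrac{\omega_{m+1}}{2\Omega_{m+1}}\mathrm{erfc}(\tfrac{-t}{\sqrt{2b}}) \big),
\end{align*}
which follows from $\Omega_{m+1}=\Omega_{m+2}+\omega_{m+1}$ and $\mathrm{erfc}(x)+\mathrm{erfc}(-x)=2$. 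The $\log\Omega_{m+1}$-contribution integrates against the polynomial weights $1$, $2t+\mathcal{R}$, $t(3t+2\mathcal{R})$ over $[-M,-\mathcal{R}]$ to give $(M-\mathcal{R})\log\Omega_{m+1}$, $-M(M-\mathcal{R})\log\Omega_{m+1}$, and $M^{2}(M-\mathcal{R})\log\Omega_{m+1}$ respectively, which cancel the explicit $\log\Omega_{m+1}$-pieces added in defining $C_{j,m+1}^{(\cdot)}$ (leaving the $(\frac{1}{2}-\alpha)\log\Omega_{m+1}$ summand of $C_{3,m+1}$ via the residual $-\alpha+\theta_{m+1,-}^{(n,M)}+(\frac{1}{2}-\theta_{m+1,-}^{(n,M)}) = \frac{1}{2}-\alpha$). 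The remaining integrand decays like $e^{-t^{2}/(2b)}$ as $t\to-\infty$, so I extend each integral to $(-\infty,-\mathcal{R}]$ at cost $\bigO(e^{-cM^{2}})=\bigO(n^{-10})$. Splitting each extended integral at $t=0$ and substituting $t\mapsto -t$ on $(-\infty,0]$ using
\begin{align*}
1-\tfrac{\omega_{m+1}}{2\Omega_{m+1}}\mathrm{erfc}(\tfrac{s}{\sqrt{2b}}) = \tfrac{\Omega_{m+2}}{\Omega_{m+1}}\big( 1 + \tfrac{\omega_{m+1}}{2\Omega_{m+2}}\mathrm{erfc}(\tfrac{s}{\sqrt{2b}}) \big)
\end{align*}
converts the portion over $[0,\mathcal{R}]$ into an $\mathcal{R}\log\frac{\Omega_{m+1}}{\Omega_{m+2}}$ contribution plus an integral of $\log(1+\frac{\omega_{m+1}}{2\Omega_{m+2}}\mathrm{erfc}(\cdot))$ over $[0,-\mathcal{R}]$, which reproduces the three pieces of $C_{2,m+1}$ and the weighted analogues in $C_{3,m+1}$ and $C_{4,m+1}$. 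The Gaussian integrands in $\widetilde{C}_{3,m+1}^{(n,M)}$ and $\widetilde{C}_{4,m+1}^{(n,M)}$ already decay like $e^{-t^{2}/(2b)}$, so they extend immediately to $(-\infty,-\mathcal{R}]$ with the same type of error.

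The main obstacle is the careful matching of the new boundary terms at $t=-\mathcal{R}$, which have no analogue in the bulk Lemma \ref{lemma: asymp of S2k final}: the $(\frac{1}{2}+\alpha)\log(\Omega_{m+2}+\frac{\omega_{m+1}}{2}\mathrm{erfc}(-\mathcal{R}/\sqrt{2b}))$ summand of $C_{3,m+1}$ and the Gaussian boundary contribution with coefficient $(\frac{1}{2}+\alpha)\frac{b-\mathcal{R}^{2}}{3}-\frac{1+6\alpha+6\alpha^{2}}{12}$ in $C_{4,m+1}$ come directly from the $t=-\mathcal{R}$ endpoint terms of $\widetilde{C}_{3,m+1}^{(n,M)}$ and $\widetilde{C}_{4,m+1}^{(n,M)}$ and do not simplify further because $\mathcal{R}$ is fixed. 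All $\theta_{m+1,-}^{(n,M)}$-dependent contributions at the $t=-M$ endpoint carry a suppressing $e^{-M^{2}/(2b)}$ factor and are therefore absorbed into the final $\bigO(n^{-10})$ error once $M'$ is large enough, exactly as in the bulk case.
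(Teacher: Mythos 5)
Your proposal is correct and follows essentially the same route as the paper: add the $k=m+1$ expansion of $S_{2m+2}^{(3)}$ from Lemma \ref{lemma:S2kp3p} to Lemma \ref{lemma:S2mp2p2p}, then show $C_{j,m+1}^{(\cdot)}=C_{j,m+1}+\bigO(n^{-10})$ for large $M'$ (a step the paper leaves implicit and you verify in detail, correctly, including the cancellation of the $\log\Omega_{m+1}$ pieces against $\int_{-M}^{-\mathcal{R}}$ of the weights $1$, $2t+\mathcal{R}$, $3t^{2}+2\mathcal{R}t$ and the $\tfrac12-\alpha$ bookkeeping). The only blemish is a sign slip in your second displayed identity, which should read $1-\tfrac{\omega_{m+1}}{2\Omega_{m+1}}\mathrm{erfc}\big(\tfrac{-s}{\sqrt{2b}}\big)=\tfrac{\Omega_{m+2}}{\Omega_{m+1}}\big(1+\tfrac{\omega_{m+1}}{2\Omega_{m+2}}\mathrm{erfc}\big(\tfrac{s}{\sqrt{2b}}\big)\big)$; your subsequent manipulations and the resulting $\mathcal{R}\log\tfrac{\Omega_{m+1}}{\Omega_{m+2}}$ term are consistent with this corrected form.
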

\begin{proof}
By combining Lemmas \ref{lemma:S2kp3p} and \ref{lemma:S2mp2p2p}, we have
\begin{align*}
& S_{2m+2} = \sum_{j= j_{m+1,-}}^{n} \log \Gamma(\tfrac{j+\alpha}{b}) + \Big( n - j_{m+1,-} \Big) \log  \Omega_{m+1}  \\
&  + C_{2,m+1}^{(M)} \sqrt{n} + C_{3,m+1}^{(n,M)} + \frac{1}{\sqrt{n}} C_{4,m+1}^{(n,M)} + \bigO(M^{4}n^{-1}),
\end{align*}
as $n \to +\infty$ uniformly for $u_{1} \in \{z \in \mathbb{C}: |z-x_{1}|\leq \delta\},\ldots,u_{p} \in \{z \in \mathbb{C}: |z-x_{p}|\leq \delta\}$, where
\begin{align*}
& C_{2,m+1}^{(M)} = \widetilde{C}_{2,m+1}^{(M)} + (\mathcal{R}- M)  \log \Omega_{m+1}, \\
& C_{3,m+1}^{(n,M)} = \widetilde{C}_{3,m+1}^{(n,M)} + \Big( M^{2} - M \mathcal{R} -\alpha+\theta_{k,-}^{(n,M)}  \Big)\log \Omega_{m+1}, \\
& C_{4,m+1}^{(n,M)} = \widetilde{C}_{4,m+1}^{(n,M)} + (- M^{3} + M^{2}\mathcal{R}) \log \Omega_{m+1}.
\end{align*}
Note that $S_{2m+2}$ is independent of $M$. By choosing $M'$ sufficiently large, we obtain
\begin{align*}
C_{2,m+1}^{(M)} = C_{2,m+1} + \bigO(n^{-10}), \quad C_{3,m+1}^{(n,M)} = C_{3,m+1} + \bigO(n^{-10}), \quad C_{4,m+1}^{(n,M)} = C_{4,m+1} + \bigO(n^{-10}),
\end{align*}
as $n \to + \infty$, which concludes the proof.
\end{proof}

We now finish the proof of Theorem \ref{thm:main thm}.

\begin{proof}[Proof of Theorem \ref{thm:main thm}]
Combining \eqref{log Dn as a sum of sums} with Lemmas \ref{lemma: S0}, \ref{lemma: S2km1}, \ref{lemma: asymp of S2k final} and \ref{lemma: asymp of S2mp2 final}, we infer that for any $x_{1},\ldots,x_{p} \in \mathbb{R}$, there exists $\delta > 0$ such that
\begin{align*}
& \log D_{n} = S_{-1}+S_{0}+\sum_{k=1}^{m}(S_{2k-1}+S_{2k}) + S_{2m+1} + S_{2m+2}\\
& = -\frac{1}{2b}n^{2}\log n - \frac{1+2\alpha}{2b}n \log n + n \log \frac{\pi}{b} + M' \log \Omega_{1} + \sum_{j=1}^{M'} \log \Gamma (\tfrac{j+\alpha}{b}) \\
& + \sum_{k=1}^{m} \bigg\{ (j_{k,-}-j_{k-1,+}-1) \log \Omega_{k} + \sum_{j=j_{k-1,+}+1}^{j_{k,-}-1}  \hspace{-0.3cm} \log \Gamma(\tfrac{j+\alpha}{b}) + \sum_{j= j_{k,-}}^{j_{k,+}} \log \Gamma(\tfrac{j+\alpha}{b}) \\
& + \Big(j_{k,+}-b r_{k}^{2b} n \Big) \log  \Omega_{k+1} + \Big( br_{k}^{2b}n - j_{k,-} \Big) \log  \Omega_{k} + C_{2,k} \sqrt{n} + C_{3,k} + \frac{1}{\sqrt{n}} C_{4,k} \bigg\} \\
& + (j_{m+1,-}-j_{m,+}-1) \log \Omega_{m+1} + \sum_{j=j_{m,+}+1}^{j_{m+1,-}-1}  \hspace{-0.3cm} \log \Gamma(\tfrac{j+\alpha}{b}) + \sum_{j= j_{m+1,-}}^{n} \log \Gamma(\tfrac{j+\alpha}{b}) \\
& + \Big( n - j_{m+1,-} \Big) \log  \Omega_{m+1}  + C_{2,m+1} \sqrt{n} + C_{3,m+1} + \frac{1}{\sqrt{n}} C_{4,m+1} + \bigO(M^{4}n^{-1})
\end{align*}
as $n \to +\infty$ uniformly for $u_{1} \in \{z \in \mathbb{C}: |z-x_{1}|\leq \delta\},\ldots,u_{p} \in \{z \in \mathbb{C}: |z-x_{p}|\leq \delta\}$. Recall that $Z_{n}$ is given by \eqref{explicit formula for Zn}. Hence, simplifying the above yields
\begin{align*}
& \log D_{n} = \log Z_{n} + M \log \Omega_{1} + \sum_{k=1}^{m}\bigg\{ \Big(j_{k,+}-br_{k}^{2b} n \Big) \log \Omega_{k+1} + \Big( br_{k}^{2b}n - j_{k-1,+}-1 \Big) \log \Omega_{k} + C_{2,k} \sqrt{n} \\
&  + C_{3,k} + \frac{1}{\sqrt{n}}C_{4,k} \bigg\} + (n-j_{m,+}-1) \log \Omega_{m+1} + C_{2,m+1} \sqrt{n} + C_{3,m+1} + \frac{1}{\sqrt{n}} C_{4,m+1} + \bigO(M^{4}n^{-1}),
\end{align*}
as $n \to +\infty$. Using \eqref{def of Dn as n fold integral}, we then find
\begin{align*}
& \log \mathbb{E}\bigg[ \prod_{j=1}^{p} e^{u_{j}N(D_{r_{j}})} \bigg] =  (br_{1}^{2b}n-1) \log \Omega_{1} + \sum_{k=2}^{m} \big( br_{k}^{2b}n-br_{k-1}^{2b}n-1 \big) \log \Omega_{k}  \\
& + (n-br_{m}^{2b}n-1)\log \Omega_{m+1} + \sum_{k=1}^{m+1} \Big\{C_{2,k} \sqrt{n} + C_{3,k} + \frac{1}{\sqrt{n}} C_{4,k} \Big\} + \bigO(M^{4}n^{-1}),
\end{align*}
which can be rewritten as \eqref{asymp in main thm} using \eqref{functions Fb and Gb}, \eqref{def of omegaell} and \eqref{def of Omega j}. This finishes the proof of Theorem \ref{thm:main thm}.
\end{proof}

\paragraph{Acknowledgements.} The author is grateful to Peter Forrester, Arno Kuijlaars, Gr\'{e}gory Schehr and Nick Simm for useful comments. The author acknowledges support from the European Research Council, Grant Agreement No. 682537, the Swedish Research Council, Grant No. 2015-05430, the Ruth and Nils-Erik Stenb\"ack Foundation, and the Novo Nordisk Fonden Project, Grant 0064428. 

\footnotesize

\end{document}